\def\blx@maxline{77}
\definecolor{DarkGreen}{RGB}{0,100,0}
\definecolor{DarkBlue}{RGB}{0,0,200}
\newcommand{\ei}[1]{{\textcolor{cyan}{[{#1}---Egor]}}}
\newcommand{\lo}[1]{{\textcolor{red}{[{#1}---Luke]}}}
\newif\ifdraft\drafttrue
\newtheorem{theorem}{Theorem}[section]
\newtheorem{proposition}[theorem]{Proposition}
\newtheorem{corollary}[theorem]{Corollary}
\newtheorem{lemma}[theorem]{Lemma}
\newtheorem{fact}[theorem]{Fact}
\theoremstyle{definition}
\newtheorem{definition}[theorem]{Definition}
\newtheorem{example}[theorem]{Example}
\renewcommand{\setminus}{\backslash}
\newcommand{\ria}{\rightarrow}
\newcommand{\up}{\uparrow}
\newcommand{\Up}{\Uparrow}
\newcommand{\CG}{\mathcal{G}}
\newcommand{\CP}{\mathcal{P}}
\newcommand{\CF}{\mathcal{F}}
\newcommand{\sub}{\subseteq}
\newbox\gnBoxA
\newdimen\gnCornerHgt
\newdimen\gnArgHgt
\def\Godelnum #1{%
\setbox\gnBoxA=\hbox{$#1$}%
\gnArgHgt=\ht\gnBoxA%
\ifnum     \gnArgHgt<\gnCornerHgt \gnArgHgt=0pt%
\else \advance \gnArgHgt by -\gnCornerHgt%
\fi \raise\gnArgHgt\hbox{$\ulcorner$} \box\gnBoxA %
\raise\gnArgHgt\hbox{$\urcorner$}}
\newcommand\set[1]{{ \{\, #1 \,\} }}
\title{Manipulability of consular election rules}
\author{Egor Ianovski and Mark C. Wilson}
\thanks{Supported by the NZ Marsden fund, grant UOA3706352. The authors acknowledge useful conversations with Arkadii Slinko.}
\newcommand{\best}{\textnormal{best}}
\newcommand{\worst}{\textnormal{worst}}
\newcommand{\Pup}{P^{i\up s}}
\newcommand{\Pip}{P_i^{\up s}}
\newcommand{\PIp}{P_i^{\Up s}}
\newcommand{\PUip}{P^{i\Up s}}
\newcommand{\PUp}{P^{\Up s}}
\newcommand{\Pnp}[1]{P_i^{#1\up s}}
\begin{document}

\maketitle

\begin{abstract}
The Gibbard-Satterthwaite theorem is a cornerstone of social choice theory, stating
that an onto social choice function cannot be both strategy-proof and non-dictatorial
if the number of alternatives is at least three. The Duggan-Schwartz theorem
proves an analogue in the case of set-valued elections: if the function is onto
with respect to singletons, and can be manipulated by neither an optimist nor
a pessimist, it must have a weak dictator. However, the assumption that the function
is onto with respect to singletons makes the Duggan-Schwartz theorem inapplicable
to elections which necessarily select a committee with multiple members. In this
paper we make a start on this problem by considering elections which elect a committee of size two
(such as the consulship of ancient Rome). We establish that if such a \emph{consular
election rule} cannot be expressed as the union of two disjoint social choice
functions, then strategy-proofness implies the existence of a dictator. Although we suspect that a similar result holds for larger sized committees, there appear to be many obstacles to 
proving it, which we discuss in detail.
\end{abstract}

\section{Introduction}

\begin{flushright}{\slshape    
    From that time on Caesar managed all the affairs of state alone and after his own pleasure; so that sundry witty fellows, pretending by way of jest to sign and seal testamentary documents, wrote ``Done in the consulship of Julius and Caesar.''}\\ \medskip
    ---  Suetonius, Lives of the Caesars. Chapter XX.
\end{flushright}

\citet{DuFa1961} conjectured what later became the Gibbard-Satterthwaite theorem \citep{Gibbard1973,Satterthwaite1975}, one form of which is: if a social choice function has at least 3 viable alternatives and is strategy-proof, then it is dictatorial. This result, though fundamental, is not applicable in many common situations, because symmetric voting rules must have ties. \citet{Gibb1977} extended the result to the case where ties are broken randomly, showing that such a rule is strategy-proof only if it is a mixture of rules that can only be affected by one voter, and rules that
limit the outcome to two alternatives. While Gibbard
did not mention it as such, this paper
represents one of the first attempts at studying
strategy-proof social choice \emph{correspondences},
which output a set of alternatives, as opposed to
a social choice \emph{function} which outputs only one.

The fundamental difficulty in studying the manipulability of social choice correspondences is that, in
a certain sense, the problem is ill-defined. To manipulate a social choice function, a voter submits
an insincere preference ordering to obtain an outcome
that is better than he would have obtained had he voted
sincerely. To verify that the outcome is better, we
can compare it against the voter's sincere ballot.
In the case of a social choice correspondence the voter submits a preference order
over alternatives, but the function produces a set
of alternatives. We are not given enough information
to deduce how the voter will assess the outcome, as
there is no unique way to extend an order over a set
to an order over the power set. Indeed, the subject of \emph{set ranking} has been studied 
in detail (see \cite{BBP2004} for a survey).

A common approach taken by the early papers on strategy-proof social choice correspondences (e.g. \cite{Barbera1977}; \cite{Kelly1977}; \cite{Pattanaik1978}) is to assume some properties on what an extension from an ordering of alternatives to an ordering of sets must satisfy, and show that it implies some notion of dictatoriality, although
to achieve these results the authors had to make further assumptions about the domains and behaviour of their functions. For example, under the \emph{positive response} assumption of \citet{Barbera1977} if a set $X$ is elected at a profile $P$, and $P'$ is obtained by a single voter bumping some $x\in X$ up their ballot, $x$ becomes the unique winner under $P'$; this condition is satisfied by, for example, Borda's rule, but is not satisfied by many other natural rules. The Duggan-Schwartz theorem \citep{Duggan1992} stood out from these results because of how little it had to assume -- that the correspondence be onto with respect to singletons -- to achieve their result. That said, the Duggan-Schwartz theorem is difficult to compare with the previous work, because there is a subtle
difference in the frameworks used. Duggan and Schwartz do not consider an extension of a preference order,
but of a notion of manipulability. The two are not
interchangeable as there are certain properties we expect
from an order -- at the very least, transitivity,
desirably, completeness, and ideally, anti-symmetry -- whereas just about any binary relation can serve
for manipulability. In particular, the optimistic/pessimistic manipulation of Duggan-Schwartz
does not induce a transitive preference order, as it is possible for a voter
to optimistically prefer $X$ to $Y$ and pessimistically prefer $Y$ to $X$ (see Section~\ref{s:prelim} and \ref{s:reduce} for more details). For a more thorough overview of the history of strategy-proofness, the reader is recommended to see
\citet{Barbera2010}.

In this paper we concentrate on social choice correspondences that must always output a set of fixed size $k$, which we call \emph{committee selection rules} \citep{Skowron2016}. In this setup, the assumptions of the abovementioned papers are unnatural or vacuous -- for example, being onto with respect to singletons can never happen. Thus a different approach is needed. This is not the first paper on strategy-proof
committee selection rules. \citet{Ozyurt2008} and \citet{Reffgen2011} are the most relevant to the enquiry pursued here. The general
approach of those two papers is the same: demonstrate
that a certain extension of a preference order from $A$ to fixed-size committees over $A$ results in what
\citet{Aswal2003} define as a \emph{linked} domain, and
appeal to a theorem of \citet{Aswal2003} that states
that a unanimous social choice function on a linked domain is dictatorial. \citet{Ozyurt2008} demonstrate
that all \emph{reasonable} extensions, roughly speaking
extensions which have a clear first and second choice,
result in linked domains, and \citet{Reffgen2011} deals with extensions that satisfy the axiom used by 
\citet{Kelly1977}. Neither paper implies our result, but
our general approach is the same -- construct a linked domain and invoke \citet{Aswal2003}.

To obtain our result we use a lexicographic extension of preferences over
alternatives to preferences over sets: a voter compares $X$ and $Y$ by
first comparing his favourite elements in $X$ and $Y$, then the second favourite,
and so on. While this is not equivalent to the Duggan-Schwartz conditions, in the
case of a rule that elects a pair of alternatives it is strictly narrower -- every
lexicographic manipulation is either an optimistic or a pessimistic manipulation. Thus an impossibility result for lexicographic manipulations implies, \emph{a fortiori}, an impossibility result for pessimistic/optimistic manipulation. \citet{Ozyurt2009} and \citet{Campbell2002} also work in the framework of lexicographic manipulation and their results imply a weakened version of our main theorem -- an impossibility result for manipulating an
\emph{onto} selection rule, but do not imply our more general result.

\subsection{Our contribution}
\label{ss:contrib}

We work in the manipulation framework of Duggan-Schwartz: set-valued functions that cannot be manipulated by an optimist and cannot be manipulated by a pessimist. In particular, we are interested in
committee selection rules that select a committee of size two. Since the Duggan-Schwartz assumption that
a correspondence is onto with respect to singletons
is meaningless here, we replace it with the assumption that it is possible for every alternative
to be on some winning committee. This does not imply
that all possible committees lie in the range of the function, as we demonstrate in \cref{fact:4notonto}. This, in turn, forces us to consider a new notion of dictatorship (\cref{def:rangedictator}). Roughly speaking, a dictator is a voter who gets his ``best'' outcome no matter what, but the difficulty is in showing that this notion is well-defined.

We start with the observation that such rules can be defined in two categories: those that can be
reduced to the union of two disjoint social choice
functions (\emph{reducible} rules) and those that
cannot (\emph{irreducible}). It turns out that irreducible rules can have a non-trivial structure
to their range, which we characterise by modelling
the range as a graph. This interpretation may be useful in the future, as it establishes an interesting connection between strategy-proofness and a structural property of graphs (\cref{fact:edgeconnected}).

Our main result (Corollary~\ref{cor:main}) shows that  for irreducible rules, strategy-proofness for optimists and pessimists implies that some voter always gets his favourite viable committee elected.

For reducible rules, we observe that the
Gibbard-Satterthwaite theorem dictates that if the number of viable alternatives is at least five
then the function will have a ``partial'' dictator on part of the domain, but it may not have an overall
dictator -- these rules are non-dictatorial, but in
a very trivial sense. In the special case where one of the component social choice functions has a domain of size
one, i.e. it always elects the same alternative, some voter always gets his favourite committee
elected if the number of viable alternatives is at least four.

\section{Preliminaries}
\label{s:prelim}

We use $S_k(X)$ to denote the set of all $k$-element subsets of a set $X$. When we are talking about a preference order $P_i$, we use the standard infix notation: $\succeq_i$ for the non-strict order, $\succ_i$ for the strict counterpart, and
$x\sim_i y$ for indifference.

\begin{definition}
    Let $V$ be a finite set of voters, $A$ a finite set of alternatives.
    
    A profile $P$ consists of a linear order over $A$ (also known
    as a \emph{preference order} or a \emph{ballot}), $P_i$, for every voter $i$. The set of all profiles of voters $V$ over alternatives $A$ is denoted $\CP(V,A)$. We use $P_{-i}$ to refer to the ballots of all voters except $i$. Hence,
    $P=P_iP_{-i}$ and $P_i'P_{-i}$ is obtained from profile $P$ by replacing
    $P_i$ with $P_i'$.
    
    A \emph{social choice function} maps a profile to a single alternative,
    $F:\CP(V,A)\ria A$. A \emph{social choice correspondence} produces a nonempty set
    of alternatives, $F:\CP(V,A)\ria 2^A\setminus\set{\emptyset}$. A \emph{$k$-committee selection rule} produces a set of alternatives of size $k$, $F:\CP(V,A)\ria S_k(A)$.
    A $2$-committee selection rule is called a \emph{consular election rule}.
\end{definition}

\begin{definition}\label{def:sp}
    Let $\emptyset\neq W\subseteq A$. We use $\text{best}(P_i,W)$ to denote the best alternative in $W$ according to $P_i$, $\text{worst}(P_i,W)$ the worst.
    
    We extend $\succeq_i$ into three relations over $2^A\setminus\set{\emptyset}$:
    \begin{enumerate}
        \item 
$X\succeq_i^O Y$ iff $\best(P_i,X)\succeq_i\best(P_i,Y)$.
        \item 
$X\succeq_i^P Y$ iff $\worst(P_i,X)\succeq_i\worst(P_i,Y)$.
        \item 
$X\succ_i^{DS} Y$ iff $X\succ_i^O Y$ or $X\succ_i^P Y$.
    \end{enumerate}
    
    A social choice function is \emph{strategy-proof}  if for all $P_i'$, whenever $F(P_iP_{-i})=a$ and $F(P_i'P_{-i})=b'$,
    $a\succeq_i b$.   
    A committee selection rule is \emph{strategy-proof for optimists} (SPO) if for all $P_i'$, whenever $F(P_iP_{-i})=W$ and $F(P_i'P_{-i})=W'$,
    $W\succeq_i^O W'$.   
    A committee selection rule is \emph{strategy-proof for pessimists} (SPP) if for all $P_i'$, whenever $F(P_iP_{-i})=W$ and $F(P_i'P_{-i})=W'$,
    $W\succeq_i^P W'$.
\end{definition}

Duggan and Schwartz define viability to be the notion that for every $a\in A$, there
exists a $P$ such that $F(P)=\set{a}$. This is reasonable in the context of a social choice correspondence -- an election rule that allows a certain candidate to be tied for victory, but never to win outright, seems rather strange. In the case of $k$-committee selection rules it is a vacuous concept.
We use a weaker notion of viability where every alternative has the chance to appear on some winning committee, but we make no assumptions on whom the alternative might have to share the privilege with.

\begin{definition}
    A committee selection rule satisfies \emph{weak viability} if for every $a\in A$, there exists $P$ such that $a\in F(P)$.
    
    A committee rule is \emph{unanimous} if whenever $\best(P_i,A)=a$ for all $i$, necessarily $a\in F(P)$ (if all the voters agree on the best alternative, it is on the committee).
    
    A committee rule satisfies \emph{veto} if whenever $\worst(P_i,A)=a$ for all $i$, necessarily $a\notin F(P)$ (if the voters agree on the worst alternative, it is not on the committee).
\end{definition}

The notion of a dictator in the case of a social choice function is straightforward -- every voter has a unique first choice alternative, so if that voter is a dictator it stands to reason that the first choice will be elected. In the literature on social choice correspondences this notion is typically extended by defining a dictator to be a voter who always gets an outcome that is maximal in the extension of his preferences to sets.

However, Duggan-Schwartz manipulability does not extend to a preference order over $2^A\setminus\set{\emptyset}$.
To see this, consider a voter with preferences $a\succ_i b\succ_i c\succ_i d\succ_i e\succ_i f$, and observe that $\set{a,e}\prec_i^{DS}\set{c,d}$ because $\set{a,e}\prec_i^P \set{c,d}$, and $\set{c,d}\prec_i^{DS}\set{b,f}$ because $\set{c,d}\prec_i^O{b,f}$, but it is not the case that $\set{a,e}\prec_i^{DS}\set{b,f}$.

There are, however, two natural notions of dictatorship we can define without appealing to preferences over sets.

\begin{definition}
    A social choice function $F$ is \emph{dictatorial} if there is some $i\in V$ such that $F(P)$ is
    always the first choice of $i$.
    Given a committee selection rule $F$, a \emph{weak dictator} is some $i\in V$ such that the first choice of $i$ is always in $F(P)$.    
    Given a $k$-committee selection rule $F$, a \emph{strong dictator} is some $i\in V$ such that $F(P)$ consists of the top $k$ choices of $i$.   
\end{definition}

Note that there is a duality at play here.  A weak dictator is precisely a voter who gets a maximal outcome under $\succeq_i^O$ and a strong dictator gets a maximal (in fact, the maximum) outcome under $\succeq_i^P$. This suggests a natural definition of a dictator in the Duggan-Schwartz case: a voter that gets an outcome that is maximal under both preference orders. However, as at this point we have no guarantee that such a notion is well-defined (a priori, there is no reason to assume that there is an element in the range of an election that is maximal under both orders), we shall postpone introducing it until \cref{def:rangedictator}.

\section{Monotonicity}

It is common to obtain impossibility results
in social choice by showing that the concept in question implies some sort of monotonicity on ballots.
This typically takes the form that swapping two alternatives in some ballot either does
not change the outcome of an election, or changes it in a very predictable way.
Our approach is no different. We will show that in elections which satisfy SPP and SPO,
moving an alternative up or down a ballot affects only whether that alternative
features on the winning committee or not.

\subsection{Auxiliary results}

\begin{definition}
    $P^{i\up s}$ is the profile obtained from $P$ by swapping $s$ in $P_i$ with the alternative directly above it, and  $P_i^{\up s}:=(P^{i\up s})_i$ is the resulting order for voter $i$.
\end{definition}

We start with a simple observation: if we swap a candidate upward, it may become the most preferred but will not change the least preferred element. Note that $\best(\Pip,W)=s$ can only occur if $s$ is the second-most preferred element of $W$.

\begin{lemma}\label{lem:bounds}
    Let $W\sub A$. Let $t:=\best(P_i,W)=t$, $b:=\worst(P_i,W)$, and suppose $s\neq t, b$. Then $\best(\Pip,W)=t$ or $s$, and $\worst(\Pip,W)=b$.   
    Likewise, let $t:=\best(\Pip,W)$, $b:=\worst(\Pip,W)=b, s\neq t, b$. Then $\best(P_i,W)=t$ and $\worst(P_i,W)=b$ or $s$.
\end{lemma}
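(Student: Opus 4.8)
The plan is to note that $\Pip$ differs from $P_i$ in exactly one pairwise comparison: writing $x$ for the alternative sitting directly above $s$ in $P_i$, the swap reverses the ranking of $s$ against $x$ and leaves every other pair of alternatives ordered identically. Since $\best(\cdot,W)$ and $\worst(\cdot,W)$ are determined entirely by the pairwise comparisons among elements of $W$, the only way either quantity can move is through this single flip, so throughout it suffices to track what happens to $s$ and $x$.

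For the first claim, $\worst(\Pip,W)=b$, the key point is that $s$ moves strictly upward, so since $s\neq b$ the element $s$ cannot become the new worst of $W$; the only alternative whose rank drops is $x$, and it drops by exactly one step, landing just below $s$. A short split on whether $x\in W$ (and, when it is, whether $x=b$) finishes the argument: if $x\notin W$ the positions of the elements of $W$ relative to $b$ are untouched; if $x\in W$ with $x\neq b$ then $b$ still lies below $x$; and if $x=b$ then $b$ merely slides down one place and stays the lowest element of $W$. In every case $b$ remains $\worst(\Pip,W)$. For the companion claim $\best(\Pip,W)\in\set{t,s}$, observe that $s$ is the only element moving up, hence the only alternative that can rise above the old top element $t$; and $s$ can overtake $t$ only when $t=x$, i.e.\ when $s$ was the second-best element of $W$ (exactly the situation flagged in the remark preceding the lemma). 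Otherwise $t$ keeps the top position, so in all cases the new best lies in $\set{t,s}$.

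The second statement I would obtain by duality rather than repeating the case analysis. Recovering $P_i$ from $\Pip$ amounts to moving $s$ one step \emph{down}, and moving an element down is precisely the first operation read against the reversed preference order, under which $\best$ and $\worst$ trade places. Applying the already-established first part to this reversed order (where the hypotheses $s\neq t,b$ are preserved) yields that the element playing the role of the worst is unchanged while the one playing the role of the best may shift to $s$; translating back gives $\best(P_i,W)=t$ and $\worst(P_i,W)\in\set{b,s}$, as required.

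I do not expect a genuine obstacle here: the lemma is an elementary analysis of a single adjacent transposition. The only place that demands care is the bookkeeping of the boundary cases, namely whether $s\in W$ or $s\notin W$ and whether $x=b$ or $x\neq b$, together with setting up the order-reversal correspondence in the duality step so that ``up'' versus ``down'' and $\best$ versus $\worst$ are interchanged consistently.
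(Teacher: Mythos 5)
Your argument is correct. Note that the paper gives no proof of this lemma at all---it is presented as an immediate observation following from the fact that an upward swap of $s$ changes only the single pairwise comparison between $s$ and the element $x$ directly above it---so there is no official proof to compare against; your case analysis (including the observation that $x=b$ forces $s\notin W$, so $b$ only passes an element outside $W$) and the order-reversal duality for the second half together constitute a complete justification of exactly the kind the authors evidently had in mind.
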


If there are at least two elements in the chosen committee and $F$ is strategy-proof, then swapping a third element upward in one preference order leads to a small list of possibilities.

\begin{lemma}\label{lem:sout}
    Let $F$ be a committee selection rule satisfying SPO and SPP, $F(P)=W$, $\best(P_i,W)=t$, $\worst(P_i,W)=b$, $t\neq b$.
    
    Let $s\neq t,b$, $F(P^{i\up s})=W'$. The following are true:
    \begin{enumerate}
        \item
        $\best(\Pip,W')=t$ or $s$.
        \item
        $\worst(\Pip,W')=b$ or $s$.
    \end{enumerate}
\end{lemma}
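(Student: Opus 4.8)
The plan is to use strategy-proofness in \emph{both} directions: SPO and SPP must hold whichever of the two ballots voter $i$ regards as sincere, so I may treat either $P$ or $\Pup$ as the truthful profile and the other as the deviation. Write $t':=\best(\Pip,W')$ and $b':=\worst(\Pip,W')$, and let $r$ be the alternative directly above $s$ in $P_i$, so that $P_i$ and $\Pip$ differ only by transposing the adjacent pair $r,s$. Treating $P$ as sincere, SPO yields $t\succeq_i\best(P_i,W')$ and SPP yields $b\succeq_i\worst(P_i,W')$; treating $\Pup$ as sincere, SPO says $t'$ is weakly preferred in $\Pip$ to $\best(\Pip,W)$ and SPP says $b'$ is weakly preferred in $\Pip$ to $\worst(\Pip,W)$. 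Two facts are used throughout: any two alternatives both distinct from $s$ are ordered identically by $P_i$ and $\Pip$; and \cref{lem:bounds} gives $\best(\Pip,W)\in\set{t,s}$ and $\worst(\Pip,W)=b$.

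For (1), suppose for contradiction that $t'\neq t$ and $t'\neq s$. As $t'\neq s$, the first half of \cref{lem:bounds} applied to $W'$ forces $\best(P_i,W')=t'$, so the $P$-side SPO inequality becomes $t'\preceq_i t$. The $\Pup$-side SPO inequality says $t'$ is weakly preferred in $\Pip$ to $\best(\Pip,W)\in\set{t,s}$. If $\best(\Pip,W)=t$, then since $t',t\neq s$ the two orders agree and $t'\succeq_i t$, whence $t'=t$, a contradiction. If $\best(\Pip,W)=s$, then $s$ sat directly below $t$ in $P_i$ (so $r=t$) and hence directly above $t$ in $\Pip$; the inequality then forces $t'=s$ or $t'\succ_i t$, and as $t'\neq s$ this contradicts $t'\preceq_i t$. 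Hence $t'\in\set{t,s}$.

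Part (2) is analogous but more delicate, because moving $s$ upward is the unfavourable direction for the worst element. Suppose $b'\neq b$ and $b'\neq s$. Assume first $t'\neq s$; then the second half of \cref{lem:bounds} applied to $W'$ gives $\worst(P_i,W')\in\set{b',s}$. If $\worst(P_i,W')=b'$, the two SPP inequalities are both $s$-free and force $b'\preceq_i b$ and $b'\succeq_i b$, so $b'=b$, a contradiction. If instead $\worst(P_i,W')=s$, then because $b'\neq s$ the upward swap must have lifted $s$ off the bottom of $W'$, which is possible only if $r\in W'$ and $b'=r$; now the $P$-side SPP inequality gives $s\preceq_i b$ while the $\Pup$-side SPP inequality gives $r\succeq_i b$, so $s\preceq_i b\preceq_i r$. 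Since $r$ is directly above $s$ in $P_i$, nothing lies strictly between them, forcing $b\in\set{s,r}$ and hence $b=r=b'$, again a contradiction. The leftover case $t'=s$ splits the same way according to whether $\worst(P_i,W')=s$ (the chain argument) or not (the pinching argument), so in all cases $b'\in\set{b,s}$.

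I expect the crux to be part (2). The asymmetry between an upward swap and the worst element means I cannot transport $b'$ directly back to a clean statement about $P_i$ as I did for the best element, and the decisive step is the squeeze $s\preceq_i b\preceq_i r$ together with the adjacency of $r$ and $s$ in $P_i$. Pinning down exactly when the swap can disturb the bottom of $W'$---precisely when $s$ was the worst element of $W'$ and its upper neighbour $r$ also lies in $W'$---is the step I would be most careful about.
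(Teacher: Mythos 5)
Your proof is correct and follows essentially the same route as the paper's: both use SPO and SPP in each direction together with \cref{lem:bounds}, case-split on whether the relevant extremum equals $s$, and close the hard subcase of (2) via the adjacency of the transposed pair (your squeeze $s\preceq_i b\preceq_i r$ is the paper's observation that an element lying two or more positions below another cannot overtake it in a single swap). Two cosmetic points only: in part (1) you invoke the wrong half of \cref{lem:bounds} (you need the ``likewise'' direction), and in the leftover case $t'=s$ of part (2) that lemma's hypothesis fails, so the dichotomy $\worst(P_i,W')\in\set{b',s}$ needs the one-line direct verification you sketch rather than a citation.
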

\begin{proof}
    Let $a:=\best(\Pip,W')$ and suppose $a\neq t, a\neq s$. Suppose $a\succ_i t$. By \cref{lem:bounds}, $\best(P_i,W')= a$, and hence voter $i$ has an optimistic manipulation from $P$ to $\Pup$.
Now suppose $a\prec_i t$. By \cref{lem:bounds}, $\best(\Pip,W)=s$ or $t$. If it is $t$, voter $i$ has an optimistic manipulation from $\Pup$ to $P$. If it is $s$ then, seeing how $t\in W$, it must be the case that $s\succ_i^{\up s} t\succ_i^{\up s} a$, and we have an optimistic manipulation again. This yields (i).
    
    For (ii), let $a:=\worst(\Pip,W')$, $a\neq b,s$. Suppose $a\succ_i b$. By \cref{lem:bounds}, $\worst(P_i,W')=a$ or $s$. If it is $a$, voter $i$ has a pessimistic manipulation from $P$ to $\Pup$. If it is $s$, then we consider two further cases. If $s\succ_i b$, we have a pessimistic manipulation from $P$ to $\Pup$. If $s\prec_i b$, observe the following:
    \begin{enumerate}
        \item
        $s\in W'$, else it could not be the worst element of $W'$ under $P_i$.
        \item
        $s\prec_i b\prec_i a$. In other words, $s$ is at least two positions below $a$ in $P_i$.
        \item
        $\worst(\Pip,W')=a$.
    \end{enumerate}
    This gives us a contradiction -- $s$ moved up one position from $P_i$ to $\Pip$ so, given 2, $s\prec_i^{\up s} a$. Given 1, that $s\in W'$, this contradicts 3. In sum, if $s\prec_i b$ then $\worst(\Pip,W')\neq a$.
    
    Suppose $a\prec_i b$. By \cref{lem:bounds}, $\worst(\Pip,W)=b$. Voter $i$ has a pessimistic manipulation from $\Pup$ to $P$.
\end{proof}

Given a strategy-proof rule, swapping a voter's most preferred element of the 
chosen committee member upward in that voter's order preserves that voter's most and least preferred members of the committee.

\begin{lemma}\label{lem:stop}
    Let $F$ be a committee selection rule satisfying SPO and SPP, $F(P)=W$, $\best(P_i,W)=s$, $\worst(P_i,W)=b$.
    
    Let $F(P^{i\up s})=W'$. The following are true:
    \begin{enumerate}
        \item
        $\best(\Pip,W')=s$.
        \item
        $\worst(\Pip,W')=b$.
    \end{enumerate}
    \end{lemma}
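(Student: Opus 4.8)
The plan is to mirror the manipulation arguments of \cref{lem:sout}, but to exploit that here $s$ is \emph{already} the best element of $W$, which immediately pins down $\best$ and $\worst$ of $W$ under the swapped order. First I would record the two base facts. Since $s=\best(P_i,W)$ is the top element of $W$, moving it up one position keeps it on top, so $\best(\Pip,W)=s$; and since moving the maximum of $W$ higher cannot disturb the bottom, $\worst(\Pip,W)=b$ as well (using $s\neq b$, which holds as $|W|\ge 2$ for consular rules). I also assume $s$ is not already at the top of $P_i$, since otherwise $\Pup=P$ and both claims are immediate.

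For (i), set $a:=\best(\Pip,W')$. Applying SPO with the sincere ballot $\Pip$ and the deviation $P_i$ gives $a=\best(\Pip,W')\succeq_i^{\up s}\best(\Pip,W)=s$. If $a=s$ we are done, so suppose $a\succ_i^{\up s}s$. Writing $s^+$ for the alternative directly above $s$ in $P_i$, the swap exchanges only $s$ and $s^+$, so $a\neq s^+$ (otherwise $s\succ_i^{\up s}s^+=a$), and from $a\succ_i^{\up s}s\succ_i^{\up s}s^+$ we see $a$ lies strictly above both $s$ and $s^+$ in $\Pip$. Since the swap merely rearranges the adjacent pair $(s,s^+)$, both below $a$, while fixing the order of everything else, $a$ remains the top element of $W'$ under $P_i$, i.e.\ $\best(P_i,W')=a$, and $a\succ_i s=\best(P_i,W)$. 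This is an optimistic improvement for the sincere voter $P_i$ who deviates to $\Pup$, contradicting SPO. Therefore $a=s$.

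For (ii), set $c:=\worst(\Pip,W')$; by (i), $s$ is the top element of $W'$ under $\Pip$, hence $c\neq s$ (as $|W'|\ge 2$). Applying SPP in both directions yields $b=\worst(P_i,W)\succeq_i\worst(P_i,W')$ and $c=\worst(\Pip,W')\succeq_i^{\up s}\worst(\Pip,W)=b$. Suppose $c\neq b$, so $c\succ_i^{\up s}b$. Because $s$ is the top element of $W'$ under $\Pip$, the bottom position of $W'$ can only be exchanged by the swap when $s$ and $s^+$ are its two lowest members, which forces $W'=\set{s,s^+}$; outside this degenerate case $\worst(P_i,W')=c$. In the agreeing case the first inequality gives $b\succ_i c$ while the second gives $c\succ_i^{\up s}b$, and since the swap reverses the order of only the pair $(s,s^+)$ this forces $\set{b,c}=\set{s,s^+}$ with $c=s$, contradicting $c\neq s$. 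In the exceptional case $W'=\set{s,s^+}$ we have $\worst(P_i,W')=s$, so $b\succeq_i s$, contradicting $b\prec_i s$. Hence $c=b$.

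The main obstacle is the bookkeeping of how a single adjacent swap can alter $\best$ and $\worst$ of a set: the only delicate point is that $\worst(\Pip,W')$ need not equal $\worst(P_i,W')$, and controlling this requires isolating the degenerate configuration $W'=\set{s,s^+}$ and dispatching it separately via $s\neq b$. Getting the directions of the optimistic and pessimistic manipulations right — which profile is sincere and which is the deviation in each application of SPO and SPP — is the other place where care is needed, though no idea beyond \cref{lem:bounds} and the adjacency of the swap is required.
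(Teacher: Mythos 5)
Your proof is correct and follows essentially the same route as the paper's: both directions of SPO and SPP combined with careful tracking of how a single adjacent swap can move $\best$ and $\worst$ of a set (the content of \cref{lem:bounds}), with your degenerate case $W'=\set{s,s^+}$ playing the role of the paper's ``$\worst(P_i,W')=s$'' sub-case in part (ii). The one caveat is your blanket assumption that $s\neq b$ and $|W'|\geq 2$: the lemma is stated for arbitrary committee selection rules, which in this paper's usage include set-valued rules that can output singletons (cf.\ the statement of the Duggan--Schwartz theorem), so the case $W=\set{s}$ is formally within scope and is exactly what the paper's own proof is handling in its ``if $s\neq b$\dots it follows that $s=b$'' step. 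Since the lemma is only ever invoked for consular rules, where $|W|=|W'|=2$, this restriction costs nothing in practice, but strictly speaking you have proved a slightly narrower statement than the one given, and you should either add the hypothesis $|F(\cdot)|\geq 2$ or dispatch the singleton case separately (it follows quickly from the same two SPO/SPP inequalities you already derive).
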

    \begin{proof}
    Let $\best(\Pip,W')=a$, $a\neq s$. Suppose $a\succ_i s$. By \cref{lem:bounds}, $\best(P_i,W')=a$. Voter $i$ has an optimistic manipulation from $P$ to $\Pup$. Suppose $a\prec_i s$. By \cref{lem:bounds}, $\best(\Pip,W)=s$ or $s$, that is $\best(\Pip,W)=s$ . Voter $i$ has an optimistic manipulation from $\Pup$ to $P$.
    
    Let $\worst(\Pip,W')=a$, $a\neq b$. Suppose $a\succ_i b$. By \cref{lem:bounds}, $\worst(P_i,W')=a$ or $s$. If it is $s$, observe that if $s\neq b$, $s\succ_i b$, giving voter $i$ a pessimistic manipulation from $P_i$ to $\Pup$. It follows that in this case $s=b$, which satisfies the lemma statement. If it is $a$, voter $i$ has a pessimistic manipulation from $P$ to $\Pup$.
    
    Suppose $a\prec_i b$. By \cref{lem:bounds}, $\worst(\Pip,W)=b$. Voter $i$ has a pessimistic manipulation from $\Pup$ to $P$.
\end{proof}

Note that the preceding lemmata applied to all committee selection rules. The following applies
only to consular election rules. It says that for a strategy-proof rule, swapping a voter's worst element of the elected committee upward in that voter's order does not change the elected committee.

\begin{lemma}\label{lem:sbot2}
    Let $F$ be a consular election rule satisfying SPO and SPP, $F(P)=W$, $\best(P_i,W)=t$, $\worst(P_i,W)=s$.
    
    Let $F(P^{i\up s})=W'$. It follows that:
    \begin{enumerate}
        \item
        $\best(\Pip,W')=t$ or $s$.
        \item
        $\worst(\Pip,W')=s$ or $t$.
    \end{enumerate}
\end{lemma}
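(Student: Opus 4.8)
The plan is to imitate the proofs of \cref{lem:sout,lem:stop}, adapted to the fact that here the moved candidate is voter $i$'s \emph{worst} committee member and the committee has size two. Since $F(P)=W$ is consular, $W=\{t,s\}$ with $t\succ_i s$, and because $|W'|=2$ the two conclusions together are equivalent to $W'=\{t,s\}=W$; that is, the upward swap of $s$ leaves the committee unchanged. Note at the outset that \cref{lem:bounds} cannot be applied to $W$ directly, since its hypothesis requires the moved candidate to be neither the best nor the worst of the set, whereas here $s=\worst(P_i,W)$; so I must track the position of $s$ by hand. The key preliminary is to locate $\best(\Pip,W)$ and $\worst(\Pip,W)$: writing $c$ for the candidate immediately above $s$ in $P_i$, either $c=t$, so that after the swap $s\succ_i^{\up s} t$ and hence $\best(\Pip,W)=s$, $\worst(\Pip,W)=t$; or $t\succ_i c\succ_i s$, the swap does not disturb the order of $t$ and $s$, and $\best(\Pip,W)=t$, $\worst(\Pip,W)=s$. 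Throughout I use that, as only $s$ moves, the $P_i$- and $\Pip$-orders of any pair other than $\{c,s\}$ agree.

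For (i), suppose $a:=\best(\Pip,W')\notin\{t,s\}$. If $a\succ_i t$, then $a\in W'$ forces $\best(P_i,W')\succeq_i a\succ_i t=\best(P_i,W)$, so voter $i$ optimistically manipulates from $P$ to $\Pup$, contradicting SPO. If instead $a\prec_i t$, I compare $a$ with $\best(\Pip,W)\in\{t,s\}$: when it equals $t$ one has $t\succ_i^{\up s} a$ at once; when it equals $s$ (the case $c=t$) every candidate below $t$ other than $s$ lies below $s$, so again $s\succ_i^{\up s} a$. Either way $\best(\Pip,W)\succ_i^{\up s} a=\best(\Pip,W')$, an optimistic manipulation from $\Pup$ to $P$, again contradicting SPO.

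For (ii), I first invoke part (i) to write $W'=\{w,a\}$ with $w=\best(\Pip,W')\in\{t,s\}$ and $a=\worst(\Pip,W')\notin\{t,s\}$. If $a\prec_i s$, then $a$ lies below both $t$ and $s$ in $P_i$ and hence in $\Pip$, so $a\prec_i^{\up s}\worst(\Pip,W)$, giving a pessimistic manipulation from $\Pup$ to $P$. If $a\succ_i s$, the argument of (i) rules out $a\succ_i t$, leaving $t\succ_i a\succ_i s$; then $w=t$ yields $\worst(P_i,W')=a\succ_i s=\worst(P_i,W)$, a pessimistic manipulation from $P$ to $\Pup$, whereas $w=s$ forces $a$ to be exactly the candidate displaced by $s$ (the order of $a$ and $s$ having flipped), so that $c\neq t$ and $s=\best(\Pip,W')\prec_i^{\up s} t=\best(\Pip,W)$, an optimistic manipulation from $\Pup$ to $P$. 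Each branch contradicts SPO or SPP.

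The main obstacle, and the reason this lemma is stated separately for consular rules, is precisely that $s\in W$ blocks the mechanical use of \cref{lem:bounds}; the delicate sub-case is the one in which the stray element $a$ sits strictly between $t$ and $s$ and is itself the candidate swapped with $s$. Resolving it cleanly rests on the elementary but essential observation that a single adjacent upward swap of $s$ cannot carry $s$ past $t$ whenever a third candidate separates them, together with careful bookkeeping of which of $P$ and $\Pup$ plays the role of the sincere profile in each manipulation invoked.
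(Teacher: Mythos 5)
Your proof is correct and follows essentially the same route as the paper's: a case analysis on where the alleged stray element $a$ of $W'$ sits relative to $t$ and $s$ in $P_i$, extracting an optimistic or pessimistic manipulation in each branch, with the same key ``$t$ is at least two positions above $s$'' observation resolving the delicate sub-case where $s$ overtakes $a$. The only difference is cosmetic: you track positions by hand rather than citing \cref{lem:bounds} (whose hypotheses the paper stretches slightly here, since $s$ is the worst element of $W$), which if anything makes your version a little more careful.
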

\begin{proof}
    Let $\best(\Pip,W')=a$, $a\neq t,s$. Suppose $a\succ_i t$. By \cref{lem:bounds}, $\best(P_i,W')=a$. Voter $i$ has an optimistic manipulation from $P$ to $\Pup$.
    
    Suppose $a\prec_i t$. By \cref{lem:bounds}, $\best(\Pip,W)=t$ or $s$.  If it is $t$, voter $i$ has an optimistic manipulation from $\Pup$ to $P$. If it is $s$, we consider two cases. If it is the case that $s\succ_i^{\up s} a$, voter $i$ has an optimistic manipulation from $\Pup$ to $P$. If it is the case that $s\prec_i^{\up s} a$, note that the assumption that $\best(\Pip,W)=s$ implies that $s\succ_i^{\up s} t$. This is a contradiction as $a\prec_i^{\up s}t$.
    
    Let $\worst(\Pip,W')=a$, $a\neq s, t$. Suppose $a\succ_i s$. By \cref{lem:bounds}, $\worst(P_i,W')=a$ or $s$. If it is $a$, voter $i$ has an optimistic manipulation from $P$ to $\Pup$. If it is $s$, then as $|W'|=2$ it follows that $W'=\set{s,a}$. As $\worst(\Pip,W')=a$, $\best(\Pip,W')=s$. Now observe that $a\succ_i s$, $s\succ_i^{\up s}a$, implies that $t$ is at least two positions above $s$ in $P_i$. This means that $t\succ_i^{\up s} s$, and voter $i$ has an optimistic manipulation from $\Pup$ to $P$.
    
    Suppose $a\prec_i s$. By \cref{lem:bounds}, $\worst(\Pip,W)=s$, which gives voter $i$ a pessimistic manipulation from $\Pup$ to $P$.
\end{proof}

\subsection{Key monotonicity lemmata}

We now show that for a strategy-proof consular election rule, swapping an alternative $s$ upward in some voter's order usually does not change the elected committee, and the only possible change is that $s$ overtakes some committee member and replaces it.

\begin{lemma}[Upwards monotonicity]\label{lem:upmono}
    Let $F$ be a consular election rule satisfying SPP and SPO.
    
    Let $F(P)=\set{a,b}$. Let $P_i'$ be the ballot obtained from $P_i$ by moving $s$ up some number of positions.
    
    If $s=a$ or $s=b$:
    $$F(P_i'P_{-i})=F(P).$$
    
    If $s\neq a,b$, either:
    \begin{enumerate}
    \item
    $F(P_i'P_{-i})=F(P).$
    \item
    $F(P_i'P_{-i})=\set{s,b}.$
    \end{enumerate}
    Furthermore, 2 happens only if $s$ overtakes $a$ from $P_i$ to $P_i'$.
\end{lemma}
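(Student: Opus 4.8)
The plan is to reduce the multi-position move to a chain of single adjacent upward swaps and induct on their number, feeding each step into the single-swap lemmas \cref{lem:sout}, \cref{lem:stop}, and \cref{lem:sbot2}. Write $P_i'$ as the end of a sequence $P_i = Q^{(0)}, Q^{(1)}, \dots, Q^{(m)} = P_i'$ in which each $Q^{(j+1)}$ is obtained by swapping $s$ with the alternative directly above it, put $W^{(j)} := F(Q^{(j)}P_{-i})$, and prove the invariant that every $W^{(j)}$ is either $\set{a,b}$ or $\set{s,y}$ for a single surviving member $y\in\set{a,b}$ that becomes fixed the instant $s$ first enters the committee.

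I would dispatch the easy case $s\in\set{a,b}$ first. If $s$ is the $P_i$-best of the committee, \cref{lem:stop} gives $W^{(j+1)}=W^{(j)}$ at every step; if $s$ is the worst, \cref{lem:sbot2} keeps the committee fixed while $s$ lies below the other member, and once $s$ overtakes it \cref{lem:stop} takes over. Either way the committee never changes, which is exactly the claim for $s\in\set{a,b}$.

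For the inductive step with $s\neq a,b$ I classify each swap by the position of $s$ relative to the current committee $W^{(j)}=\set{p,q}$ (say $p\succ_i q$). Since $|W^{(j+1)}|=2$, knowing its best and worst element under the post-swap order determines it, so each auxiliary lemma leaves only a short list of candidates. When $s$ passes a \emph{non-member} (so its position relative to $p,q$ is unchanged), \cref{lem:sout} confines $W^{(j+1)}$ to $\set{p,q}$, $\set{p,s}$, or $\set{s,q}$, and I rule out the two altered options by playing $Q^{(j)}$ against $Q^{(j+1)}$: because $s$ has not crossed $p$ or $q$, any replacement would give voter $i$ a manipulation back, optimistically if $s$ would sit atop the new committee and pessimistically if at its bottom. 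Hence the committee is unchanged whenever $s$ passes a non-member, and the only way it can change is for $s$ to pass a committee member. In that case \cref{lem:sout} again yields a short list, and the same SPO/SPP arguments, now keyed to whether $s$ ends up below or above the member it just passed, show that $s$ can replace a member only after rising above it, and then replaces exactly the overtaken member, never the other; once $s$ has entered as $\set{s,y}$, \cref{lem:stop} and \cref{lem:sbot2} keep the committee $\set{s,y}$ with the same $y$ under all later swaps. Naming the survivor $b:=y$ and the ousted member $a$ gives the stated dichotomy, and the ``furthermore'' clause follows because $s$ enters only by overtaking $a$.

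I expect the main obstacle to be precisely this last point. The auxiliary lemmas on their own permit $s$ to displace \emph{either} committee member, so the real work is the position-sensitive manipulation bookkeeping that forbids the committee from changing while $s$ is still too low and forbids $s$ from ever ousting the protected member $b$. Keeping these exclusions mutually consistent along the entire ascent, rather than merely at an isolated swap, is where the care is required.
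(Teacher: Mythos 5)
Your proof is correct and follows essentially the same route as the paper's: decompose the move into adjacent upward swaps, control each step with \cref{lem:sout}, \cref{lem:stop} and \cref{lem:sbot2}, and handle the case $s\in\set{a,b}$ exactly as the paper does. The only (minor) difference is that you establish the ``furthermore'' clause by local SPO/SPP arguments at each intermediate swap, whereas the paper argues once globally, comparing the endpoint profiles $P$ and $P_i'P_{-i}$ and deriving a contradiction if the relative order of $s$ and $a$ is unchanged; both are sound.
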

\begin{proof}
    Let $s=a$ or $b$. If $\best(P_i,\set{a,b})=s$, then by applying \cref{lem:stop}
    repeatedly, we see that no matter how many times we bump $s$ up one position,
    the winning committee remains unchanged. If $\worst(P_i,\set{a,b})=s$, then
    we can apply \cref{lem:sbot2} repeatedly to show that as long as $s$ remains
    the least favourite element in the set, the winning committee remains unchanged.
    Once $s$ becomes the favourite element in the set, we can apply \cref{lem:stop}.
    
    Now suppose $s$ is distinct from $a$ and $b$. By applying \cref{lem:sout}, we
    can see that each time we move $s$ up one position, either the two
    elements in the set remain unchanged, or one of them is replaced by $s$. In
    the former case, we get 1 from the lemma statement. In the latter, we get 2.
    It remains to show that the latter case can only happen if $s$ overtakes
    the element it supplants from $P_i$ to $P_i'$.
    
    Suppose, for contradiction, that this is not the case. That is, $F(P)=\set{a,b}$,
    $F(P_i'P_{-i})=\set{s,b}$ and either $a\succ_i s$ and $a\succ_i' s$, or $s\succ_i a$ and $s\succ_i' a$.
    
    First let $a\succ_i s$ and $a\succ_i' s$. Suppose $\worst(P_i',\set{a,b})=a$. By
    transitivity, $\worst(P_i',\set{s,b})=s$. As $a\succ_i' s$, this gives $i$
    a pessimistic manipulation from $P_i'$ to $P_i$. Now suppose $\best(P_i',\set{a,b})=a$.
    It follows that $a\succ_i'\best(P_i',\set{s,b})$. This gives $i$ an optimistic manipulation
    from $P_i'$ to $P_i$.
    
    Now, let $s\succ_i a$ and $s\succ_i' a$. Suppose $\best(P_i,\set{a,b})=a$. By
    transitivity, $\best(P_i,\set{s,b})=s$. As $a\succ_i s$, this gives $i$
    an optimistic manipulation from $P_i$ to $P_i'$. Now suppose $\worst(P_i,\set{a,b})=a$.
    It follows that $\worst(P_i,\set{s,b})\succ_i a$. This gives $i$ a pessimistic manipulation
    from $P_i$ to $P_i'$.
    
    As such, SPP and SPO prevents the possibility of $s$ entering the winning committee without
    overtaking the element it replaces. As such, 2 can only happen if the overtaking takes place.
\end{proof}
\begin{corollary}
    Let $F$ be a weakly viable consular election rule satisfying SPP and SPO.
   Then $F$ is unanimous.
\end{corollary}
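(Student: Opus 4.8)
The plan is to pin $a$ to the top of every ballot by a sequence of upward moves and to use \cref{lem:upmono} to argue that $a$ can never be dislodged from the committee along the way. Fix an alternative $a$ and a profile $P$ with $\best(P_i,A)=a$ for every voter $i$; we must show $a\in F(P)$. Since $F$ is weakly viable, there is a profile $Q$ with $a\in F(Q)$, say $F(Q)=\set{a,c}$.

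First I would move $a$ to the top of each ballot, one voter at a time, starting from $Q$. Each such step is an upward move of a current committee member, so the first clause of \cref{lem:upmono} applies and leaves the committee unchanged; in particular $a$ remains elected at every step. After all voters are processed we obtain a profile $Q'$ with $\best(Q'_i,A)=a$ for all $i$ and $a\in F(Q')$.

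Next I would transform $Q'$ into the target $P$ without ever moving $a$. Both profiles rank $a$ first in every ballot, so for each voter the two ballots differ only in the relative order of the alternatives lying below $a$; any such rearrangement can be realised by a sequence of adjacent upward swaps among the positions strictly below $a$, none of which disturbs $a$. Each such swap moves some $s\neq a$ up one position while $a$ sits at the top. By \cref{lem:upmono}, the committee either is unchanged or has exactly one member replaced by $s$, and a replacement can occur only if $s$ overtakes the member it supplants. Since $a$ is at the very top of the ballot, $s$ cannot overtake $a$; hence if a replacement happens it is the \emph{other} committee member that is supplanted, and $a$ survives. Iterating these swaps turns $Q'$ into $P$ while keeping $a$ on the committee, whence $a\in F(P)$, as required.

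The only delicate point -- and the crux of the argument -- is the guarantee that $a$ cannot be ejected during the rearrangement phase. This relies squarely on the ``furthermore'' clause of \cref{lem:upmono}: a newcomer can enter the committee only by overtaking the member it replaces, and an alternative fixed at the top of a ballot can never be overtaken. The remaining ingredients (weak viability to locate $Q$, and the fact that adjacent transpositions among the lower positions generate all rearrangements below $a$) are routine.
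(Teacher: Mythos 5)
Your proposal is correct and takes essentially the same route as the paper: locate a winning profile for $a$ via weak viability and then use upwards monotonicity (\cref{lem:upmono}) to drag $a$ to the top of every ballot without dislodging it. In fact your write-up is more careful than the paper's one-line proof, since you also handle the rearrangement of the alternatives below $a$ (needed to reach an \emph{arbitrary} profile ranking $a$ first everywhere), observing that the ``furthermore'' clause of \cref{lem:upmono} prevents any $s$ from ejecting $a$ once $a$ sits at the top.
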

\begin{proof}
Let $a\in A$. By weak viability there is some profile $P$ with $a\in F(P$. By Lemma~\ref{lem:upmono}, if we swap $a$ to the top of all voters' preference orders, the elected committee does not change.    
\end{proof}

Similarly, if we move $s$ downward in some voter's order, $s$ may be replaced on the committee by an alternative that overtook it, and otherwise there is no change to the committee.

\begin{lemma}[Downwards monotonicity]\label{lem:downmono}
    Let $F$ be a consular election rule satisfying SPP and SPO.

    Let $F(P)=\set{a,b}$. Let $P_i'$ be the ballot obtained from $P_i$ by moving $s$ down some number of positions.
    
    If $s\neq a,b$:
    $$F(P_i'P_{-i})=F(P).$$
    
    If $s=a$ or $s=b$ (WLOG, $a$), either:
    \begin{enumerate}
    \item
    $F(P_i'P_{-i})=F(P).$
    \item
    $F(P_i'P_{-i})=\set{c,b}, c\neq s.$
    \end{enumerate}
    Furthermore, 2 happens only if $s$ drops below $c$ from $P_i$ to $P_i'$.
\end{lemma}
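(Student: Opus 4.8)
The plan is to deduce Downwards monotonicity directly from \cref{lem:upmono} by reversing the direction of the swaps, so that no new manipulation arguments are needed. The key observation is that moving $s$ \emph{down} from $P_i$ to $P_i'$ is, read backwards, precisely moving $s$ \emph{up} from $P_i'$ to $P_i$. So I would set $W':=F(P_i'P_{-i})$ and apply \cref{lem:upmono} to the ``base'' profile $P_i'P_{-i}$, treating the upward swaps as carrying it to $P_iP_{-i}$, whose outcome is the known committee $\set{a,b}$.

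First I would dispose of the case $s\neq a,b$. If $s\in W'$, then by the first clause of \cref{lem:upmono} (a committee member stays put when swapped upward) moving $s$ up to reach $P_i$ leaves the committee fixed, giving $\set{a,b}=W'$ with $s\in\set{a,b}$, contradicting $s\neq a,b$; hence $s\notin W'$. With $s\notin W'$, \cref{lem:upmono} leaves only two options for $F(P_iP_{-i})$: either it equals $W'$, or $s$ enters the committee, replacing one element. Since $F(P_iP_{-i})=\set{a,b}$ does not contain $s$, the second option is impossible, so $W'=\set{a,b}$, as required.

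Next, for $s=a$ (WLOG), I would again split on whether $a\in W'$. If $a\in W'$, the same ``committee member stays put'' clause forces $W'=F(P_iP_{-i})=\set{a,b}$, which is alternative 1. If $a\notin W'$, write $W'=\set{c,d}$; \cref{lem:upmono} says moving $s=a$ up either preserves $W'$ (impossible, since $F(P_iP_{-i})=\set{a,b}$ contains $a$ but $W'$ does not) or replaces one element of $W'$ by $a$. In the latter case the resulting committee is $\set{a,\beta}$ with $\beta\in W'$, and matching it against $\set{a,b}$ forces $\beta=b$; hence $W'=\set{c,b}$ with $c\neq a=s$, which is alternative 2.

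Finally, the ``furthermore'' clause is just the translation of the overtaking condition of \cref{lem:upmono}: that lemma guarantees the replacement above occurs only if $a$ overtakes $c$ on the way up from $P_i'$ to $P_i$, and reading this backwards says that alternative 2 occurs only if $s=a$ drops below $c$ from $P_i$ to $P_i'$. The only step requiring real care—and the sole place the argument could slip—is the bookkeeping of this relabelling when \cref{lem:upmono} is invoked in reverse: one must check that the element $s$ supplants going \emph{upward} is exactly the element it drops below going \emph{downward}, and that the untouched member $b$ is correctly matched with the surviving element $\beta$. Everything else is a mechanical case check.
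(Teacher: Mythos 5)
Your proposal is correct and takes essentially the same route as the paper: both reverse the downward move into an upward move from $P_i'P_{-i}$ to $P_iP_{-i}$ and invoke \cref{lem:upmono}, with the paper merely organising the $s=a$ case as an explicit enumeration of the four possible shapes of $F(P_i'P_{-i})$ rather than your split on whether $a\in W'$. The bookkeeping you flag (that the element $s$ supplants going up is the one it drops below going down, and that $b$ survives) is handled identically in the paper.
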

\begin{proof}
    If $s\neq a,b$, then observe that if $s$ moves down the ballot from $P_i$ to
    $P_i'$, then it moves up the ballot from $P_i'$ to $P_i$. By upwards monotonicity,
    either $F(P)=F(P_i'P_{-i})$ or $s\in F(P)$. As $s\notin\set{a,b}$, it follows
    that $F(P)=F(P_i'P_{-i})$.
    
    If $s=a$, we consider four possibilities for $F(P_iP_{-i})$:
    \begin{itemize}
        \item 
            $F(P_iP_{-i})=\set{s,b}$ (committee unchanged).
        \item
            $F(P_iP_{-i})=\set{b,c}$ ($b\neq s$ is unchanged).
        \item
            $F(P_iP_{-i})=\set{s,c}$ ($s$ is unchanged).
        \item
            $F(P_iP_{-i})=\set{c,d}$ (both elements on committee changed).
    \end{itemize}
    In case 1, we start at $P_iP_{-i}$ and move $s$ up the ballot. By upwards
    monotonicity, we find that $F(P)=F(P_iP_{-i})$.
    
    In case 2, we start at $P_iP_{-i}$ and move $s$ up the ballot. By upwards
    monotonicity, if $s\in F(P)$ then it can only enter after it overtakes
    the element it replaces. As $F(P)=\set{s,b}$, it follows that $s$
    enters after it overtakes $c$, and as a consequence $F(P_iP_{-i})=\set{b,c}$
    only if $s$ drops below $c$ as we move from $P_i$ to $P_i'$.
    
    In case 3, by upwards monotonicity $F(P)=\set{s,c}$. As such, case 3
    cannot occur.
    
    In case 4, by upwards monotonicity $F(P)$ is one of $\set{c,d}$, $\set{s,c}$
    or $\set{s,d}$. As it is none of these, case 4 cannot occur.
\end{proof}

\section{Strategy-Proofness and Reducible elections}
\label{s:reduce}

\subsection{Two classic impossibilities}

We first state the classic result in the area, for committees of size $1$.

\begin{theorem}[\citet{Gibbard1973}; \citet{Satterthwaite1975}]
    Let $F$ be an onto, strategy-proof social choice function with $|A|\geq 3$. Then $F$ is dictatorial.
  
\end{theorem}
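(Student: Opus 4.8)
The plan is to mirror the monotonicity-based strategy used throughout this paper. First I would upgrade strategy-proofness to a Maskin-style monotonicity statement: if $F(P)=a$ and $P'$ is obtained so that no voter lowers $a$ relative to any other alternative (i.e.\ $\{x : a\succ_i x\}\subseteq\{x : a\succ_i' x\}$ for every $i$), then $F(P')=a$. This follows by altering one ballot at a time and applying strategy-proofness in both directions: if the winner changed from $a$ to some $x$, then either the voter with true order $P_i$ or the voter with true order $P_i'$ could manipulate, depending on how $a$ and $x$ compare in the two orders.

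From ontoness I would then derive unanimity: since some profile elects $a$, monotonicity lets me raise $a$ to the top of every ballot without changing the winner, so $a$ wins whenever all voters rank it first. A dual argument shows that an alternative ranked last by everyone is never elected.

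With these tools in hand I would run the pivotal-voter argument. Fixing an alternative $b$, I would walk through a sequence of profiles $P^0,\dots,P^n$ in which $b$ is lifted from the bottom to the top one voter at a time; the winner is not $b$ at $P^0$ and is $b$ at $P^n$ by unanimity, so some voter $i^*$ is pivotal. The substance of the proof is then to show (i) that $i^*$ does not depend on $b$ or on how the remaining alternatives are arranged, and (ii) that $i^*$'s top-ranked alternative is always the winner, i.e.\ $i^*$ is a dictator. Each of these is established by repeated applications of monotonicity, arguing that changing the ballots of voters other than $i^*$ cannot move the outcome off $i^*$'s favourite. The hypothesis $|A|\ge 3$ enters precisely here: with a third alternative available one can reposition candidates to transfer the pivotality of $i^*$ from the pair $\{b,c\}$ to any other pair.

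The hardest step is (i)--(ii): proving that the pivot is a single global dictator rather than a collection of pair-specific or arrangement-specific pivots. A cleaner but less self-contained alternative is to reduce to Arrow's theorem: define a social welfare relation by declaring $a$ socially preferred to $b$ whenever $F$ elects $a$ at the profile obtained by lifting $\{a,b\}$ to the top of every ballot, verify using monotonicity and strategy-proofness that this relation is complete, transitive, and satisfies independence of irrelevant alternatives together with the Pareto condition, and then invoke Arrow to extract a dictator whom one checks is also a dictator for $F$. On that route the main obstacle shifts to verifying transitivity and independence of irrelevant alternatives for the constructed relation.
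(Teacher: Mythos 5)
This theorem is stated in the paper as a classical result with citations to Gibbard (1973) and Satterthwaite (1975); the paper supplies no proof of its own, so there is no internal argument to compare yours against. Your outline follows the standard modern route. The preliminary steps are correct and complete as you describe them: strategy-proofness does yield Maskin monotonicity by the one-ballot-at-a-time argument (if the winner moved from $a$ to $x$, the voter whose ballot changed would have a profitable deviation under whichever of $P_i$, $P_i'$ ranks the new winner lower, and the inclusion $\{y : a\succ_i y\}\subseteq\{y: a\succ_i' y\}$ rules out the remaining case), and ontoness plus monotonicity does give unanimity and the dual ``never elect a universally bottom-ranked alternative.''

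The gap is that steps (i) and (ii) --- that the pivotal voter for $b$ is independent of $b$ and of the arrangement of the other alternatives, and that this voter's top choice always wins --- are announced rather than argued, and they constitute essentially the entire content of the theorem. Everything up to that point holds for $|A|=2$ as well, where the conclusion is false (majority rule is strategy-proof and non-dictatorial), so the part you have actually written down cannot suffice; the use of the third alternative to ``transfer pivotality'' is precisely the step that must be made precise and is the one most prone to error (one must track the winner through carefully chosen profiles where $b$ sits at the top of ballots $1,\dots,i^*$ and at the bottom of the rest, and show the outcome is forced to be $i^*$'s favourite among $A\setminus\{b\}$, then vary $b$). Your fallback via Arrow is likewise a correct strategy in outline, but as you note it relocates rather than removes the work: verifying transitivity and IIA of the induced social welfare relation requires the same kind of profile-manipulation arguments. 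In short, the approach is the right one and would succeed if executed, but as submitted it is a plan for a proof rather than a proof.
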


Duggan and Schwarz proved an analogue of this for variable sized committees. In fact they give two rather different-looking versions in  \citet{Duggan1992, Duggan2000}. \citet{Taylor2002} gave a more readable presentation which has been widely described as the ``Duggan-Schwartz Theorem", and we use this version.

\begin{theorem}
    Let $F$ be a committee selection rule that satisfies SPP, SPO and is onto with respect
    to singletons. That is, for every $a\in A$ there exists a $P$ such that $F(P)=\set{a}$.    
    Then for $|A|\geq 3$, $F$ has a weak dictator.
 
\end{theorem}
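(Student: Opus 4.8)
The plan is to reduce the problem to the Gibbard--Satterthwaite theorem by a pivotal-voter (Arrow/field-expansion) argument run directly on $F$, after first lifting the SPO/SPP monotonicity machinery from the size-$2$ setting to correspondences whose output may be any nonempty set. The argument proceeds in three stages: (i) prove analogues of \cref{lem:bounds,lem:sout,lem:stop} for variable-size outputs and use them, together with onto-with-respect-to-singletons, to control $F$ on unanimous profiles; (ii) locate a pivotal voter by marching between two unanimous profiles; (iii) show by a field-expansion argument that this voter is a weak dictator, i.e.\ a nominator whose top alternative always lies in $F(P)$.

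For stage (i) I would re-examine the elementary swaps. When a voter raises $s$ one position, SPO constrains how $\best(P_i,F(\cdot))$ may move and SPP constrains $\worst(P_i,F(\cdot))$, so the only admissible effect on the winning set is that $s$ enters or leaves; chaining these swaps reproduces upward/downward monotonicity (the correspondence analogues of \cref{lem:upmono,lem:downmono}, which in the paper were consular-specific). The crucial consequence is a unanimity statement: starting from a profile with $F(P)=\set{a}$ supplied by onto-singletons, raise $a$ to the top of every ballot one voter at a time. Each such raise keeps $a$ in the set by SPO, while SPP forces $a$ also to be the \emph{worst} member of the new set, which for a top-ranked alternative is only possible if the set is $\set{a}$. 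Hence all voters ranking $a$ first implies $F(P)=\set{a}$, the analogue of the unanimity corollary above; more generally $F(P)$ avoids Pareto-dominated alternatives. This is exactly what makes a Gibbard--Satterthwaite-style argument possible.

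For stages (ii)--(iii), the tempting shortcut is to build a resolute social choice function from $F$ by an external priority order or by the optimist-best of a fixed voter, and then quote Gibbard--Satterthwaite; but one checks that such a function is strategy-proof only for the distinguished resolver and not for the remaining voters, since a manipulator's own optimistic/pessimistic comparison need not track the resolver's selection. I would therefore argue directly: using $|A|\ge 3$, fix $a,b,c$ and pass from the all-$a$-top profile (where $F=\set{a}$) to the all-$b$-top profile (where $F=\set{b}$) one voter at a time, and isolate the voter at whose switch $a$ leaves the winning set. A field-expansion argument, transporting decisiveness for one ordered pair to all ordered pairs, then shows that this voter's top alternative is always elected, which is precisely the conclusion that $F$ has a weak dictator.

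The main obstacle is that $\succ^{DS}_i$ is not transitive, as stressed after \cref{def:sp}: the example $\set{a,e}\prec_i^{DS}\set{c,d}\prec_i^{DS}\set{b,f}$ with $\set{a,e}\not\prec_i^{DS}\set{b,f}$ shows that optimism and pessimism can pull in opposite directions, so one cannot treat ``SPO-or-SPP'' as a single preference to which the classical adjacent-swap monotonicity arguments apply verbatim. Consequently every expansion and pivotal step must be arranged so that the relevant comparison is unambiguously optimistic or unambiguously pessimistic, and each must be justified twice over---once against SPO and once against SPP---while ensuring that improving a voter's optimistic outlook does not secretly worsen the pessimistic one (or vice versa). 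Managing this separation, so that the two incentive constraints never interfere, is the technical heart of the proof.
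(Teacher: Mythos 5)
You should first note that the paper does not prove this statement at all: it is the Duggan--Schwartz theorem, imported in the formulation of \citet{Taylor2002} and used as a black box, so your attempt can only be judged against what a complete proof requires. On that standard there are two genuine gaps. The first is in stage (i): the claim that an adjacent upward swap of $s$ can only affect whether $s$ belongs to the winning set is false once the output may have more than two elements. SPO and SPP constrain only $\best(P_i,F(\cdot))$ and $\worst(P_i,F(\cdot))$; any member of the output lying strictly between voter $i$'s best and worst committee members is invisible to both axioms, so it can churn arbitrarily under a swap (if $F(P)=\set{a,x,c}$ and $F(P^{i\up s})=\set{a,y,c}$ with $a\succ_i x,y\succ_i c$ and the swap occurring below $c$, no optimistic or pessimistic comparison for voter $i$ detects the change). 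This is exactly why the paper's monotonicity lemmata, \cref{lem:sbot2} in particular, are confined to $k=2$, where best and worst determine the whole set --- a point the authors themselves stress when explaining why their argument does not generalise. Your unanimity conclusion does survive, since it needs only the best/worst bookkeeping (SPO applied from the post-swap profile pins the new best to $a$, and SPP then pins the new worst to $a$ as well, forcing $F=\set{a}$), but the further assertion that $F(P)$ avoids Pareto-dominated alternatives does not follow from the swap analysis and is left unproved; a dominated alternative sitting strictly between every voter's best and worst committee members is precisely the case the axioms cannot see.

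The second gap is that stages (ii)--(iii) are a plan rather than a proof. Locating the first voter at whose switch $a$ leaves the winning set is unproblematic, but everything that makes that voter a weak dictator --- transporting decisiveness from one ordered pair at one profile to every alternative at every profile --- is compressed into ``a field-expansion argument \ldots then shows.'' Your own closing paragraph correctly explains why this step cannot be quoted from the resolute case: $\succ_i^{DS}$ is not transitive, so every expansion move must be certified separately against SPO and against SPP, and the intermediate outputs along the path need not be singletons, so one must also track which alternatives play the roles of best and worst for the pivotal voter at each step. That is where the real work lies (Taylor's published proof does it via dictating sets and a splitting argument rather than a single pivotal voter), and it is the part your proposal defers entirely. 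As written, the attempt is a plausible outline resting on one false lemma, with its central step missing.
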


We aim to extend this last result to the case of consular election rules, replacing being onto for singletons with weak viability. This work will not be as straightforward as expected, and will take until Section~\ref{s:linked} to complete. We first deal with some annoying trivialities.

\subsection{The case of Marius}
\label{ss:marius}

One might, in light of the Duggan–Schwartz theorem, assume that the Gibbard-Satterthwaite
theorem can be extended directly extended to committee selection rules:
with $|A|\geq 3$, a weakly viable committee selection rule satisfying SPP and SPO
has a dictator of some sort. However, it is not difficult to construct a counterexample.

\begin{definition}
    A committee selection rule is \emph{Marian} if there exists an $a\in A$ such that $a\in F(P)$ for all $P$. Such an $a$ is called a $Marius$.\footnote{Gaius Marius held the consulship of Rome an unprecedented seven times. If you are electing a pair of consuls and one of the candidates is Marius, you are in fact electing a single consul.}
\end{definition}

\begin{fact}
    For $|A|\geq 3$, there exist Marian consular election rules satisfying SPO, SPP and weak viability without a weak dictator.
\end{fact}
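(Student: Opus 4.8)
The plan is to use the footnote's observation that a Marian rule elects only a single junior consul. Fix the Marius $a$ and put $Y := A \setminus \set a$. Every Marian consular rule then has the form $F(P)=\set{a,G(P)}$ for a unique junior-consul map $G\colon\CP(V,A)\ria Y$, and weak viability is equivalent to $G$ being onto $Y$: for $b\neq a$ we have $b\in F(P)$ iff $G(P)=b$. Likewise $i$ is a weak dictator of $F$ iff $\best(P_i,A)\in\set{a,G(P)}$ for all $P$; since the case $\best(P_i,A)=a$ is free, this says precisely that whenever $i$ tops some $b\neq a$, $G$ must return $b$. So the statement reduces to exhibiting, for each $|A|\geq 3$, an onto map $G$ with no such voter whose committee rule is SPO and SPP.

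Next I would record how SPO and SPP act on $G$. Because $a$ lies on every committee, the optimistic value of $\set{a,G}$ to $i$ is $\best(P_i,\set{a,G})$ and the pessimistic value is $\worst(P_i,\set{a,G})$; hence an optimist is indifferent among all outcomes placing the junior consul below $a$ in $P_i$, and a pessimist among all placing it above $a$. The hope is to exploit this asymmetric blindness, since each axiom on its own is strictly weaker than full strategy-proofness of $G$. Using \cref{lem:upmono,lem:downmono} one classifies the deviations available to $i$: a deviation can only move $G$ across $a$ in $i$'s order, or change it on one side of $a$, and a cross-$a$ jump upward or a same-side improvement is exactly what SPO (above $a$) and SPP (below $a$) forbid.

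For the base case $|A|=3$ the set $Y$ has two elements, the option sets $\set{G(P_i'P_{-i}):P_i'}$ have size at most two, and the constraint above is met by a genuinely non-dictatorial rule. Concretely I would take an odd number of voters and let $G$ be simple majority between the two members of $Y$. This is onto and unanimous, it is anonymous and so has no dictator (whence $F$ has no weak dictator), and on two alternatives it always hands each voter the better of the reachable options, giving SPO and SPP. This settles the statement for $|A|=3$.

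The hard case is $|A|\geq 4$, and locating the obstacle here is the heart of the matter. Weak viability now forces $|Y|\geq 3$, so $G$ has image of size at least three, and the danger is that the two weakenings recombine into ordinary strategy-proofness of $G$: a deviation producing $G(P_i'P_{-i})\succ_i G(P)$ either keeps both junior consuls on one side of $a$ (blocked by SPP or SPO) or lifts the junior consul from below $a$ to above it (blocked by SPO). If those are the only cases, SPO and SPP jointly amount to strategy-proofness of $G$, and the Gibbard--Satterthwaite theorem then forces a dictator of $G$, who is a weak dictator of $F$. The whole construction therefore hinges on designing $G$ so that the Marius genuinely buffers some profitable deviation that a dictatorship-forcing argument would otherwise exploit — for instance a unanimity-respecting priority rule electing the highest-priority member of $Y$ that every voter ranks above $a$, with a fixed default otherwise, which is onto and manifestly has no weak dictator. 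Verifying that no voter can profitably push the junior consul up to, or across, the $a$-threshold under such a rule is the crux, and I expect this recombination of SPO and SPP to be the main obstacle to a uniform construction across all $|A|\geq 3$.
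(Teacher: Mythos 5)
Your reduction $F(P)=\set{a,G(P)}$ and your $|A|=3$ construction are correct and coincide with the paper's entire proof, which reads in full: ``Majority vote on the second consul with lexicographic tie-breaking.'' (Your odd-number-of-voters device and the paper's lexicographic tie-breaking serve the same purpose.) Nothing more is required: the Fact is positioned as a counterexample to the naive transplant of the Gibbard--Satterthwaite hypothesis ``$|A|\geq 3$'' to committee selection, and a single witness with three alternatives -- where, as the paper notes immediately afterwards, only two committees lie in the range -- is all that is being claimed.

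The genuine gap is your final paragraph, which reads the quantifier as ``for every $|A|\geq 3$'' and hunts for constructions with $|A|\geq 4$. No such construction exists, and the paper says so in the very next result: \cref{prop:Mariandictator} proves that for $|A|\geq 4$ \emph{every} Marian consular election rule satisfying SPO, SPP and weak viability has a weak dictator, by exactly the ``recombination'' you flag as the danger. Rank Marius $m$ last in every ballot; the induced junior-consul function on $A\setminus\set{m}$ is onto by weak viability and strategy-proof because a profitable deviation of it lifts to an optimistic manipulation of $F$ (with $m$ last, the junior consul is each voter's best committee member), and Gibbard--Satterthwaite applies to the $|A\setminus\set{m}|\geq 3$ remaining alternatives. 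Your hoped-for escape hatch also fails concretely: under your priority rule with $b_1$ prioritised over $b_2$, a voter $i$ with $b_2\succ_i b_1\succ_i a$, all other voters ranking $b_1$ and $b_2$ above $a$, receives $\set{a,b_1}$ sincerely but $\set{a,b_2}$ by demoting $b_1$ below $a$ -- an optimistic manipulation, so SPO fails, consistent with the impossibility. Deleting the last paragraph and stopping at ``this settles the statement for $|A|=3$'' leaves a correct proof matching the paper's.
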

\begin{proof}
    Majority vote on the second consul with lexicographic tie-breaking.
\end{proof}

Of course, what is going on here is that even though three alternatives are present,
only two sets are in the range of the function. If we increase the number of alternatives to four, and hence the size of the range to three, we recover a form of dictatoriality.

\begin{proposition}\label{prop:Mariandictator}
    If $|A|\geq 4$, every Marian consular election rule that satisfies SPO, SPP and weak viability has a weak dictator.
\end{proposition}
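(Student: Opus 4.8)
The plan is to exploit the fact that a Marian rule is, in disguise, an ordinary social choice function on the remaining alternatives. Let $a$ be the Marius, so $a\in F(P)$ for every $P$, and write $B:=A\setminus\set a$, noting $|B|=|A|-1\geq 3$. Since every committee has size two and contains $a$, we may write $F(P)=\set{a,g(P)}$ for a well-defined map $g(P)\in B$. The strategy is to show that $g$ behaves like a strategy-proof, onto social choice function over $B$, invoke Gibbard--Satterthwaite to produce a dictator $d$ of $g$, and then argue that $d$ is in fact a weak dictator of $F$.

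The enabling observation is that for a Marian rule \emph{moving the Marius downward never changes the committee}. Indeed, applying \cref{lem:downmono} with $s=a$, the committee either stays fixed or becomes $\set{c,b}$ with $c\neq a$; but the latter is impossible, since $a$ must remain on every committee. Iterating over all positions and all voters, we can push $a$ to the bottom of every ballot while leaving $F(P)$ unchanged. This lets us restrict attention to the domain $D$ of profiles in which every voter ranks $a$ last, which is in natural bijection with $\CP(V,B)$. On $D$, for any $x\in B$ we have $\best(P_i,\set{a,x})=x$, so SPO for $F$ translates verbatim into ordinary strategy-proofness of $g$ as a social choice function on $B$: an optimistic manipulation of $F$ between two profiles of $D$ is exactly a manipulation of $g$. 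Moreover $g$ is onto $B$: given $c\in B$, weak viability supplies a profile with $c\in F(P)$, hence $g(P)=c$, and pushing $a$ to the bottom everywhere (which preserves the committee) lands us in $D$ with $g=c$ still.

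With $g$ now a genuine onto, strategy-proof social choice function on a set of at least three alternatives, Gibbard--Satterthwaite yields a dictator $d$: on every profile of $D$, $g$ returns $d$'s top-ranked element of $B$. It remains to lift this to the full domain. Take an arbitrary profile $P$ and let $x:=\best(P_d,A)$. If $x=a$ then $x\in F(P)$ by the Marian property. Otherwise $x\in B$, and we push $a$ to the bottom of every ballot to obtain $P^*\in D$ with $F(P^*)=F(P)$; since moving $a$ down does not disturb the order among the elements of $B$, $x$ remains $d$'s favourite element of $B$, so $g(P^*)=x$ and thus $F(P)=F(P^*)=\set{a,x}\ni x$. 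Hence $\best(P_d,A)\in F(P)$ for every $P$, i.e. $d$ is a weak dictator. The main obstacle is not the application of Gibbard--Satterthwaite but the two translation steps surrounding it: reducing to the $a$-last domain so that $g$ genuinely is a social choice function on $B$, and then propagating the dictator back to the \emph{entire} profile space. Both rest on the committee-invariance of moving the Marius, which is precisely where the Marian hypothesis combines with \cref{lem:downmono}.
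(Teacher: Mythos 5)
Your proof is correct and follows essentially the same route as the paper's: rank the Marius last to obtain a social choice function on $A\setminus\set{a}$, verify it is onto and strategy-proof, and invoke Gibbard--Satterthwaite. You are somewhat more explicit than the paper about the two translation steps (using downwards monotonicity to justify pushing the Marius to the bottom, and lifting the dictator back to the full domain), but the underlying argument is the same.
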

\begin{proof}
    Let $m$ be Marius. Consider the social choice function $F':\CP(V,A\setminus\set{m})\ria A\setminus\set{m}$ such that $F'(P)=a$ where $a\in F(P^*), a\neq m$. $P^*$ is obtained from $P$ by ranking $m$ last in every ballot. Once we establish the following we can invoke the Gibbard-Sattherwaite theorem:
    \begin{enumerate}
        \item
        $|A\setminus\set{m}|\geq 3$.
        \item
        $F'$ is onto.
        \item
        $F'$ is strategy-proof.
    \end{enumerate}
    1 follows from $|A|\geq 4$ and 2 follows from weak viability.
    
    For 3, assume there is a $P\in\CP(N,A\setminus\set{m})$ such that there is some $P_i$ for which $F'(P)=a\prec_i F'(P_iP_{-i})=b$. Extend $P$ to $P^*$ by ranking $m$ last in every ballot. It follows that $F(P^*)=\set{a,m}$ and $ F((P_iP_{-i})^*)=\set{b,m}$, giving voter $i$ an optimistic manipulation.
\end{proof}
The reader will note that the above proof only used the SPP assumption. It could likewise be rewritten to only use SPO. This should not be surprising as Marian election rules are, essentially, social choice functions in disguise, and manipulating
a singleton by an optimist or a pessimist is the same thing.

We have earlier shown that strategy-proof consular election rules are unanimous. With
the further assumption that they are non-Marian, we can show that they satisfy veto.

\begin{fact}
    Let $F$ be a non-Marian, weakly viable consular election rule satisfying SPP and SPO.
    Then $F$ satisfies veto.
\end{fact}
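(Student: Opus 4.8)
The plan is to argue by contradiction. Suppose $F$ does not satisfy veto, so there is a profile $P$ in which every voter ranks some fixed alternative $a$ last, yet $a\in F(P)$, say $F(P)=\set{a,b}$. I will show that this forces $a$ to be a Marius, contradicting the non-Marian hypothesis. The engine throughout is \cref{lem:upmono}, used in two guises: moving a committee member up leaves the committee fixed, and moving a non-member $s$ up can only displace the committee member that $s$ \emph{overtakes}.

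First I would show that $a\in F(P')$ for every profile $P'$ in which $a$ is still ranked last by all voters. Since $a$ sits at the bottom of each ballot in both $P$ and $P'$, the two profiles differ only in the relative order of the alternatives in $A\setminus\set{a}$, so I can pass from $P$ to $P'$ one voter at a time, rearranging each voter's order by a sequence of adjacent transpositions of non-$a$ alternatives (bubble sort). Each such transposition is an upward move of some $c\neq a$ by a single position, and by \cref{lem:upmono} it either leaves the committee unchanged or replaces the element that $c$ overtakes. Because $a$ lies at the very bottom, $c$ never overtakes $a$; hence the displaced element (if any) is never $a$, and $a$ remains on the committee. As $a$ is never moved, it stays at the bottom throughout, so the invariant ``$a$ is last everywhere and $a\in F$'' is preserved until we reach $P'$.

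Next I would remove the restriction that $a$ be ranked last. Take an arbitrary profile $Q$ and let $Q^*$ be obtained from $Q$ by pushing $a$ to the bottom of every ballot. By the previous paragraph $a\in F(Q^*)$. Now restore the ballots to $Q$ one voter at a time, moving $a$ upward to its position in $Q$; since $a$ is a committee member at each stage, the clause of \cref{lem:upmono} for $s$ already on the committee gives that each such move leaves the committee unchanged, so $a\in F(Q)$. As $Q$ was arbitrary, $a\in F(Q)$ for all $Q$, i.e.\ $a$ is a Marius, contradicting that $F$ is non-Marian.

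The step I expect to require the most care is the second paragraph: I must ensure the invariant is genuinely maintained across the entire sequence of adjacent swaps — in particular that the ``only if $s$ overtakes it'' clause of \cref{lem:upmono} really excludes $a$ as the displaced member at \emph{every} swap, even after intermediate swaps have changed $b$ into some other partner. The remaining steps are routine applications of upward monotonicity.
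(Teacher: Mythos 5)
Your proof is correct, and it takes the contrapositive of the route the paper takes, in rather more detail. The paper's own proof runs forward: non-Marianity supplies a profile at which $a$ loses, and pushing $a$ to the bottom of every ballot leaves the committee unchanged by downward monotonicity (\cref{lem:downmono}), exhibiting a profile with $a$ ranked last everywhere at which $a$ loses. You instead assume veto fails at some bottom-profile and propagate the win to every profile, concluding that $a$ is a Marius. The two arguments are logically dual, but yours supplies a step the paper's terse version leaves implicit: veto is a statement about \emph{every} profile in which $a$ is unanimously ranked last, so one needs to know that $a$'s membership in the winning committee is constant across all such profiles. Your first paragraph proves exactly this invariance (via adjacent swaps of non-$a$ alternatives and the ``only if $s$ overtakes it'' clause of \cref{lem:upmono}, correctly noting that an alternative sitting strictly above $a$ never overtakes $a$ in a single adjacent swap), and your second paragraph is a clean application of the committee-member clause of \cref{lem:upmono}. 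So your version is self-contained where the paper's relies on the reader to supply the same invariance argument; the price is length.
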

\begin{proof}
    Let $P$ be any profile where $a$ loses. Such a profile must exist because
    $F$ is non-Marian. Once $a$ is moved to the bottom of every ballot it still
    loses by monotonicity.
\end{proof}

\subsection{The general case}

Marian election rules are a special case of a more general class of elections.

\begin{definition}
    A $k$-committee selection rule $F$ is said to be \emph{reducible} if there exists an integer $j$ and a partition of $A$ into two sets, $A=B\uplus C$, such that $F(P)=G(P|B)\cup H(P|C)$ for a $j$-committee selection rule $G:\CP(V,B)\ria S_j(B)$ and a $(k-j)$-committee selection rule $H:\CP(V,C)\ria S_{k-j}(C)$.
\end{definition}

Note that in many cases this is an entirely natural way to select a committee --
we might want a committee formed of a fixed number of men and women, plebeians and patricians,
Lords Temporal and Lords Spiritual. However, from a formal point of view these are not very interesting; a reducible $k$-committee selection rule is not, in some sense, a ``true'' $k$-committee selection rule, and it does not add anything to the theory. Reducible rules also complicate any attempt to gracefully extend the
Duggan-Schwartz theorem.

\begin{fact}
    There exist non-Marian consular election rules with $|A|=4$ that satisfy SPP, SPO and weak viability and do not have a weak dictator.
\end{fact}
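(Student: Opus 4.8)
The plan is to produce an explicit reducible rule for which every hypothesis of the statement holds but no weak dictator exists; the point is that reducibility lets the two consuls be chosen by two independent strategy-proof elections over disjoint candidate pools. Take $A=\set{a,b,c,d}$, three voters, and partition $A$ into $B=\set{a,b}$ and $C=\set{c,d}$. Let $G\colon\CP(V,B)\ria B$ and $H\colon\CP(V,C)\ria C$ both be majority rule, which is well defined and tie-free because $|V|=3$ is odd, and set $F(P)=\set{G(P|B),H(P|C)}$ (identifying each singleton output with its element). One consul is always drawn from $B$ and the other from $C$, so $F$ is a genuine consular election rule.

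Weak viability, non-Marianness, and the absence of a weak dictator are all immediate from the behaviour of majority on two alternatives, and I would dispatch them quickly. Each of $a,b$ wins $G$ at the profile where every voter ranks it above its partner (similarly $c,d$ under $H$), giving weak viability; the profiles with a strict majority for the partner show that no alternative lies on every committee, so $F$ is non-Marian. For the absence of a weak dictator, fix a voter $i$ and an alternative $x$: place $x$ atop $P_i$ and have the other two voters rank the partner of $x$ (within the block $B$ or $C$ containing $x$) above $x$. That block's sub-election then elects the partner, so $\best(P_i,A)=x\notin F(P)$, and $i$ is not a weak dictator.

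The crux is SPO and SPP, which I would obtain from the general principle that the union of two strategy-proof social choice functions over disjoint alternative sets is always SPO and SPP; this is exactly the mechanism by which reducible rules dodge dictatoriality. Write $W=F(P)=\set{x,y}$ with $x=G(P|B)\in B$ and $y=H(P|C)\in C$, and let $W'=\set{x',y'}$ be the outcome after $i$ deviates to $P_i'$, where $x'=G(P_i'P_{-i}|B)$ and $y'=H(P_i'P_{-i}|C)$. The key observation is that $\best(P_i,W)$ and $\worst(P_i,W)$ each lie entirely in one of the two blocks. For SPO, if $\best(P_i,W')\succ_i\best(P_i,W)$ then the best element of $W'$ (either $x'$ or $y'$) strictly beats the matching element ($x$ or $y$) of $W$; since the matching pair are the $G$-outcomes (or $H$-outcomes) at $P$ and at $P_i'P_{-i}$, this is a profitable deviation for $G$ (or $H$), contradicting strategy-proofness. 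For SPP the same dichotomy applies: $\worst(P_i,W)$ sits in $B$ or $C$, and $\worst(P_i,W')$ strictly beating it forces the corresponding coordinate $x'$ or $y'$ to beat $x$ or $y$, again a manipulation of $G$ or $H$. Since majority on two alternatives is strategy-proof, $F$ satisfies both SPO and SPP, completing the construction. The only delicate step is this coordinatewise reduction; once it is in place, everything else is mechanical.
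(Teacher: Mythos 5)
Your proposal is correct and is exactly the paper's construction: the paper's entire proof is ``Divide $A$ into two halves and have majority voting on each,'' and your coordinatewise verification of SPO/SPP is the omitted routine check (it also matches the easy direction of Proposition~\ref{fact:reduciblesp}). No issues.
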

\begin{proof}
    Divide $A$ into two halves and have majority voting on each.
\end{proof}

It is true that for $|A|\geq 5$ every reducible, strategy-proof and viable consular election rule 
has a 
``partial'' dictator, as the Gibbard-Satterthwaite theorem ensures that one 
of the two 
social choice functions must be dictatorial. However, it will not have a weak dictator over 
the entire domain unless the same voter is the dictator of both social choice functions. 
Thus, in a very boring sense, the Duggan-Schwartz theorem fails. We can have
committee selection rules that are both strategy-proof and non-dictatorial.

It is worth noting that a reducible consular election rule is strategy-proof
in the Duggan-Schwartz sense precisely when the component functions are strategy-proof
in the classical sense.

\begin{proposition}\label{fact:reduciblesp}
    Let $F$ be a consular election rule that is reducible
    to social choice functions $G$ and $H$. Then $F$ satisfies SPP and SPO if and
    only if $G$ and $H$ are strategy-proof.
\end{proposition}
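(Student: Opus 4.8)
The plan is to prove the two implications separately. Write $F(P)=\set{g,h}$ where $g:=G(P|B)\in B$ and $h:=H(P|C)\in C$; since $B$ and $C$ are disjoint we always have $g\neq h$, so $\set{g,h}$ is a genuine two-element committee and, crucially, the $B$-winner and the $C$-winner never interfere with one another. Throughout I will use the elementary observation that if two alternatives dominate two others componentwise under a linear order, i.e. $g\succeq_i g'$ and $h\succeq_i h'$, then both extremes are preserved: $\best(P_i,\set{g,h})\succeq_i\best(P_i,\set{g',h'})$ and $\worst(P_i,\set{g,h})\succeq_i\worst(P_i,\set{g',h'})$. For the best, whichever of $g',h'$ attains the maximum is dominated by the corresponding $g$ or $h$, which is in turn dominated by the best of $\set{g,h}$; the worst is symmetric.

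For the direction ``$G,H$ strategy-proof $\Rightarrow$ SPP and SPO'', fix a voter $i$, a sincere ballot $P_i$ and a deviation $P_i'$, and write $F(P)=\set{g,h}$, $F(P_i'P_{-i})=\set{g',h'}$. The restricted profiles $P|B$ and $(P_i'P_{-i})|B$ differ only in voter $i$'s $B$-ballot, so strategy-proofness of $G$ gives $g\succeq_i g'$; likewise strategy-proofness of $H$ gives $h\succeq_i h'$. The componentwise-domination observation then yields $\best(P_i,F(P))\succeq_i\best(P_i,F(P_i'P_{-i}))$ and $\worst(P_i,F(P))\succeq_i\worst(P_i,F(P_i'P_{-i}))$, which are exactly $F(P)\succeq_i^O F(P_i'P_{-i})$ and $F(P)\succeq_i^P F(P_i'P_{-i})$. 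Hence $F$ is both SPO and SPP.

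For the converse I argue by contraposition: assuming, without loss of generality by the symmetry that swaps the roles of $B$ and $C$, that $G$ is manipulable, I build a profile at which $F$ can be manipulated by an optimist. Let $Q\in\CP(V,B)$, voter $i$ and a deviation $Q_i'$ witness the manipulation of $G$, so $g':=G(Q_i'Q_{-i})\succ_i g:=G(Q)$. Extend every ballot to a ballot over $A$ by appending all of $C$, in one fixed order, below all of $B$; call the resulting profile $P$, and let $P_i'$ agree with $P$ except that its $B$-part is $Q_i'$. Then $P$ and $P_i'P_{-i}$ have identical $C$-restrictions, so $H$ returns the same $h\in C$ in both, giving $F(P)=\set{g,h}$ and $F(P_i'P_{-i})=\set{g',h}$. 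Because $h$ sits below all of $B$ in $P_i$, we have $\best(P_i,F(P))=g$ and $\best(P_i,F(P_i'P_{-i}))=g'$ with $g'\succ_i g$, so voter $i$ optimistically prefers the deviation, contradicting SPO. Thus SPO forces $G$ (and, symmetrically, $H$) to be strategy-proof.

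The main obstacle is the lifting construction in the converse: one must choose the extension of the $B$-profile to $A$ so that (a) the component manipulation is reproduced verbatim while the other component's outcome is pinned down and left unchanged, and (b) the other component's winner is positioned so that a strict gain in a single coordinate becomes visible as a strict gain in $\best$, which is achieved by placing $C$ at the bottom; dually, placing $C$ at the top would expose the gain in $\worst$ and contradict SPP instead. Once the correct placement is identified, both the verification that $H$'s outcome is untouched and the final manipulation computation are routine.
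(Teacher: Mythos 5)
Your proof is correct, and half of it takes a genuinely different route from the paper's. For the direction that lifts a manipulation of $G$ to a manipulation of $F$, you and the paper do essentially the same thing: hold the $C$-component fixed and expose the strict $B$-gain as a gain in $\best$. The paper starts from an arbitrary extension of the $B$-profile and then invokes upwards monotonicity (\cref{lem:upmono}) to push $B$ to the top of every ballot; you build the profile with $C$ appended at the bottom from the outset, which is legitimate because reducibility gives $F(P)=G(P|B)\cup H(P|C)$ at \emph{every} profile, so the monotonicity detour is not needed. For the converse direction the arguments genuinely diverge. The paper asserts that a strategy-proof $G$ or $H$ must be a dictatorship, an imposed rule, or majority voting between two alternatives, and then declares the disjoint union of such rules unmanipulable; that classification is not exhaustive as stated (for instance, a non-onto strategy-proof function, or a non-majority monotone rule between two alternatives, fits none of the three descriptions), so the paper's argument for this half is loose. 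Your componentwise-domination observation --- that $g\succeq_i g'$ and $h\succeq_i h'$ force $\best(P_i,\set{g,h})\succeq_i\best(P_i,\set{g',h'})$ and $\worst(P_i,\set{g,h})\succeq_i\worst(P_i,\set{g',h'})$, since $B\cap C=\emptyset$ keeps the two coordinates from interfering --- proves exactly what is required directly from the definition of strategy-proofness of $G$ and $H$, with no appeal to any structure theorem. This is the cleaner and more robust route for that implication.
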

\begin{proof}
    Let $B\uplus C= A$, $F(P)=G(P|B)\cup H(P|C)$.
    
    Suppose $F$ satisfies SPP and SPO, and let voter $i$ have a profitable deviation from $Q_i$ to $Q_i'$ in $G$. That is,
    $G(Q_iQ_{-i})=a$, $G(Q_i'Q_{-i})=b$ and $b\succ_i a$. Let $P$ be any profile
    whatsoever over $A$ for which $P|B=Q$. Observe that $F(P)=\set{a,x}$, for
    some $x\in C$. Modify $P$ into $R$ by moving all the alternatives in $B$
    to the top of the ballots, without changing the order within $B$ and $C$.
    Applying \cref{lem:upmono} we see that $F(R)=F(P)$ -- $a$ remains a winning
    alternative because no alternative overtakes it, so the only way the winning
    committee can change is if some $y\in B$ replaces $x$, but $F(R)=\set{a,y}$
    is impossible because one element of $F(R)$ must come from $B$ and one from
    $C$, and $a$ already comes from $B$.
    
    Now consider what happens when voter $i$ deviates from $R_i$ to $R_i'$
    with the property that $R_i'|B=Q'$ and $R_i'|C=R_i|C$. Since $R_i'|B=Q'$,
    $G$ must elect $b$, and $R_i'|C=R_i|C$ implies $H$ still elects $x$.
    Thus $F(R_i'R_{-i})=\set{b,x}$. If $b\succ_i x$, this is an optimistic
    manipulation, and if $x\succ_i b$, this is a pessimistic manipulation, neither
    of which can take place.
    
    For the other direction, suppose $G$ and $H$ are strategy-proof. The only
    possible functions these could be are dictatorship, imposed rule, and
    majority voting between two alternatives. It is easy to see that a
    disjoint union of any of these cannot be manipulated by either an optimist
    or a pessimist.
\end{proof}

\section{The range of an election}\label{sec:graphs}

Our quest for an impossibility result has had a rocky start.
Marian elections only obtain a weak dictator with at least four alternatives, and reducible elections in general need at least
five alternatives for an even weaker notion of a partial dictator.

In response to this, we can argue that so far we have only considered
a somewhat degenerate class of committee selection rules -- sure, we
can obtain a consular election rule by gluing two social choice functions
together, and in some cases it may even be natural to do so, but the
behaviour of such an election is determined entirely by the behaviour
of its components, and studying it sheds no light on the mathematics
of selecting a committee in general. What we would like to do
is to understand elections that are not reducible.

Intuitively, reducible rules occur because their range does not cover sufficiently
many possible committees. It is clear that, for $|A|\geq 3$, an onto rule is irreducible.
It would have been nice had this been the end of the story -- perhaps the combination of
strategy-proofness, viability and irreducibility are sufficiently strong conditions to force
every committee to be viable?

We start with a counterexample.

\begin{fact}\label{fact:4notonto}
    There exist irreducible consular election rules satisfying SPP, SPO and weak viability that are not onto.
\end{fact}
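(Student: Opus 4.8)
The goal is to exhibit a single irreducible consular election rule on some alternative set $A$ (with $|A|\geq 3$) that satisfies SPP, SPO and weak viability, yet whose range omits at least one pair from $S_2(A)$, witnessing non-ontoness.

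The plan is to build an explicit example rather than argue abstractly. First I would fix the smallest convenient alternative set, say $A=\set{a,b,c}$ with $|A|=3$, so that $S_2(A)=\set{\set{a,b},\set{a,c},\set{b,c}}$ has exactly three elements; to be non-onto the rule need only have a range of size two. Weak viability forces every alternative to appear on some winning committee, so each of $a,b,c$ must occur in the range, and with a range of just two committees this pins the structure down considerably — the two committees in the range must between them cover all three alternatives (e.g. $\set{a,b}$ and $\set{a,c}$, which together make $a$ a \emph{Marius}). This is exactly why I expect the natural three-alternative witness to be a Marian rule: with $a$ always elected, the remaining seat is decided by a strategy-proof choice between $b$ and $c$, for instance majority vote with lexicographic tie-breaking, mirroring the construction already used in \cref{ss:marius}.

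The substance of the argument is then threefold. (i) \emph{Non-onto}: by design the committee $\set{b,c}$ never appears, since $a$ is always elected. (ii) \emph{Weak viability and SPP/SPO}: these follow from the analysis of Marian rules — the second-seat contest between $b$ and $c$ is a strategy-proof social choice function, and \cref{prop:Mariandictator}'s proof (together with the remark that for a singleton an optimistic and a pessimistic manipulation coincide) shows the glued rule inherits SPP and SPO. (iii) \emph{Irreducibility}: this is the one genuinely new thing to check, and it is the crux of the fact, since a Marian rule of the obvious kind \emph{is} reducible ($\set{a}$ joined to a choice from $\set{b,c}$). So the three-alternative Marian template does not actually work, and the honest construction must be irreducible by design.

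Hence the main obstacle is reconciling a \emph{small} range (for non-ontoness) with \emph{irreducibility}, which pushes against having a fixed alternative always present. I would therefore move to $|A|=4$, say $A=\set{a,b,c,d}$, and engineer a rule whose range is a proper subset of $S_2(A)$ that nonetheless cannot be written as $G(P|B)\cup H(P|C)$ for any bipartition $A=B\uplus C$. Concretely I would aim for a range forming a connected but non-complete graph on the four vertices (in the graph model of \cref{sec:graphs}, where committees are edges), chosen so that no bipartition of the vertex set has every range-edge crossing it in the rigid one-from-each-side pattern a reduction demands; weak viability is the requirement that every vertex is covered, and non-ontoness is the requirement that some edge is missing. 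The verification then has two parts: checking SPP/SPO for the specific rule (routine, via the monotonicity lemmata of the previous section, once the rule is specified by an explicit voting procedure on each viable pair), and checking irreducibility by ruling out each of the finitely many bipartitions of the four alternatives — the latter being a short finite case analysis that I expect to be the delicate step, since it is where the interaction between the range's graph structure and the definition of reducibility must be made to bite.
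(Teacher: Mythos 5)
You have correctly diagnosed the shape of the problem: a three-alternative witness is impossible (weak viability plus a range of size two forces a Marian, hence reducible, rule), so one must go to $|A|=4$ and find a rule whose range, viewed as a graph, covers every vertex, omits an edge, and is non-bipartite. This is exactly the paper's choice: its witness has range $S_2(\set{a,b,c,d})\setminus\set{\set{a,d}}$, i.e.\ $K_4$ minus one edge. But your proposal stops at the point where the actual mathematical work begins. An existence claim needs a witness, and you never define the rule: you say you would ``engineer'' one and that checking SPP/SPO is ``routine, via the monotonicity lemmata of the previous section.'' That appeal is circular --- the monotonicity lemmata (\cref{lem:upmono}, \cref{lem:downmono}) are \emph{consequences} of SPP and SPO for an arbitrary rule, not tools for certifying that a particular candidate rule is strategy-proof. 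To verify SPP/SPO you must exhibit the rule and check directly that no ballot change yields an optimistically or pessimistically better outcome. The paper does this with a one-voter rule: the voter receives his top two choices unless those are $\set{a,d}$, in which case he receives his first and third. The verification is short but is the entire content of the proof: in every non-exceptional profile the voter already holds the global optimum under both orders, and in the exceptional profiles he already has his top choice (so no optimistic manipulation) while the only pessimistically preferred pair is $\set{a,d}$, which is outside the range (so no pessimistic manipulation).

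You have also inverted which step is delicate. Irreducibility is immediate once the range contains a triangle: a reduction $F(P)=G(P|B)\cup H(P|C)$ forces every range pair to take one element from $B$ and one from $C$, so the range graph would be bipartite, contradicting the odd cycle. No case analysis over bipartitions is needed. The step that genuinely requires care --- and that is missing from your proposal --- is producing a concrete rule with the desired range and proving it immune to both optimists and pessimists.
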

\begin{proof}
    Let $A=\set{a,b,c,d}$, $|V|=1$. Have $F$ behave as follows:
    \begin{enumerate}
    \item
    If the voter's first two choices are $\set{a,b},\set{a,c},\set{b,c}$ or $\set{b,d}$, the voter gets his first two choices.
    \item
    If the voter's first two choices are $\set{a,d}$, he gets his first and third choice.
    \end{enumerate}
    
    For intuition, suppose $\set{b,c}$ are apples, $\set{a,d}$ oranges,
    and the voter is asked to pick a pair that contains at least one apple.
    
    The rule $F$ clearly satisfies weak viability. To see that it is strategy-proof,
    observe that the only potentially manipulable profiles involve the voter
    ranking $a$ first and $d$ second, or $d$ first and $a$ second -- in 
    every other case the voter already gets his top two choices elected.
    
    Without loss of generality, suppose the voter's preference order is
    $a\succ_1 d\succ_1 b\succ_1 c$. The current election outcome is
    $\set{a,b}$. The voter cannot manipulate optimistically, because
    he already gets his top choice elected. The only member of $S_2(A)$
    that is pessimistically preferred to $\set{a,b}$ is $\set{a,d}$,
    and that is not in the range of the election, so the voter cannot
    manipulate pessimistically either.
\end{proof}
\begin{corollary}
    There exist irreducible consular election rules satisfying SPP, SPO and weak viability that do not have a strong dictator.
\end{corollary}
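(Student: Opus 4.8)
The plan is to reuse the very rule $F$ constructed in the proof of \cref{fact:4notonto} and show it admits no strong dictator. The cleanest route is through ontoness: I would first record the elementary observation that a strong dictator forces the rule to be onto, and then simply invoke the fact that the rule of \cref{fact:4notonto} is \emph{not} onto.

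Concretely, the key step is the following. Suppose a consular election rule $F$ has a strong dictator $i$, i.e. $F(P)$ always consists of the top two choices of $i$. Fix an arbitrary committee $\set{x,y}\in S_2(A)$ and take any profile $P$ in which $x$ and $y$ occupy the top two positions of $P_i$ (the remaining ballots and the internal order of $x,y$ being irrelevant). Since $i$ is a strong dictator, $F(P)$ equals $i$'s top two alternatives, which is precisely $\set{x,y}$. As $\set{x,y}$ was arbitrary, every element of $S_2(A)$ lies in the range of $F$, so $F$ is onto. Taking the contrapositive, \emph{any} consular election rule that is not onto has no strong dictator.

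With this in hand the corollary is immediate: \cref{fact:4notonto} furnishes an irreducible consular election rule satisfying SPP, SPO and weak viability that fails to be onto (the committee $\set{a,d}$ never appears in its range), and by the observation above this rule can therefore have no strong dictator. I do not expect any real obstacle here, since the argument uses nothing beyond the definition of a strong dictator as a voter whose top $k$ choices always form the committee. If one prefers a self-contained argument specific to the witness, one can instead note that the single voter would be the only candidate strong dictator, yet on the profile $a\succ_1 d\succ_1 b\succ_1 c$ the rule returns $\set{a,b}$ rather than that voter's top two alternatives $\set{a,d}$, so no strong dictator exists.
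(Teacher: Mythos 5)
Your proposal is correct and matches the paper's intent: the corollary is stated without proof precisely because it follows immediately from \cref{fact:4notonto} via the observation that a strong dictator would force every pair in $S_2(A)$ into the range, whereas $\set{a,d}$ is excluded. Your self-contained check on the profile $a\succ_1 d\succ_1 b\succ_1 c$ is a valid (and slightly more explicit) way to make the same point.
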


It turns out, then, that the range of a strategy-proof election can be a non-trivial
affair, and we need to understand this range if we are to have any
hope of determining how these elections behave.

The key observation in this section is that the range of a consular election rule can be visualised
as a graph, with the alternatives playing the role of vertices and an edge existing between
$a$ and $b$ if and only if $\set{a,b}$ is a committee in the range of the funcion.

\begin{definition}
    The \emph{range graph} of a consular election rule $F$, $\CG(F)$, is a graph with vertex set $A$ and edge set $Range(F)$.
\end{definition}

\begin{fact}
    Let $F$ be a consular election rule. Then $F$ is weakly viable if and only if $\CG(F)$ has no isolated vertices, and $F$ is onto if and only if $\CG(F)$ is a complete graph.
\end{fact}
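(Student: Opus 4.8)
The plan is to treat this as a direct unpacking of the relevant definitions, handling the two biconditionals separately. No monotonicity or strategy-proofness machinery is needed, since the statement concerns only the range of $F$ and the combinatorial structure of $\CG(F)$.

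For the first biconditional, I would begin from the observation that every output of a consular election rule is a set of size exactly two. Hence the condition ``$a \in F(P)$ for some profile $P$'' is equivalent to the existence of some $b \neq a$ with $\set{a,b} \in Range(F)$: if $a \in F(P)$ then $F(P) = \set{a,b}$ for a unique $b \neq a$, and conversely any committee $\set{a,b}$ in the range is realised as $F(P)$ for some $P$ and contains $a$. By definition of the range graph, ``$\set{a,b} \in Range(F)$'' is precisely ``$\set{a,b}$ is an edge of $\CG(F)$ incident to $a$.'' Chaining these, $a$ lies on some winning committee if and only if $a$ is not an isolated vertex of $\CG(F)$. Since weak viability asserts exactly that every $a \in A$ lies on some winning committee, it holds if and only if no vertex of $\CG(F)$ is isolated.

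For the second biconditional, I would note that $F$ being onto means its range is all of $S_2(A)$, i.e.\ every two-element subset $\set{a,b}$ of $A$ belongs to $Range(F)$. Since the edge set of $\CG(F)$ is by definition $Range(F)$, this is the same as saying every pair of distinct vertices is joined by an edge, which is the definition of a complete graph. Both implications are then immediate once the definitions are aligned.

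I do not expect a genuine obstacle here: the argument is a definition-chase. The only point requiring a moment of care is the passage in the first part between the profile-level statement ``$a$ appears on a winning committee'' and the graph-level statement ``$a$ has an incident edge,'' which is an exact equivalence rather than a one-directional implication precisely because committees have size two and an edge of $\CG(F)$ records exactly the unordered pair $\set{a,b}$.
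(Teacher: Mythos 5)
Your argument is correct and is exactly the definition-chase the paper intends: the paper states this fact without proof, treating it as immediate from the definitions of weak viability, ontoness, and the range graph. Your only extra care (that size-two committees make ``$a$ on a winning committee'' exactly equivalent to ``$a$ has an incident edge'') is the right point to note, and nothing further is needed.
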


Reducible elections are easily interpreted in this setting.

\begin{fact}
    If $F$ is reducible then $\CG(F)$ is bipartite. If $\CG(F)$ is bipartite and $F$ satisfies SPP and SPO, then $F$ is reducible.
\end{fact}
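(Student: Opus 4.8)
The statement has two directions, and the first is immediate: if $F$ is reducible, say $F(P)=G(P|B)\cup H(P|C)$ for social choice functions $G$ on $B$ and $H$ on $C$, then every elected committee consists of one alternative drawn from $B$ and one drawn from $C$ (they are distinct since $B\cap C=\emptyset$). Hence every edge of $\CG(F)$ joins $B$ to $C$, and the bipartition $(B,C)$ witnesses bipartiteness. The substance lies in the converse, so the plan is to devote the work to it.

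Suppose $\CG(F)$ is bipartite with parts $B$ and $C$, and that $F$ satisfies SPP and SPO. Bipartiteness says every committee in the range meets each of $B$ and $C$ in exactly one alternative, so for each profile $P$ I can write $F(P)=\set{g(P),h(P)}$ with $g(P)\in B$ and $h(P)\in C$ uniquely determined. To exhibit $F$ as reducible it suffices to show $g(P)$ depends only on $P|B$ and $h(P)$ only on $P|C$: then I may define $G(P|B):=g(P)$ and $H(P|C):=h(P)$ (extending arbitrarily to realise every restricted profile) and read off $F(P)=G(P|B)\cup H(P|C)$.

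The heart of the argument, and the step I expect to be the real work, is the claim that shifting a $C$-alternative within a single ballot never changes $g$. Fix $P$ with $F(P)=\set{a,x}$, $a\in B$, $x\in C$, and let $P'$ arise by moving one alternative $s\in C$ up or down by a single position in voter $i$'s ballot. By upwards monotonicity (\cref{lem:upmono}) and downwards monotonicity (\cref{lem:downmono}), the committee either does not change, or $s$ enters the committee displacing a current member, or (if $s$ is already present) $s$ leaves and is replaced by some alternative. This is exactly where SPP and SPO are used. Now comes the decisive use of bipartiteness: in every non-trivial case the new committee $F(P')$ is again an edge of $\CG(F)$, hence again has one endpoint in each of $B$ and $C$; since the alternative entering or leaving is $s\in C$, the alternative it is exchanged with must also lie in $C$, so the $B$-endpoint $a$ is untouched. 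Thus $g(P')=g(P)$ in all cases, and by the symmetric argument moving a $B$-alternative preserves $h$.

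It then remains to upgrade this local invariance to the global statement that $g$ is constant on each fibre $\set{P : P|B=Q}$. For this I will use the routine combinatorial fact that any two ballots inducing the same order on $B$ are connected by a sequence of adjacent transpositions, each swapping a pair that contains at least one $C$-alternative: push all $C$-alternatives to the bottom, sort them into a canonical order, then slide them back into their target positions, so that no step ever transposes two $B$-alternatives and $P|B$ is preserved throughout. Each single-position move along this path is an instance of the previous paragraph, so $g$ is invariant along the whole path; hence $g$ depends only on $P|B$, and dually $h$ only on $P|C$. Defining $G$ and $H$ as above then displays $F$ as the disjoint union $G\cup H$, so $F$ is reducible. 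The connectivity fact is elementary bookkeeping; the genuine obstacle is the monotonicity-plus-bipartiteness step, and the main thing to get right there is to confirm that \emph{every} way the committee can shift (entry, exit, and the directions of the move) forces the exchanged alternative into the same part $C$ as $s$.
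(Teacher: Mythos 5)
Your proof is correct and follows essentially the same route as the paper: the paper also defines the component function on $B$ by fixing the $C$-alternatives (it appends them at the bottom of every ballot in a canonical order) and then uses the monotonicity lemmata together with bipartiteness of the range to argue that sliding $C$-alternatives through the ballots can only ever exchange the $C$-member of the winning committee, so the $B$-winner depends only on $P|B$. Your version merely makes two points more explicit than the paper's terse ``this should only change $y$'': the case analysis showing the exchanged alternative must lie in the same part as the moved one, and the path-connectivity of each fibre $\set{P: P|B=Q}$ under such moves.
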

\begin{proof}
    The first is obvious -- $A$ can be partitioned into $B$ and $C$ such that social choice functions $G$ and $H$ select elements from $B$ and $C$ respectively. As such, there can be no edge within $B$ or within $C$.
    
    For the second part, let $B$ and $C$, $B\uplus C=A$, partition $A$ into the two halves  of $\CG(F)$. Define $G:\CP|B\ria B$ to be the following social choice function:
    $$G(P)=F(P^C) | B,$$
    where $P^C$ is obtained from $P$ by appending the elements from $C$ in lexicographic order at the end of every ballot in $P$. That is, $G$ completes a profile over $B$ into a profile over $A$ by adding $C$ at the end of the ballots, computes $F$, and returns the element of $B$ chosen by $F$. Define  $H:\CP|C\ria C$ analogously. We claim that :
    $$F(P)=\set{G(P|B),H(P|C)}.$$
    Suppose, for contradiction, that there exists a $P$ such that $F(P)\neq\set{G(P|B),H(P|C)}$. Without loss of generality, let $G$ be the culprit. That is, $F(P)=\set{x\in B, y\in C}$, $x\neq G(P|B)$. Modify $P$ by moving the elements in $C$ until they are at the bottom of every ballot, in lexicographical order. By monotonocity, this should only change $y$. However, now we have a profile that must force $F$ to agree with $G$.
\end{proof}
\begin{corollary}\label{cor:bipartite}
    Let $F$ satisfy SPP and SPO. Then $F$ is reducible if and only if $\CG(F)$ is bipartite.
\end{corollary}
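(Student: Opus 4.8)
The plan is to read the corollary off directly from the preceding Fact, which already contains both halves of the equivalence. The forward direction---if $F$ is reducible then $\CG(F)$ is bipartite---is the first assertion of the Fact and needs no assumptions on $F$: a reduction supplies a partition $A=B\uplus C$ with every elected pair meeting both $B$ and $C$, so $(B,C)$ witnesses bipartiteness of $\CG(F)$. The backward direction---if $\CG(F)$ is bipartite then $F$ is reducible---is the second assertion of the Fact, and this is where SPP and SPO are used. Since $F$ is assumed to satisfy both, both implications are available and combine to give ``reducible $\iff$ bipartite''.

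There is no real obstacle at the level of the corollary; all the substance lives in the Fact. Were I to reprove the biconditional from scratch rather than cite the Fact, the only delicate point is the backward direction. Given a bipartition $(B,C)$ of $\CG(F)$, one manufactures candidate component functions $G$ and $H$ by completing a partial profile over one part to a full profile---appending the other part at the bottom of every ballot in a fixed lexicographic order---then running $F$ and reading off the coordinate lying in the relevant part. The content is the claim $F(P)=\set{G(P|B),H(P|C)}$ for every profile $P$, and the natural argument is the monotonicity one used in the Fact: pushing all of $C$ to the bottom of every ballot leaves the $B$-coordinate of the winning pair untouched (by \cref{lem:downmono}, applied element by element, since the bipartite structure forces the vacated slot to be refilled from $C$), so that coordinate must already equal $G(P|B)$, and symmetrically for $H$. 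This monotonicity step is the only place the hypotheses enter, and it is the part I would check most carefully.

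In sum, the corollary is immediate: under SPP and SPO, each of the two implications in the Fact applies, so reducibility of $F$ and bipartiteness of $\CG(F)$ are equivalent.
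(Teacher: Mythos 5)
Your proof is correct and matches the paper exactly: the corollary is stated without a separate proof precisely because it is the conjunction of the two implications in the preceding Fact, with SPP and SPO supplying the hypothesis needed for the backward direction. Your supplementary sketch of how the backward direction of that Fact is actually proved (completing partial profiles, then using monotonicity to show the $B$-coordinate is unchanged when $C$ is pushed to the bottom) also agrees with the paper's argument.
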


\begin{example}
    In general (without strategy-proofness) $F$ can be irreducible and $\CG(F)$ bipartite. Consider $F$ with
    $A=\set{a,b,c,d}$ and one voter, defined as follows:
    \begin{enumerate}
        \item
            If the voter prefers $c$ to $d$, then $a$ is on the committee. Else, $b$ is
            on the committee.
        \item
            If the voter prefers $a$ to $b$, then $c$ is on the committee. Else,
            $d$ is on the committee.
    \end{enumerate}
    The possible committees are $\set{a,c},\set{a,d},\set{b,c}$ and $\set{b,d}$.
    As such, $\CG(F)$ is bipartite. However, $F$ is irreducible -- the decision
    on whether to include $a$ on the committee is contingent on the relative
    ranking of $c$ and $d$, so there can be no $G:\CP(V,\set{a,b})\ria\set{a,b}$
    that makes the decision.
\end{example}

\begin{fact}
    Let $F$ be a weakly viable reducible consular election rule satisfying SPP and SPO. 
    Then $\CG(F)$ is a complete bipartite graph. 
\end{fact}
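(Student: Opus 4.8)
The plan is to read off the range of $F$ from its reduction and then use weak viability to pin down the ranges of the two component functions. By reducibility we may write $A=B\uplus C$ with $F(P)=G(P|B)\cup H(P|C)$; since $F$ is a weakly viable consular rule, the meaningful case is $j=1$, i.e. $G\colon\CP(V,B)\ria B$ and $H\colon\CP(V,C)\ria C$ are single-winner social choice functions over nonempty blocks $B,C$ (a reduction with $j\in\{0,2\}$ would leave an entire nonempty block whose alternatives are never selected, contradicting weak viability, cf. \cref{fact:reduciblesp}). By \cref{cor:bipartite}, $\CG(F)$ is bipartite, and the bipartition is exactly $(B,C)$: there is no edge inside $B$ or inside $C$, because every winning committee contains one element chosen by $G$ from $B$ and one chosen by $H$ from $C$. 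It therefore remains only to show that \emph{every} cross pair $\set{x,y}$ with $x\in B$, $y\in C$ is realized, i.e. that $\CG(F)$ contains all cross edges.

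The crucial observation is that $G$ depends only on $P|B$ and $H$ only on $P|C$, and these two restrictions can be prescribed \emph{independently}: given any linear order $\sigma$ on $B$ and any linear order $\tau$ on $C$, placing all of $B$ above all of $C$ and ordering each block by $\sigma$, respectively $\tau$, produces a profile $P$ with $P|B=\sigma$ and $P|C=\tau$. From this I would first deduce that $G$ and $H$ are onto. Indeed, an alternative $a\in B$ appears on a winning committee if and only if $a=G(P|B)$ for some $P$ (it cannot enter through $H$, whose output lies in the disjoint set $C$), which by the independence above happens if and only if $a$ lies in the range of $G$. Since $F$ is weakly viable, $\CG(F)$ has no isolated vertex, so every $a\in B$ is in the range of $G$ and, symmetrically, every $a\in C$ is in the range of $H$; thus $G$ and $H$ are onto.

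Finally, for arbitrary $x\in B$ and $y\in C$, choose by ontoness orders $\sigma$ on $B$ with $G(\sigma)=x$ and $\tau$ on $C$ with $H(\tau)=y$, and combine them into a single profile $P$ as above. Then $F(P)=\set{G(P|B),H(P|C)}=\set{x,y}$, the two coordinates being automatically distinct since $B\cap C=\emptyset$. Hence every cross pair lies in the range of $F$, and together with the absence of within-block edges this shows $\CG(F)$ is the complete bipartite graph on parts $B$ and $C$.

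I expect no serious obstacle here: the entire argument is a bookkeeping layer on top of the reducibility structure, and the one step that carries any content — the independence of the restrictions $P|B$ and $P|C$ and its use to realize a prescribed pair simultaneously — is genuinely elementary (interleave the two orders). The only point needing a little care is ruling out the degenerate reductions with $j\in\{0,2\}$, which is immediate from weak viability.
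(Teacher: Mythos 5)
Your proof is correct, but it takes a genuinely different route from the paper's. The paper argues via strategy-proofness: it takes a profile where $x\in B$ wins together with some $z\in C$, moves $y\in C$ to the top of every ballot, and then uses unanimity (itself a consequence of SPP, SPO and weak viability via upwards monotonicity) together with \cref{lem:upmono} and bipartiteness to conclude that $y$ must displace $z$ rather than $x$, producing the edge $\set{x,y}$. You instead work directly with the decomposition $F(P)=G(P|B)\cup H(P|C)$: since the restrictions $P|B$ and $P|C$ can be prescribed independently, the range of $F$ is exactly the set of cross pairs $\set{G(\sigma),H(\tau)}$, and weak viability forces both component functions to be onto, so every cross pair is realised. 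The notable consequence is that your argument never invokes SPP or SPO at all --- the conclusion is a purely combinatorial consequence of reducibility plus weak viability, so you have in fact proved a slightly more general statement than the one in the paper. Your handling of the degenerate splits $j\in\set{0,2}$ is the right thing to flag (the citation of \cref{fact:reduciblesp} for it is not quite apt, since that proposition concerns the equivalence of strategy-proofness notions, but the point itself --- that such a split would strand a nonempty block and violate weak viability --- is sound). The paper's route has the advantage of staying inside the monotonicity machinery it has already built and of not needing to reason about the component functions; yours buys generality and avoids any appeal to unanimity or monotonicity.
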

\begin{proof}
    Let $A=B\uplus C$, the two halves of a bipartite graph. Suppose for contradiction 
    that there exist $x\in B,y\in C$ such that $x$ and $y$ are not isolated, 
    but $(x,y)\notin\CG(F)$. Let $F(P)=\set{x,z}$. Move $y$ to the top of each ballot. 
    By unanimity, $y$ must be a winner. By reducibility, it must replace $z$.
\end{proof}
\begin{corollary}\label{cor:Marianacyclic}
    Let $F$ be a weakly viable reducible consular election rule satisfying SPP and SPO.
    Then $F$ is a Marian election if and only if $\CG(F)$ is acyclic.
\end{corollary}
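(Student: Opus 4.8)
The plan is to translate the whole statement into a fact about complete bipartite graphs and read it off. By the preceding Fact, $\CG(F)$ is complete bipartite; write its two parts as $B$ and $C$, so that $\set{x,y}\in Range(F)$ for every $x\in B$, $y\in C$. Since weak viability forbids isolated vertices, both $|B|\geq 1$ and $|C|\geq 1$, and thus $\CG(F)=K_{|B|,|C|}$ with both sides nonempty.

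The bridge between the two notions is a reformulation of Marianism in graph terms. An alternative $a$ is a Marius exactly when $a\in F(P)$ for all $P$, i.e.\ when $a$ lies on every committee in $Range(F)$; as $Range(F)$ is precisely the edge set of $\CG(F)$, this is the same as saying that $a$ is incident to every edge. So it suffices to prove: a complete bipartite graph $K_{|B|,|C|}$ (both parts nonempty) has a vertex meeting every edge if and only if it is acyclic.

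For the forward direction, I would suppose $a$ is a Marius; without loss of generality $a\in B$. If $|B|\geq 2$, choose $b\in B\setminus\set{a}$ and any $c\in C$: then $\set{b,c}$ is an edge not containing $a$, a contradiction. Hence $|B|=1$ and $\CG(F)=K_{1,|C|}$ is a star, which is acyclic. For the converse I would use the elementary dichotomy that $K_{m,n}$ with $m,n\geq 1$ is acyclic iff $\min(m,n)=1$ --- if both parts had at least two vertices we would find a $4$-cycle $b_1c_1b_2c_2$. Acyclicity therefore gives, say, $B=\set{a}$, whence every edge of the complete bipartite graph is incident to $a$, so $a$ lies on every committee and $F$ is Marian.

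The argument is short once the preceding Fact is in hand; the only points needing care are the appeal to weak viability to guarantee that both parts are nonempty (so that the ``$4$-cycle'' dichotomy is exhaustive) and the identification of $Range(F)$ with the edge set, which is what lets ``$a$ on every committee'' become ``$a$ meets every edge''. I expect no genuine obstacle beyond this bookkeeping.
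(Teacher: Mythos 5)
Your proof is correct and follows essentially the same route as the paper: both rest on the preceding Fact that $\CG(F)$ is complete bipartite and on the observation that the only acyclic complete bipartite graph (with both parts nonempty) is a star, whose centre is precisely a Marius. You simply make explicit the details --- the identification of a Marius with a vertex meeting every edge, and the $4$-cycle dichotomy --- that the paper's one-line proof leaves implicit.
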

\begin{proof}
    Irreducible elections are non-bipartite, hence they contain an odd cycle. The
    only complete bipartite graph that is acyclic is a tree of depth 1.
\end{proof}

Now that we know what the range of a reducible election looks like,
we of course know exactly what the strategy-proof irreducible elections are: simply
those $F$ for which $\CG(F)$ has an odd cycle.
This does not give us a lot to work with, so we must look further.
The proof above relied on the fact that strategy-proof elections are unanimous, which
allows us to prove that certain edges in the range graph must exist by considering
the profiles where all voters rank some alternative first. Rather than
using strategy-proofness in an ad-hoc way through this section,
we can use it to
prove a much stronger property about the range graphs of
strategy-proof elections.

\begin{definition}
    We say a graph $G$ satisfies \emph{edge-connectivity} if, whenever
    $\set{a,b}$ and $\set{c,d}$ are non-incident edges,
    we have:
    \begin{enumerate}
        \item 
            $\set{a,c}$ or $\set{a,d}$ is an edge.
        \item
            $\set{b,c}$ or $\set{b,d}$ is an edge.
    \end{enumerate}
    Note that the relation is symmetric. Namely, if $\set{a,b}$
    and $\set{c,d}$ are non-incident edges, then so are $\set{c,d}$
    and $\set{a,b}$. Thus, edge connectivity also implies that:
    \begin{enumerate}
        \item 
            $\set{a,c}$ or $\set{b,c}$ is an edge.
        \item
            $\set{a,d}$ or $\set{b,d}$ is an edge.
    \end{enumerate}
    
    It is easier to visualise what this property entails by observing that
    a graph is edge-connected if and only if it does
    not contain any of the following as induced subgraphs:
    
    \begin{center}
    \begin{tabular}{c c c}
    \begin{tikzpicture}[auto,
                        thick]
    
      \node (a) {$1$};
      \node (b) [right of=a] {$2$};
      \node (c) [below of=a] {$3$};
      \node (d) [below of=b] {$4$};
    
      \path[every node/.style={font=\sffamily\small}]
        (a) edge node {} (b)
        (c) edge node {} (d)
        ;
    \end{tikzpicture}
    &
    \begin{tikzpicture}[auto,
                        thick]
    
      \node (a) {$1$};
      \node (b) [right of=a] {$2$};
      \node (c) [below of=a] {$3$};
      \node (d) [below of=b] {$4$};
    
      \path[every node/.style={font=\sffamily\small}]
        (a) edge node {} (b)
        (b) edge node {} (d)
        (c) edge node {} (d)
        ;
    \end{tikzpicture}
    &
    \begin{tikzpicture}[auto,
                        thick]
    
      \node (a) {$1$};
      \node (b) [right of=a] {$2$};
      \node (c) [below of=a] {$3$};
      \node (d) [below of=b] {$4$};
    
      \path[every node/.style={font=\sffamily\small}]
        (a) edge node {} (b)
        (b) edge node {} (c)
        (c) edge node {} (d)
        (b) edge node {} (d)
        ;
    \end{tikzpicture}
    \end{tabular}
    \end{center}

\end{definition}

\begin{example}
    The Lex Licinia Sextia was a series of laws passed in 367 BC to address the disproportionate
    dominance patricians enjoyed in the Roman state. Under the new laws
    at least one of the two consuls of the Roman republic was
    required to be plebeian. Thus the valid configurations of the consulship were: a patrician
    and a plebeian, or two plebs (Jupiter forbid). The range graph of this election would thus
    consist of a clique $X$ of plebeians, and a totally disconnected set $Y$ of patricians with the property that: for $x\in X$ and $y\in Y$, $\set{x,y}$ is an edge.
    This graph satisfies edge-connectivity. 
    Other families of edge-connected graphs include the complete and complete bipartite graphs.   
    In \cref{fig:edgeconnected} we illustrate all the edge-connected
    graphs of orders 4, 5 and 6 that are neither complete nor
    bipartite. This includes the plebeian/patrician family as well
    as others.
\end{example}

\begin{figure}
    \begin{tabular}{ c c c}
         &\includegraphics[scale=0.45]{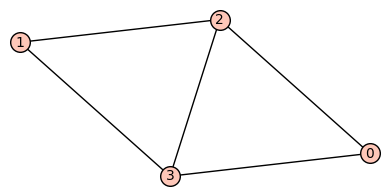}&\\
         \includegraphics[scale=0.45]{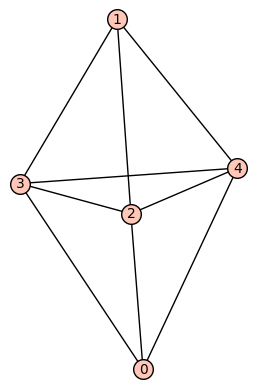}&\includegraphics[scale=0.45]{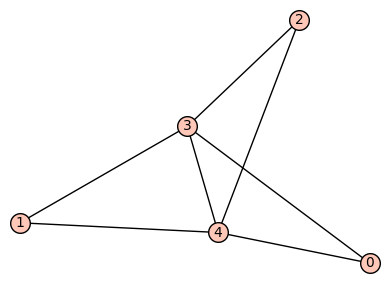}&\includegraphics[scale=0.45]{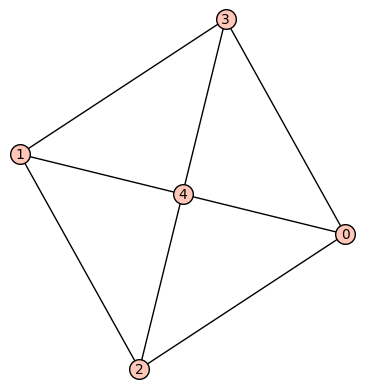}\\
         \includegraphics[scale=0.45]{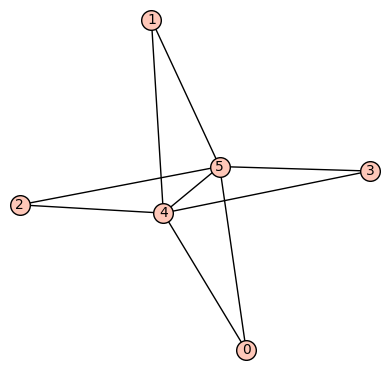}&\includegraphics[scale=0.45]{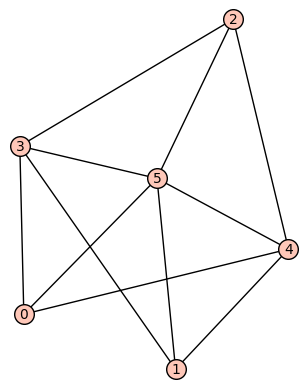}&\includegraphics[scale=0.45]{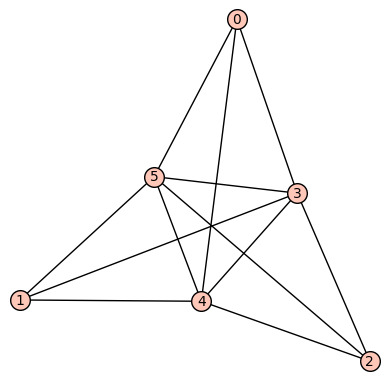}\\
         \includegraphics[scale=0.45]{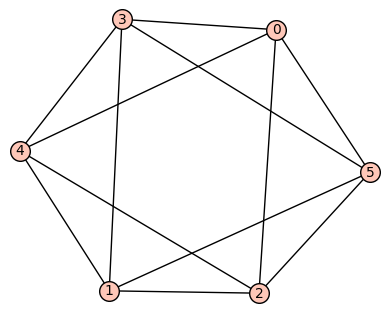}&\includegraphics[scale=0.45]{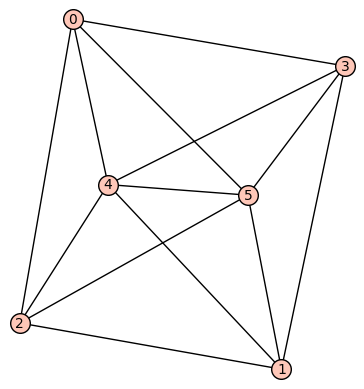}&\includegraphics[scale=0.45]{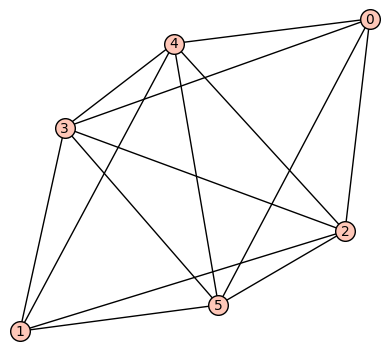}
    \end{tabular}
    
    \caption{All edge-connected graphs of orders 4, 5 and 6 that are neither
    complete nor bipartite.}\label{fig:edgeconnected}
\end{figure}

Edge-connectivity is not merely a consequence of strategy-proofness,
it is the closest we can get to a characterisation of the property
if we focus on the range of an election alone. Crucial to this
is the fact that if the range of an election is edge-connected,
then any given voter will be able to identify a unique ``best'' committee:
a pair which is neither pessimistically nor optimistically dominated
by any other pair in the range.

\begin{definition}
    Consider an $X\sub S_2(A)$, and a voter $i$. Voter $i$'s
    \emph{favourite committee} in $X$, if it exists, is an element
    that is maximal under $\succeq_i^P$ and $\succeq_i^O$ over $X$.
\end{definition}

\begin{proposition}\label{fact:favcommittee}
    Suppose $\CG(F)$ is edge-connected.
    Then each voter has a unique favourite committee in $\CG(F)$.
\end{proposition}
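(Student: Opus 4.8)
The plan is to decouple the two maximality requirements hidden in the definition of a favourite committee and treat them one at a time. Write $a^*$ for the $\succ_i$-highest alternative that is incident to some edge of $\CG(F)$ (this exists since $F$ has nonempty range). Since any edge $\set{a^*,y}$ through $a^*$ has $\best(P_i,\set{a^*,y})=a^*$, and no edge can have a best element ranked above $a^*$, a committee is maximal under $\succeq_i^O$ precisely when it contains $a^*$. So the optimistically maximal committees are exactly the edges incident to $a^*$, and among these the natural candidate for the favourite is $\set{a^*,b'}$, where $b'$ is the $\succ_i$-highest neighbour of $a^*$ in $\CG(F)$.

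The crux---and the step where edge-connectivity is essential---is to show that $\set{a^*,b'}$ is also maximal under $\succeq_i^P$; equivalently, that $\worst(P_i,\cdot)$ cannot exceed $b'$ on any edge. I would argue by contradiction: suppose some edge $\set{c,d}$ has $\worst(P_i,\set{c,d})\succ_i b'$, i.e.\ $c\succ_i b'$ and $d\succ_i b'$. First I check that $\set{a^*,b'}$ and $\set{c,d}$ are non-incident: $a^*\notin\set{c,d}$ (otherwise the other endpoint of $\set{c,d}$ would be a neighbour of $a^*$ ranked above $b'$), and $b'\notin\set{c,d}$ (otherwise $\worst(P_i,\set{c,d})\preceq_i b'$). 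Edge-connectivity applied to these two non-incident edges then forces $\set{a^*,c}$ or $\set{a^*,d}$ to be an edge, exhibiting a neighbour of $a^*$ strictly above $b'$ and contradicting the choice of $b'$. Hence no edge beats $\set{a^*,b'}$ pessimistically, so $\set{a^*,b'}$ is maximal under both orders and is a favourite committee; existence is established.

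For uniqueness, observe that the argument above also computes the two extremal values: the $\succ_i$-maximum of $\best(P_i,\cdot)$ over edges is $a^*$, and the $\succ_i$-maximum of $\worst(P_i,\cdot)$ over edges is $b'$. Any favourite committee $Z$ must attain both. Attaining the first forces $a^*\in Z$, so $Z=\set{a^*,z}$ with $z\prec_i a^*$ and $\worst(P_i,Z)=z$; attaining the second forces $z=b'$. Thus $Z=\set{a^*,b'}$ is the only favourite committee. I expect the only delicate point to be the non-incidence bookkeeping in the contradiction step, since edge-connectivity is stated only for non-incident pairs; once that is in place the argument is immediate.
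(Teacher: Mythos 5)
Your proof is correct, and its engine is the same as the paper's: both hinge on applying edge-connectivity to the pair consisting of an edge through the optimistically best vertex and a non-incident edge that would pessimistically dominate it, producing a better neighbour of that vertex. The difference is organisational. The paper argues by iterative improvement: take any $\succeq_i^O$-maximal edge $\set{a,b}$; if it is not $\succeq_i^P$-maximal, edge-connectivity supplies a pessimistically better edge still containing $a$, and finiteness terminates the process. You instead name the answer outright --- $a^*$ the top-ranked non-isolated vertex, $b'$ its top-ranked neighbour --- and verify in one application of edge-connectivity that $\set{a^*,b'}$ is $\succeq_i^P$-maximal. Your route buys two small things: it dispenses with the iteration (and the implicit claim that the improvement step strictly increases the pessimistic value, which is what guarantees termination), and it explicitly checks the non-incidence hypothesis before invoking edge-connectivity ($a^*\notin\set{c,d}$ by maximality of $b'$ among neighbours, $b'\notin\set{c,d}$ since both $c,d\succ_i b'$) --- a point the paper's proof passes over silently, though it is recoverable there since an edge containing $b$ cannot pessimistically dominate $\set{a,b}$. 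Your uniqueness argument (both extremal values $a^*$ and $b'$ are pinned, so the favourite is forced) is the same as the paper's closing observation.
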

\begin{proof}
Let $G$ be an edge-connected graph with at least one edge. 
    Consider a voter $i$ with preferences $P_i$, and the
    induced optimistic and pessimistic orders $\succeq_i^O$
    and $\succeq_i^P$.
    
    Let $\set{a,b}$ be a maximal
    element under $\succeq_i^O$. Without loss of generality, suppose $a\succ_i b$. We will show that
    one of the following is true:
    \begin{enumerate}
        \item $\set{a,b}$ is maximal under $\succeq_i^P$, or
        \item There exists a $\set{a,c}$
        such that $\set{a,c}\succ_i^P\set{a,b}$
        and $\set{a,c}$ is maximal under $\succeq_i^O$.
    \end{enumerate}
    Since the graph in question is finite, repeatedly applying 2 will eventually yield us
    1 -- the existence of an element maximal in
    both orders.
    
    Suppose $\set{a,b}$ is not
    maximal under $\succeq_i^P$. There must exist
    some $X\succ_i^P\set{a,b}$. If $X$ is of the form
    $\set{a,c}$, we are done. Suppose $X=\set{c,d}$,
    for $c,d$ distinct from $a$. Observe that this implies that
    $c,d\succ_i b$.  We have assumed that $\set{a,b}$ and $\set{c,d}$
    are both edges. By edge-connectivity, either
     $\set{a,c}$ or $\set{a,d}$ is an edge,
     and both of these pessimistically dominate
     $\set{a,b}$.
    
    This establishes that there exists a $\set{a,b}$
    that is maximal in both the pessimistic and the
    optimistic order. To see that it is unique, observe simply that an edge maximal optimistically must include $a$, else $\set{a,b}$ would dominate, and likewise
    an edge maximal pessimistically must include
    $b$.
\end{proof}

Edge-connectedness is strongly connected to strategy-proofness for consular election rules.

\begin{proposition}\label{fact:edgeconnected}
    If $F$ satisfies SPP and SPO then $\CG(F)$ satisfies edge-connectivity. Conversely if
 $G$ is a graph with at least one edge that satisfies edge-connectivity, then there exists an $F$
    satisfying SPP and SPO such that $\CG(F)=G$.
\end{proposition}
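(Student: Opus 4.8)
The plan is to prove the two directions separately, with the forward direction resting on the monotonicity machinery of \cref{lem:upmono} and the converse on the favourite-committee structure of \cref{fact:favcommittee}. For the forward direction, suppose $F$ satisfies SPP and SPO and let $\set{a,b}$ and $\set{c,d}$ be non-incident edges of $\CG(F)$, so that $a,b,c,d$ are distinct. To verify condition (1), I would fix a profile $Q$ with $F(Q)=\set{c,d}$ and then raise $a$ to the top of every ballot, one voter at a time. Since $a\notin\set{c,d}$, \cref{lem:upmono} guarantees that while $a$ is off the committee each swap leaves the outcome equal to $\set{c,d}$, and the first time $a$ enters it can only displace the member it overtakes, producing $\set{a,c}$ or $\set{a,d}$; thereafter $a$ is a committee member, so the $s\in\set{a,b}$ clause of \cref{lem:upmono} keeps the committee fixed under all further upward swaps of $a$. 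At the terminal profile $a$ is the unanimous first choice, so by unanimity (the corollary to \cref{lem:upmono}) $a$ is elected, and hence the final committee is $\set{a,c}$ or $\set{a,d}$, one of which is therefore an edge. Running the identical argument with $b$ in place of $a$ yields condition (2).

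For the converse, given an edge-connected graph $G$ with at least one edge I would take $A=V(G)$ and define $F(P)$ to be voter $1$'s favourite committee in $G$, which is well-defined and unique by \cref{fact:favcommittee}. The claim $\CG(F)=G$ then splits into two halves. Since a favourite committee is by definition an edge, $Range(F)\subseteq E(G)$; conversely, for any edge $\set{a,b}$ I would place $a$ and $b$ in the top two positions of $P_1$, observe that $\best(P_1,\set{a,b})$ is then the global top alternative and that every other edge has its worst endpoint ranked below position two, so $\set{a,b}$ is maximal under both $\succeq_1^O$ and $\succeq_1^P$ and is thus the favourite, placing $\set{a,b}\in Range(F)$.

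Strategy-proofness of this $F$ is where I would be careful, but it is short. Any voter $i\neq 1$ cannot affect the outcome at all, so $W=W'$ in the definitions of SPO and SPP and manipulation is impossible. For the dictator, the point is that $\succeq_1^O$ and $\succeq_1^P$ are each \emph{total} preorders (they compare the $\succ_1$-best, respectively $\succ_1$-worst, elements, and $\succ_1$ is a linear order); hence an element maximal under $\succeq_1^O$ is in fact its maximum, and likewise for $\succeq_1^P$. Thus the favourite committee $W=F(P_1P_{-1})$ satisfies $W\succeq_1^O X$ and $W\succeq_1^P X$ for every edge $X$, and in particular for $X=W'=F(P_1'P_{-1})$, giving both SPO and SPP.

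I expect the main obstacle to be the bookkeeping in the forward direction rather than the converse. The delicate point is justifying that once $a$ enters the committee it cannot subsequently be displaced by the very same raising process, so that the outcome genuinely locks to $\set{a,c}$ or $\set{a,d}$ rather than drifting to some third pair; this is exactly what the ``furthermore'' clause of \cref{lem:upmono} (replacement occurs only via overtaking) together with its $s\in\set{a,b}$ case buys us, and it must be invoked at each voter in the iteration. The converse, by contrast, is comparatively routine, with essentially all of its weight carried by the prior existence-and-uniqueness result \cref{fact:favcommittee} and the totality of the induced optimistic and pessimistic orders.
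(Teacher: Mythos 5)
Your proof is correct and follows essentially the same route as the paper: the forward direction raises a vertex to the top of all ballots and combines unanimity with upwards monotonicity (\cref{lem:upmono}) to force it onto the committee while restricting which member it can displace, and the converse builds a one-voter-dictatorship rule around the favourite committee guaranteed by \cref{fact:favcommittee}. The only cosmetic difference is that you start from a profile electing $\set{c,d}$ and raise $a$ and $b$, whereas the paper starts from $\set{a,b}$ and raises $c$ and $d$; by the symmetry noted in the definition of edge-connectivity these are interchangeable.
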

\begin{proof}
    To show that strategy-proofness implies edge-connectivity, take an $F$ satisfying SPP and SPO and let $F(P)=\set{a,b}$. Let $\set{c,d}$ be in the range of
    $F$. Move $c$ to the top of $P$. By unanimity and upwards
    monotonicity, $F(P)=\set{a,c}$ or $\set{b,c}$. Repeat for $d$. 
    
    For the other direction, given an edge-connected $G$, fix a voter $i$ and let $F$ be the function that 
    gives $i$ the unique edge of $G$ that is maximal under both $\succeq_i^P$ and $\succeq_i^O$. First,
    observe that $\CG(F)=G$ -- given an edge $\set{a,b}$, construct a profile $P$
    where $i$ ranks $a$ and $b$ first. $F(P)=\set{a,b}$, so the range of $F$
    is precisely the edges of $G$.
    That $F$ satisfies SPP and SPO follows from the fact that the $i$ is getting his favourite committee -- an element maximal under both the pessimistic and the optimistic order -- from which no manipulation is possible.
\end{proof}
\begin{corollary}
\label{cor:favourite}
Let $F$ be a consular election rule satisfying SPP and SPO. Then each voter has a unique favourite committee in $\CG(F)$.
\end{corollary}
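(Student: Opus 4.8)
The plan is to obtain this immediately by chaining the two propositions that directly precede it. First I would check that the range graph $\CG(F)$ is nonempty, i.e. that it has at least one edge. This is automatic: $F$ is a total function into $S_2(A)$, so for any profile $P$ the output $F(P)$ is a two-element subset of $A$, which by definition of the range graph is an edge of $\CG(F)$. Since the domain $\CP(V,A)$ is nonempty, $\CG(F)$ has at least one edge, so the hypothesis of \cref{fact:favcommittee} is met.

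Next I would invoke \cref{fact:edgeconnected}. Because $F$ satisfies SPP and SPO by assumption, the forward direction of that proposition tells us that $\CG(F)$ satisfies edge-connectivity. At this point all the hypotheses needed to apply \cref{fact:favcommittee} are in place: $\CG(F)$ is an edge-connected graph with at least one edge.

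Finally I would apply \cref{fact:favcommittee} directly, which asserts that in an edge-connected graph each voter has a unique favourite committee, i.e.\ a unique edge that is maximal under both $\succeq_i^O$ and $\succeq_i^P$. Applied to $\CG(F)$ this yields exactly the conclusion that each voter has a unique favourite committee in $\CG(F)$, completing the proof.

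I do not expect any genuine obstacle here, since the substantive content has already been discharged: the combinatorial work of producing a doubly-maximal edge and showing its uniqueness lives in \cref{fact:favcommittee}, and the translation from strategy-proofness to the graph condition lives in \cref{fact:edgeconnected}. The corollary merely composes the two. The only point that warrants an explicit remark is the nonemptiness of the edge set demanded by \cref{fact:favcommittee}, and as noted above this holds trivially because a consular election rule always returns a pair.
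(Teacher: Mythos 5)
Your proposal is correct and matches the paper exactly: the paper states this as an immediate corollary of \cref{fact:edgeconnected} (SPP and SPO imply edge-connectivity) combined with \cref{fact:favcommittee} (edge-connectivity implies a unique favourite committee), giving no further proof. Your explicit check that $\CG(F)$ has at least one edge is a small but worthwhile addition, since the proof of \cref{fact:favcommittee} does assume a nonempty edge set.
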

\begin{corollary}
    Let $F$ be a consular election satisfying weak viability, SPP and SPO. $\CG(F)$ has diameter $2$.
\end{corollary}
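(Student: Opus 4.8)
The plan is to read off the needed structure of $\CG(F)$ from two facts that are already in hand. Weak viability says exactly that $\CG(F)$ has no isolated vertex, and by \cref{fact:edgeconnected} the hypotheses SPP and SPO force $\CG(F)$ to be edge-connected. Since diameter at most $2$ means that every pair of distinct vertices is joined by a walk of length at most $2$, I would simply fix two distinct alternatives $u,v$ and exhibit such a walk.

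First I would dispose of the trivial case: if $\set{u,v}$ is an edge, then $u$ and $v$ are at distance $1$. So assume $\set{u,v}\notin\CG(F)$. By weak viability, $u$ has some neighbour $u'$ and $v$ has some neighbour $v'$, i.e.\ $\set{u,u'}$ and $\set{v,v'}$ are edges of $\CG(F)$. If $u'=v'$, then that common vertex is adjacent to both $u$ and $v$, giving the path $u$--$u'$--$v$ of length $2$, and we are done.

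The remaining case is $u'\neq v'$. Here I would first check that $u,u',v,v'$ are four distinct vertices: $u\neq v$ by hypothesis, $u\neq u'$ and $v\neq v'$ because the endpoints of an edge are distinct, and $u'\neq v$, $v'\neq u$ because otherwise $\set{u,v}$ would itself be an edge, contradicting the standing assumption. Hence $\set{u,u'}$ and $\set{v,v'}$ are non-incident edges, so edge-connectivity applies. Taking the clause ``$\set{u,v}$ or $\set{u,v'}$ is an edge'' and using that $\set{u,v}$ is not an edge forces $\set{u,v'}$ to be an edge. Then $v'$ is adjacent to both $u$ and $v$, giving the path $u$--$v'$--$v$ of length $2$. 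This shows any two vertices lie at distance at most $2$, so in particular $\CG(F)$ is connected and its diameter is at most $2$; equality holds precisely when $\CG(F)$ is not complete, i.e.\ when $F$ is not onto.

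The only genuine subtlety---the ``hard part'', such as it is---is the bookkeeping needed to guarantee that the two chosen edges are really non-incident before edge-connectivity can be invoked. Every degenerate coincidence ($u'=v$, $v'=u$, or $u'=v'$) either contradicts the assumption that $\set{u,v}$ is not an edge or already supplies a length-$2$ path outright, so once these are enumerated the argument is immediate.
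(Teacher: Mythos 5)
Your argument is correct and is essentially the paper's own proof: both use weak viability to produce neighbours of the two given vertices, dispose of the degenerate coincidences, and then apply edge-connectivity (\cref{fact:edgeconnected}) to the resulting pair of non-incident edges to extract a common neighbour. The only difference is cosmetic bookkeeping (and your closing remark that the bound is an equality exactly when $\CG(F)$ is not complete, which the paper leaves implicit).
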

\begin{proof}
 Let $a$ and $b$ be arbitrary vertices in $\CG(F)$.
 By weak viability, there exist vertices $c$ and $d$, not necessarily distinct, for which $\set{a,c}$
 and $\set{b,d}$ are edges. If $c=b$ or $d=a$, then
 the distance between $a$ and $b$ is one. If $c=d$,
 then the distance between $a$ and $b$ is at most two.
 
 Suppose then that $c$ and $d$ are distinct from $a$, $b$, and
 each other. By edge-connectivity, at least one of $\set{a,b}$ and $\set{a,d}$ must be an edge. If it is $\set{a,b}$, the distance between $a$ and $b$ is one. If it is $\set{a,d}$, then the distance between $a$ and $b$ is at most two.\end{proof}

Naturally, this introduces a new notion of dictatorship:

\begin{definition}\label{def:rangedictator}
    A \emph{range dictator} is a voter such that for every profile, that voter's 
    favourite committee is elected.
\end{definition}

Under the assumption of weak viability, a range dictator is a stronger notion than a weak dictator, and
collapses to a strong dictator in the case of an onto election.

Edge-connectivity is a very strong property, and it allows us to use
the fact that an irreducible election has an odd cycle to prove
a much stronger statement about its structure.

\begin{fact}\label{fact:3cycles}
    If $F$ is irreducible, SPP, SPO and weakly viable then every vertex in $\CG(F)$ belongs to a $3$-cycle.
\end{fact}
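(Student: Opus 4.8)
The plan is to forget about the election almost entirely and argue purely about the range graph $\CG(F)$, relying on the three structural facts already in hand: since $F$ is irreducible, $\CG(F)$ is non-bipartite by \cref{cor:bipartite} and hence contains an odd cycle; since $F$ satisfies SPP and SPO, $\CG(F)$ is edge-connected by \cref{fact:edgeconnected}; and since $F$ is weakly viable, $\CG(F)$ has no isolated vertices. I would split the argument into two independent parts: first produce a single triangle, then use edge-connectivity to force every remaining vertex into a triangle of its own.

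For the first part I would take a \emph{shortest} odd cycle $C = v_1 v_2 \cdots v_n v_1$ and show $n=3$. Suppose instead $n \geq 5$. Then $\set{v_1,v_2}$ and $\set{v_3,v_4}$ are non-incident edges, so edge-connectivity yields that $\set{v_1,v_3}$ or $\set{v_1,v_4}$ is an edge. The first option would close the triangle $v_1 v_2 v_3$, an odd cycle shorter than $C$, so $\set{v_1,v_4}$ must be the edge. But then the chord $\set{v_1,v_4}$ together with the arc $v_4 v_5 \cdots v_n v_1$ forms an odd cycle of length $n-2 < n$, contradicting minimality. Hence a shortest odd cycle is a triangle $\set{x,y,z}$.

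For the second part, fix an arbitrary vertex $v$; if $v \in \set{x,y,z}$ we are done, so assume not. By weak viability $v$ has some neighbour $w$. If $w \notin \set{x,y,z}$, I would apply edge-connectivity to the non-incident pairs $(\set{v,w},\set{x,y})$, $(\set{v,w},\set{x,z})$ and $(\set{v,w},\set{y,z})$; the three resulting disjunctions together force $v$ to be adjacent to at least two of $x,y,z$, which closes a triangle through $v$. If instead $w \in \set{x,y,z}$, say $w=x$, then $\set{v,x}$ and $\set{y,z}$ are non-incident (as $v \notin \set{x,y,z}$), and edge-connectivity gives that $\set{v,y}$ or $\set{v,z}$ is an edge, again closing a triangle through $v$ together with $x$.

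The routine computations are the two chord arguments; the step that needs the most care is the propagation in the second part, specifically the case where $v$'s only available neighbour lies inside the triangle, since there one cannot compare $v$'s edge against a triangle edge sharing that neighbour and must instead play $\set{v,x}$ off against the opposite edge $\set{y,z}$. Verifying non-incidence before each invocation of edge-connectivity (which everywhere reduces to $v \notin \set{x,y,z}$ and the distinctness of cycle vertices) is the only place an error could realistically slip in.
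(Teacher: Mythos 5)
Your proof is correct. The first half (a shortest odd cycle must be a triangle) is essentially the paper's argument: both apply edge-connectivity to two non-incident edges of a putative shortest odd cycle of length at least five and derive a shorter odd cycle either from the chord $\set{v_1,v_4}$ or from the triangle $v_1v_2v_3$. The second half, however, genuinely diverges. The paper goes back to the election itself: it picks a vertex $a$ outside every triangle but adjacent (via connectedness of $\CG(F)$) to a vertex $b$ of a triangle $(b,c,d)$, takes a profile with $F(P)=\set{c,d}$, moves $a$ to the top of all ballots, and uses unanimity together with upwards monotonicity (\cref{lem:upmono}) to conclude that $\set{a,c}$ or $\set{a,d}$ is an edge. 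You instead stay entirely inside the range graph, playing an arbitrary edge at $v$ (which exists by weak viability) against the three edges of the known triangle and extracting two neighbours of $v$ among $\set{x,y,z}$ from the resulting disjunctions, with a separate easy case when $v$'s neighbour already lies in the triangle. Your version buys a cleaner statement of what is really being used --- it shows the conclusion is a purely combinatorial consequence of edge-connectivity plus the absence of isolated vertices, and it avoids both the diameter-$2$ corollary and the slightly underexplained step in the paper where $a$ is ``chosen'' adjacent to a triangle vertex. The paper's version, by re-invoking monotonicity, keeps the proof closer in spirit to the surrounding election-theoretic arguments but proves nothing more. Both are valid; your case analysis and the non-incidence checks are all sound.
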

\begin{proof}
    $\CG(F)$ is not bipartite, so an odd length cycle exists. Suppose the smallest such cycle
    is of length $2k+1$, $k>1$. Let $\set{a,b},\set{b,c},\set{c,d}$ be edges of this cycle. 
    By \cref{fact:edgeconnected} either $\set{a,d}$ or $\set{b,d}$ is an edge. 
    If it is $\set{a,d}$, then we have a $2k-1$ cycle by replacing
    $\set{a,b},\set{b,c},\set{c,d}$ with $\set{a,d}$. 
    If it is $\set{b,d}$, then we have a $3$-cycle $\set{b,c},\set{c,d},\set{b,d}$. 
    Either way, $2k+1$ is not the smallest length of an odd cycle.
    
    Given that a $3$-cycle exists, assume there exists an $a$ that does not belong to a $3$-cycle.
    By connectivity (diameter 2), we can choose $a$ such that $a$ is adjacent to $b$, 
    and $b,c,d$ is a $3$-cycle. Let $F(P)=\set{c,d}$. Move $a$ to the top of the ballots. 
    The winner is either $\set{a,c}$ or $\set{a,d}$, giving the $3$-cycle $(a,b,c)$ or $(a,b,d)$.
\end{proof}

This leads to the following corollary, which will prove crucial
in the proof of \cref{prop:irreduciblelinked}.

\begin{corollary}\label{cor:triangle}
    Let $F$ be an irreducible election satisfying SPP, SPO and weak viability. If $a$ and $b$ are connected by an edge, they also form a $3$-cycle with some $c$.
\end{corollary}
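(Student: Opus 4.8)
The plan is to reduce the statement to \cref{fact:3cycles}, which already guarantees that the vertex $a$ lies on some $3$-cycle, and then to use edge-connectivity of $\CG(F)$ (\cref{fact:edgeconnected}) to splice the given edge $\set{a,b}$ into a triangle.

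First I would apply \cref{fact:3cycles} to obtain a $3$-cycle through $a$, say on the distinct vertices $a$, $x$, $y$, so that $\set{a,x}$, $\set{x,y}$ and $\set{a,y}$ are all edges of $\CG(F)$. If $b$ happens to coincide with $x$ or $y$, then the edge $\set{a,b}$ is already a side of this triangle, and the remaining triangle vertex serves as the required $c$. Hence I may assume $b\notin\set{x,y}$.

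In that case the edges $\set{a,b}$ and $\set{x,y}$ are non-incident: $a\neq x,y$ because a triangle has distinct vertices, and $b\neq x,y$ by the case assumption. Edge-connectivity then forces $\set{b,x}$ or $\set{b,y}$ to be an edge. In the former case $\set{a,b}$, $\set{a,x}$, $\set{b,x}$ form a $3$-cycle and I take $c=x$; in the latter case $\set{a,b}$, $\set{a,y}$, $\set{b,y}$ form a $3$-cycle and I take $c=y$. Either way $c$ is distinct from both $a$ and $b$, as required.

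I do not expect a genuine obstacle here, since the substantive work is carried out in \cref{fact:3cycles} and \cref{fact:edgeconnected}; the only point demanding care is the degenerate overlap $b\in\set{x,y}$, which is dispatched directly. As an alternative that avoids \cref{fact:3cycles}, one could argue from scratch: choose a $3$-cycle $(b,c,d)$ through $b$, pick a profile with $F(P)=\set{c,d}$, and move $a$ to the top of every ballot; unanimity together with upwards monotonicity (\cref{lem:upmono}) forces the outcome to become $\set{a,c}$ or $\set{a,d}$, which, combined with the given edge $\set{a,b}$, yields a triangle on $a$, $b$ and one of $c$, $d$.
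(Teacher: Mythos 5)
Your proposal is correct and follows essentially the same route as the paper: apply \cref{fact:3cycles} to get a triangle through $a$, dispose of the case where $b$ already lies on it, and otherwise use edge-connectivity on the non-incident edges $\set{a,b}$ and $\set{x,y}$ to obtain $\set{b,x}$ or $\set{b,y}$. The only difference is that you spell out the non-incidence check and offer an alternative argument explicitly, which the paper leaves implicit.
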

\begin{proof}
    By \cref{fact:3cycles}, $a$ belongs to some $3$-cycle. If this cycle contains $b$, we are done.
    Else, let this cycle be $(a,c_1,c_2)$. By edge connectivity, either $\set{b,c_1}$ or $\set{b,c_2}$
    is an edge, so that $(b,c_1, a)$ or $(b,c_2,a)$ is a cycle.
\end{proof}

\section{Irreducible elections and linked domains}
\label{s:linked}

We are finally in a position to state exactly what we are trying
to prove: an irreducible consular election rule that is manipulable
by neither an optimist nor a pessimist has a range dictator. We do
not even have to assume that $|A|\geq 3$ -- the only consular
election rule over two alternatives is both reducible and dictatorial.

While a direct proof of the theorem is possible using the standard
techniques of social choice theory, a shorter and more elegant
solution involves following the footsteps of \citet{Ozyurt2008} and \citet{Reffgen2011} by using 
a theorem
of \citet{Aswal2003}. We define an extension from 
preferences over $A$ to preferences over $S_2(A)$, and apply the result 
of \citet{Aswal2003} to show that the resulting social choice
function is dictatorial.

\begin{definition}\label{def:connected}
    Let $D$ be a domain of linear orders.
    Two alternatives, $a,b\in A$, are \emph{connected} in $D$
    if there exist $P,P'\in D$ such that in $P$, $a$ is ranked first and $b$
    second, while in $P'$, $b$ is ranked first, and $a$ second.
\end{definition}

\begin{definition}\label{def:linked}
    A domain $D$ is said to be \emph{linked} if it is possible
    to order $A=\set{a_1,\dots,a_m}$ in a way that:
    \begin{enumerate}
        \item $a_1$ is connected to $a_2$.
        \item For $i\geq 3$, $a_i$ is connected to at least two elements
        in $\set{a_1,\dots,a_{i-1}}$.
    \end{enumerate}
\end{definition}

\begin{theorem}[\cite{Aswal2003}]\label{thm:aswal}
    Let $D$ be a linked domain of preferences over $A$, $|A|\geq 3$,
    and $F:D^n\ria A$ be a unanimous social choice function. Then $F$ is strategy-proof
    if and only if $F$ is dictatorial.
\end{theorem}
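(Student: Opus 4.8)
The plan is to prove the two directions separately, with essentially all of the work concentrated in the forward direction. The reverse direction is immediate: if $F$ always selects the top-ranked alternative of some fixed voter $d$, then $F$ is plainly unanimous, the voter $d$ can never profit from misreporting since he already obtains his favourite alternative, and no other voter can influence the outcome at all; hence $F$ is strategy-proof.

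For the hard direction I would adapt the pivotal-voter argument familiar from proofs of the Gibbard--Satterthwaite theorem, the central difficulty being that we are confined to the restricted domain $D$ and so cannot freely manufacture the auxiliary profiles that argument usually relies on. The role of the linkedness hypothesis is precisely to supply enough local freedom. First I would extract from strategy-proofness the standard monotonicity property, that if $F(P)=a$ and $P'$ is obtained by weakly raising $a$ in each voter's order without disturbing the relative order of the remaining alternatives, then $F(P')=a$. With this in hand I would analyse a single connected pair $\set{a_1,a_2}$: restricting to profiles in which every voter ranks $a_1$ and $a_2$ in the top two positions (possible because the pair is connected), I sweep $a_1$ from second to first place across the voters one at a time and locate a \emph{pivotal} voter $d$ whose switch flips the outcome from $a_2$ to $a_1$. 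Invoking a third alternative, available since $|A|\geq 3$, in the manner of the classical pivotal-voter argument, I then upgrade $d$ from merely pivotal to \emph{decisive} on the pair.

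The crux is then to globalise this local dictator, and this is where the clause that each $a_i$ is connected to \emph{two} earlier alternatives does the real work: with only a single connection one obtains path-like domains on which median or majority rules are strategy-proof, unanimous, and non-dictatorial, so two neighbours are genuinely needed. I would proceed by induction on the linked ordering $a_1,\dots,a_m$. Assuming $d$ has been shown to dictate over $\set{a_1,\dots,a_{i-1}}$, and given that $a_i$ is connected to two of these, say $a_j$ and $a_k$, I would use the orders guaranteed by the two connections to place $a_i$ adjacent to $a_j$, and separately adjacent to $a_k$, at the top of the ballots, and transfer $d$'s decisiveness across each licensed adjacent swap. The presence of two distinct earlier neighbours is what lets me triangulate the position of $a_i$ relative to the already-controlled alternatives and conclude that $d$ dictates the choice whenever $a_i$ is available; carrying the induction through to $a_m$ yields that $d$ is a dictator on all of $A$.

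I expect the main obstacle to be bookkeeping the availability of the intermediate profiles inside $D$ at every step. Unlike in the full-domain setting, each application of monotonicity or of strategy-proofness must be witnessed by orders that actually lie in the linked domain, so the argument has to be threaded carefully through exactly the swaps that connectedness licenses. Verifying that the induction never leaves $D$, and that the two-neighbour condition genuinely pins down $d$'s verdict on $a_i$ rather than merely constraining it, is the delicate heart of the proof.
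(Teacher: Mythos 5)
This statement is \cref{thm:aswal}, which the paper imports from \citet{Aswal2003} and does not prove; there is therefore no in-paper proof to compare your attempt against. Judged on its own terms, your reverse direction is fine, and your overall plan for the forward direction (monotonicity from strategy-proofness, a pivotal voter on the connected pair $\set{a_1,a_2}$, then propagation along the linked ordering) is a plausible outline. It is, however, not the route taken in the cited source: Aswal, Chatterji and Sen argue via \emph{option sets} --- for two voters they show that the set of outcomes voter $2$ can induce against a fixed $P_1$ is either all of $A$ or the singleton $\set{\best(P_1,A)}$, that the same case obtains for every $P_1$, and then they induct on the number of voters by fusing two voters into one --- rather than by sweeping for a pivotal voter. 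The option-set architecture is what lets them avoid ever needing profiles the domain does not contain.

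That difference matters because the two steps you flag as delicate are in fact where your sketch has genuine gaps rather than mere bookkeeping. First, ``invoking a third alternative in the manner of the classical pivotal-voter argument'' to upgrade $d$ from pivotal to decisive requires placing a third alternative $c$ above, between, and below $a_1$ and $a_2$ in various ballots; connectedness of $\set{a_1,a_2}$ only guarantees the existence of \emph{some} two orders with $a_1,a_2$ in the top two positions and says nothing about where any $c$ sits in them, so these auxiliary profiles need not exist in $D$. Second, the induction step never specifies what ``$d$ dictates over $\set{a_1,\dots,a_{i-1}}$'' means on a domain where $d$'s top alternative may be outside that set, nor how decisiveness is ``transferred across a licensed adjacent swap'' using only the two witnesses of connectedness of $a_i$ to $a_j$ and $a_k$. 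Your observation that a single earlier neighbour would admit median-type counterexamples correctly identifies \emph{why} two neighbours are needed, but the proposal does not show \emph{how} two neighbours suffice; as written, the hard direction is a research plan whose central steps remain to be carried out, and the published proof suggests they are most naturally carried out with different machinery.
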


The plan is, given an $F:\CP(V,A)\ria S_2(A)$, to define an extension
map $\alpha$ that transforms a linear order over $A$ into a linear
order over $S_2(A)$. We then define the social choice function
$\CF:\alpha(\CP(V,A))\ria S_2(A)$ by $\CF(\alpha(P))=F(P)$. If we
show that $\alpha(\CP(V,A))$ is linked and $\CF$ is unanimous, we
can invoke \cref{thm:aswal} to show that $\CF$ is dictatorial.
Since $\CF$ mirrors the output of $F$, this must mean that $F$ is
dictatorial as well -- provided, of course, that $\alpha$ is defined
in a way voter $i$'s first choice in $\alpha(P_i)$ corresponds to
voter $i$'s favourite committee in $S_2(A)$.

The first issue we face is in extending the preferences. As we have previously argued, Duggan-Schwartz manipulation, which can be expressed as $\succ_i^O\cup\succ_i^P$, does not define
a transitive order.\footnote{Among other problems. If we take $\succ_i^O\cup\succ_i^P$ to represent a voter's strict preferences, we have
to posit that a voter is able to at once strictly prefer $X$ to $Y$ and
$Y$ to $X$ -- what is sometimes known as \emph{conflicted} preferences
in the literature.} What we can do instead is to compose the two lexicographically. That is,
define $\succeq_i^L$ by:
$$X\succeq_i^L Y \text{ iff } \begin{cases}X\succ_i^O Y, \text{ or}\\
X\succeq_i^OY \text{ and } X\succeq_i^PY\end{cases}.$$
That is,  the voter ranks committees by the best element first, and then
uses the worst element as a tie-breaker. This clearly yields a linear
order.

Of course, manipulation by $\succeq_i^L$ is not the same thing as manipulation
by an optimist or a pessimist. However, in the case of a consular election rule, lexicographic manipulation is
a strictly narrower notion. As a consequence, a function that satisfies SPP and SPO also satisfies $\succeq_i^L$-strategy-proofness. If we thus show that $\succeq_i^L$-strategy-proofness
is enough to ensure dictatoriality, we will \emph{a fortiori} show that elections
that satisfy SPP and SPO are dictatorial.
\begin{fact}
    A consular election rule $F$ is $\succ_i^L$-strategy-proof only if it satisfies SPP and SPO.
\end{fact}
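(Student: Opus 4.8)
I would split the claim into its two halves, $\succ^L$-strategy-proofness $\Rightarrow$ SPO and $\succ^L$-strategy-proofness $\Rightarrow$ SPP, and treat them separately, since only the first is immediate. For SPO, note that the optimistic order sits inside the lexicographic one: straight from the definition of $\succeq_i^L$, if $\best(P_i,W')\succ_i\best(P_i,W)$ then $W'\succ_i^L W$, that is $\succ_i^O\subseteq\succ_i^L$. Thus an optimistic manipulation --- a deviation $P_i'$ with $F(P_i'P_{-i})=W'\succ_i^O W=F(P)$ --- is simultaneously a lexicographic manipulation, which the hypothesis forbids; so $F$ is SPO.

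The SPP half is where the content lies, and I expect it to be the main obstacle, because the symmetric argument breaks down: $\succ_i^P$ is \emph{not} contained in $\succ_i^L$. Suppose $i$ had a pessimistic manipulation, with $F(P)=W$, $F(P_i'P_{-i})=W'$ and $\worst(P_i,W')\succ_i\worst(P_i,W)$. Having SPO already, I know $\best(P_i,W)\succeq_i\best(P_i,W')$. If the two best elements coincide, the lexicographic tie-break on the worst element upgrades the pessimistic gain to $W'\succ_i^L W$, contradicting $\succ^L$-strategy-proofness. The difficulty is the \emph{conflicted} case $\best(P_i,W)\succ_i\best(P_i,W')$, where $W\succ_i^O W'$ and hence $W\succ_i^L W'$, so the lexicographic order registers no manipulation at all even though a pessimist would deviate. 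Writing $W=\set{t,b}$ and $W'=\set{u,v}$, this case pins the two committees down to be disjoint with $t\succ_i u\succ_i v\succ_i b$.

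To finish I would try to convert such a conflicted deviation into a bona fide lexicographic manipulation at a modified profile: the plan is to raise $u$ in voter $i$'s ballot until the best elements of the two reachable committees agree, while preserving the pessimistic advantage, so that the worst-element comparison becomes decisive. The delicate point --- and the reason I flag this step --- is that the monotonicity results \cref{lem:upmono,lem:downmono} are proved assuming \emph{both} SPO and SPP, so they are not available here; only SPO can be used, and one must exploit the size-$2$ structure directly to track how the committee moves under a single swap. The hard part will be showing that this equalisation can always be carried out in the conflicted configuration; this is the entire content beyond the easy SPO half, and if some such configuration resists conversion then the statement would need the additional structure (e.g.\ irreducibility) used elsewhere in the paper.
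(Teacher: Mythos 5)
Your SPO half is right, and your suspicion about the SPP half is vindicated in the strongest possible sense: that half is not merely hard, it is \emph{false} as literally stated, so the ``equalisation'' you plan for the conflicted case cannot be carried out. Concretely, take $A=\set{a,b,c,d}$, let $F$ depend only on voter $1$, and let $F(P)$ be the $\succeq_1^L$-maximal element of the fixed two-element range $\set{\set{a,d},\set{b,c}}$. This $F$ is weakly viable and $\succ_i^L$-strategy-proof (voter $1$ always receives his lexicographic optimum over a fixed option set, and the two committees are disjoint, so the comparison is always decided at the best-element stage; no other voter affects the outcome), and it is even irreducible in the paper's functional sense, since for either candidate bipartition the choice within one part depends on comparisons across parts, as in the paper's own bipartite-but-irreducible example. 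Yet with sincere ballot $a\succ_1 b\succ_1 c\succ_1 d$ voter $1$ obtains $\set{a,d}$, while reporting $b$ first yields $\set{b,c}$, and $c\succ_1 d$: a pessimistic manipulation. This is exactly your conflicted configuration ($t\succ_i u\succ_i v\succ_i b$ with disjoint committees), and it resists conversion because nothing forces the range to be edge-connected here: $\CG(F)$ is a pair of non-incident edges, the first forbidden subgraph, a range which \cref{fact:edgeconnected} excludes only under the joint hypothesis of SPP \emph{and} SPO --- unavailable in this argument.

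What has happened is that the fact is misstated: its ``only if'' points the wrong way. The paper offers no proof at all, and both the sentence immediately preceding the fact (``a function that satisfies SPP and SPO also satisfies $\succeq_i^L$-strategy-proofness'') and the way the fact is consumed in \cref{cor:main} --- one needs $\CF'$ to be strategy-proof before \cref{thm:aswal} can deliver dictatorship --- show that the intended claim is the converse: SPP and SPO together imply $\succ_i^L$-strategy-proofness. That direction is precisely the two-case observation you made in passing: if $W'\succ_i^L W$, then either $\best(P_i,W')\succ_i\best(P_i,W)$, which is an optimistic manipulation, or the best elements coincide, in which case, the committees having size two, the shared best element forces $\worst(P_i,W')\succ_i\worst(P_i,W)$, a pessimistic manipulation. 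So your easy containment $\succ_i^O\subseteq\succ_i^L$, extended to this tie-break case, already proves the intended statement completely; the half you flagged as ``the entire content'' was never needed, and by the counterexample above cannot be had.
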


Our next problem is that, in general, $\CF$ is not going to be
unanimous -- unanimity would imply that $\CF$ is onto, and we have argued
that many natural irreducible elections fail that property. To address this,
we will consider the restriction of $\CF$ to the relevant committees.
\begin{definition}
    Let $F:\CP(V,A)\ria S_2(A)$ be a weakly viable consular election rule
    satisfying SPP and SPO, and let $\CF:\alpha(\CP(V,A))\ria S_2(A)$
    be the induced social choice function.
    
    Define $\CF':(\alpha(\CP(V,A))|\CG(F))\ria (S_2(A)|\CG(F))$ by
    $\CF'(\alpha(P)|\CG(F))=\CF(\alpha(P))$.
\end{definition}

At first glance,
    $\CF'$ may appear ill-defined: there may exist $\alpha(P)\neq \alpha(P')$ such that
    $\alpha(P)|\CG(F)=\alpha(P')|\CG(F)$. However, this is not the case -- $\alpha(P)$ is the lexicographic order over pairs defined by the linear orders in $P$. As such, no matter how many pairs we remove from $\alpha(P)$ to obtain $\alpha(P)|\CG(F)$, there remains a unique way to reconstruct the order.

\begin{example}
    The dense notation of the above statement can hide the fact that
    we are talking about a very simple operation. Let $F$ be the function
    from \cref{fact:4notonto} -- one voter is asked to pick a pair
    from $\set{a,b,c,d}$, at least one element of which is from $\set{b,c}$.
    
    Consider a voter with preference $a\succ_1 d\succ_1 b\succ_1 c$. The extension
    $\alpha$ turns his preferences into:
    $$\set{a,d}\succ_1\set{a,b}\succ_1\set{a,c}\succ_1\set{b,d}\succ_1\set{c,d}\succ_1\set{b,c}.$$
    The function $\CF$ is then the function that gives voter 1 his first choice,
    unless it happens to be $\set{a,d}$, in which case it gives him his second choice.
    $\CF$ is not unanimous, as it is not onto, and unsurprisingly $\CF$ is not dictatorial.
    
    The graph $\CG(F)$ does not include $\set{a,d}$ as an edge. If we restrict
    voter 1's preferences to $\CG(F)$, then we get $\alpha(P)|\CG(F)$:
    $$\set{a,b}\succ_1\set{a,c}\succ_1\set{b,d}\succ_1\set{c,d}\succ_1\set{b,c}.$$
    Now, $\CF'$ is simply the function that gives voter 1 his first choice.
    This function is unanimous and dictatorial. 
\end{example}

That the function in the example above is unanimous
is not an accident.
\begin{lemma}
Let $F$ be a consular election satisfying SPP and SPO. Then voter $i$'s first choice in $\alpha(P_i)|\CG(F)$
is his favourite committee.
\end{lemma}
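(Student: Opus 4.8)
The plan is to identify voter $i$'s first choice in $\alpha(P_i)|\CG(F)$—that is, the $\succeq_i^L$-maximal committee among the edges of $\CG(F)$—with the favourite committee $W^*$ supplied by \cref{cor:favourite}, namely the unique edge of $\CG(F)$ that is maximal under both $\succeq_i^O$ and $\succeq_i^P$. The structural fact I would exploit is that $\succeq_i^L$ is a lexicographic composition: it compares committees by the optimistic order first and invokes the pessimistic order only as a tie-breaker. An element that is simultaneously maximal in both constituent orders is therefore forced to the very top of the composite order, and this is essentially the whole content of the lemma once the definitions are unpacked.

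Concretely, I would first invoke \cref{cor:favourite}, which applies since $F$ satisfies SPP and SPO, to obtain the unique favourite committee $W^*$, recording that by definition $W^*\succeq_i^O X$ and $W^*\succeq_i^P X$ for every edge $X$ of $\CG(F)$. Next I would verify that $W^*$ is $\succeq_i^L$-maximal directly from the definition of $\succeq_i^L$: fixing an arbitrary edge $X$, the relations $W^*\succeq_i^O X$ and $W^*\succeq_i^P X$ satisfy precisely the second clause in the definition of $\succeq_i^L$, so $W^*\succeq_i^L X$. As $X$ was arbitrary, $W^*$ dominates every committee in the range under $\succeq_i^L$. Finally, since $\succeq_i^L$ is a linear order, it has a unique maximum on the finite edge set of $\CG(F)$, and that maximum is by definition voter $i$'s first choice in $\alpha(P_i)|\CG(F)$; having exhibited $W^*$ as a maximum, I conclude the first choice coincides with the favourite committee.

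I do not anticipate a genuine obstacle: the argument reduces to the observation that the maximum of a lexicographic composition of two orders, evaluated at an element that is maximal in both coordinate orders at once, is that element. The one subtlety worth flagging is that the existence of a committee maximal under $\succeq_i^O$ and $\succeq_i^P$ \emph{simultaneously} is not automatic and is exactly where strategy-proofness enters, via edge-connectivity (\cref{fact:edgeconnected}) and \cref{fact:favcommittee}; without it there need be no such committee, so the appeal to \cref{cor:favourite} is load-bearing rather than cosmetic.
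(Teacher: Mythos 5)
Your proof is correct and rests on the same foundation as the paper's, namely the existence and uniqueness of the favourite committee supplied by \cref{fact:favcommittee}/\cref{cor:favourite}. The only difference is the direction of the argument: the paper starts from the $\succeq_i^L$-maximum $\set{a,b}$ and shows via a short case analysis (on whether a candidate favourite $\set{a,c}$ has $b\succ_i c$ or $c\succ_i b$) that it must coincide with the favourite committee, whereas you start from the favourite committee and read off directly from the definition of $\succeq_i^L$ that joint maximality under $\succeq_i^O$ and $\succeq_i^P$ places it at the top of the lexicographic order --- a slightly more streamlined presentation of the same idea.
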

\begin{proof}
    Let $\set{a,b}$ be voter $i$'s first choice in $\alpha(P_i)|\CG(F)$. Recall that the favourite committee is an element
    maximal under both $\succeq_i^O$ and $\succeq_i^P$. That the $\set{a,b}$ is maximal under $\succeq_i^O$ is obvious -- $\succeq_i^L$
    ranks pairs by the best element first. Without loss of generality, let the best element be $a$.
    
    In \cref{fact:favcommittee}, we have established
    that voter $i$ has a unique favourite committee in $\CG(F)$. In order to be maximal under $\succeq_i^O$, the favourite committee must contain $a$. However, it cannot be some $\set{a,c\neq b}$; if $b\succ_i c$, then
    the favourite committee is not maximal under $\succeq_i^P$, and if $c\succ_i b$ then
    $\set{a,b}$ is not the first choice in $\alpha(P_i)|\CG(F)$. Thus the favourite committee and the first choice in $\alpha(P_i)|\CG(F)$ must coincide.
\end{proof}
\begin{lemma}\label{lem:strongunanimous}
    Let $F$ be a consular election satisfying SPP, SPO. The pair $\set{a,b}$ is elected whenever $\set{a,b}$ is the favourite committee of every voter.
\end{lemma}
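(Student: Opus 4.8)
The engine of the argument is the two monotonicity lemmas, driven by the following description of a favourite committee, valid for every voter $i$. Write $t_i=\best(P_i,\set{a,b})$ and $w_i=\worst(P_i,\set{a,b})$. Because $\set{a,b}$ is maximal under $\succeq_i^O$ and (among edges through $t_i$) under $\succeq_i^P$, \cref{fact:favcommittee} yields (O) every alternative ranked above $t_i$ in $P_i$ is isolated in $\CG(F)$, and (P) every $\CG(F)$-neighbour of $t_i$ is ranked at or below $w_i$ in $P_i$. These two facts are what let the monotonicity lemmas bite.

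The plan is to compare $P$ with the profile $P^\dagger$ in which each voter lists $t_i$ first, $w_i$ second, and the remaining alternatives in their $P_i$-order, noting that the relative order of $a$ and $b$ is the same in $P^\dagger_i$ and $P_i$ (in both, $t_i$ sits above $w_i$). First I would prove $F(P^\dagger)=\set{a,b}$; then I would transport this back to $P$ purely by lowering.

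For $F(P^\dagger)=\set{a,b}$, set $F(P^\dagger)=C$ and split on $|C\cap\set{a,b}|$. If $C$ is disjoint from $\set{a,b}$, raise $a$ to the top of every ballot: for a voter with $t_i=b$ this only moves $a$ past $b\notin C$ (the members of $C$ sit below $a$), so by \cref{lem:upmono} the committee is unchanged, yet now every voter ranks $a$ first and unanimity forces $a\in C$ — a contradiction. If $C=\set{a,y}$ with $y\neq b$ (the case $\set{b,y}$ is symmetric), I claim lowering $y$ to the bottom of a single ballot cannot change $C$: by \cref{lem:downmono} the only possible change is to $\set{a,c}$ with $c$ overtaken by $y$, and this is either a pessimistic improvement (when $c\prec_i a$) or, for a voter with $t_i=b$ which forces $c=b$, an optimistic improvement — both barred by SPP and SPO. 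Lowering $y$ everywhere thus keeps $C=\set{a,y}$ while making $y$ the unanimous worst alternative. But $y$ cannot be a Marius, since a Marius lies on the favourite committee $\set{a,b}$ and $y\notin\set{a,b}$; hence $y$ fails to win at some profile, and moving $y$ to the bottom of every ballot preserves that (a non-winner stays out by \cref{lem:downmono}), so $y\notin F$ at every profile where it sits last — contradicting $y\in C$. The only surviving case is $C=\set{a,b}$.

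Finally I would pass from $P^\dagger$ to $P$ one voter at a time, lowering $w_i$ from second to its $P_i$-slot and then $t_i$ from the top to its $P_i$-slot, all downward moves. Starting from $\set{a,b}$, \cref{lem:downmono} says $t_i$ can be displaced only by some $c\succ_i t_i$, which is isolated by (O) and so not a committee vertex, and $w_i$ only by a neighbour $c$ of the surviving $t_i$ with $c\succ_i w_i$, which (P) forbids; so the committee remains $\set{a,b}$ throughout, giving $F(P)=\set{a,b}$. The step I expect to be the main obstacle is the single-overlap case $C=\set{a,y}$: excluding it is exactly the assertion that a strategy-proof rule cannot seat a universally bottom-ranked alternative, and the delicate point is to obtain this without invoking weak viability, which I arrange by first observing that the intruder $y$ cannot be a Marius and then propagating its loss to every profile in which it is ranked last.
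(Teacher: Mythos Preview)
Your observations (O) and (P) are correct, and the final transport from $P^\dagger$ to $P$ via downward moves is sound. The gap is in establishing $F(P^\dagger)=\{a,b\}$, specifically in the single-overlap case $C=\{a,y\}$ with $y\neq b$.

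You claim that lowering $y$ to the bottom of voter $i$'s ballot cannot change $C$, because the only possible change --- to some $\{a,c\}$ with $y$ dropping below $c$ --- would be a forbidden improvement. But in $P^\dagger_i$ the top two slots are occupied by $\{a,b\}$, so $y$ sits in position three or lower, and any $c$ that $y$ drops below must have been strictly below $y$ in $P^\dagger_i$; in particular $c\notin\{a,b\}$. Comparing $\{a,y\}$ with $\{a,c\}$ in $P^\dagger_i$: both have best element $a$, and the worst element moves from $y$ down to $c$ --- a pessimistic \emph{worsening}, not an improvement. In the lowered order (with $y$ at the bottom) the comparison reverses and is again a worsening in the other direction. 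Neither SPP nor SPO is violated, so the transition $\{a,y\}\to\{a,c\}$ is perfectly compatible with strategy-proofness. (Your aside that ``$t_i=b$ forces $c=b$'' also fails: $b$ sits above $y$ in $P^\dagger_i$, and $y$ only drops below alternatives that were already beneath it, so $c=b$ is impossible.) Consequently your veto-style endgame --- $y$ is not a Marius, so $y$ loses whenever it is unanimously last --- never engages: after lowering $y$ everywhere the committee may already be some $\{a,c\}$ with $y$ gone, and no contradiction arises.

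The paper avoids this by reversing the flow. Rather than computing $F(P^\dagger)$ directly, it starts at a profile $P'$ with $F(P')=\{a,b\}$ (such a profile exists because a favourite committee lies in the range), pushes $a$ and $b$ to the top of every ballot (upward monotonicity keeps the committee fixed), and then raises the remaining alternatives, voter by voter, into their $P$-positions. Any change along this path would require some $c$ to overtake $a$ or $b$ and enter the committee, making $\{b,c\}$ or $\{a,c\}$ an edge; your own (O) and (P) then give the contradiction immediately for the voter whose ballot is being altered. Starting from a profile where the answer is already $\{a,b\}$ and transporting, rather than trying to pin down $F(P^\dagger)$ by exclusion, is what makes the argument go through.
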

\begin{proof}
    By \cref{fact:favcommittee}, $\set{a,b}$
    is in the range of $F$. Let $P'$ be a profile
    such that $F(P')=\set{a,b}$. Move $a$ and $b$
    to the top of every ballot. By monotonicity,
    the winning committee is still $\set{a,b}$.
    Now move the other alternatives, voter by voter,
    to the positions they occupy in $P$. By monotonicity, if the winning committee changes during this process then that must mean some $c$
    overtook, without loss of generality, $a$ and
    the winning committee became $\set{b,c}$. However, this must mean that some voter prefers
    $\set{b,c}$ to $\set{a,b}$, which contradicts
    $\set{a,b}$ being the favourite committee.
\end{proof}
\begin{corollary}\label{lem:restrictionunanimous}
    Let $F:\CP(V,A)\ria S_2(A)$ be a weakly viable consular election rule
    satisfying SPP and SPO.     
    Then $\CF':(\alpha(\CP(V,A))|\CG(F))\ria (S_2(A)|\CG(F))$ is unanimous.
\end{corollary}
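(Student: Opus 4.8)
The plan is to read off unanimity directly from the two preceding lemmas, since the corollary is essentially a packaging of their statements. By definition, $\CF'$ is unanimous if, whenever every voter $i$ ranks one and the same committee $\set{a,b}\in\CG(F)$ first in the restricted order $\alpha(P_i)|\CG(F)$, the function returns that committee, i.e.\ $\CF'(\alpha(P)|\CG(F))=\set{a,b}$. So I would fix an arbitrary profile in the domain of $\CF'$, say $(\alpha(P_i)|\CG(F))_{i\in V}$ arising from some $P\in\CP(V,A)$, together with a committee $\set{a,b}$ that is the top element of $\alpha(P_i)|\CG(F)$ for every $i$.

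First I would invoke the preceding lemma, which identifies voter $i$'s first choice in $\alpha(P_i)|\CG(F)$ with his favourite committee. Since by hypothesis $\set{a,b}$ sits at the top of $\alpha(P_i)|\CG(F)$ for each $i$, that lemma gives that $\set{a,b}$ is precisely the favourite committee of every voter. This is the step that converts the syntactic ``ranked first in the restricted domain'' condition of unanimity into the semantic ``favourite committee'' condition required by the next lemma.

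Next I would apply \cref{lem:strongunanimous}: a pair that is the favourite committee of every voter is elected, so $F(P)=\set{a,b}$. Unwinding the definitions of $\CF$ and $\CF'$ then yields $\CF'(\alpha(P)|\CG(F))=\CF(\alpha(P))=F(P)=\set{a,b}$, which is exactly the unanimity condition. Since $P$ and $\set{a,b}$ were arbitrary, $\CF'$ is unanimous.

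I do not expect a genuine obstacle here: all the real content is carried by the favourite-committee lemma and by \cref{lem:strongunanimous}, and what remains is definitional bookkeeping. The only point requiring mild care is that $\CF'$ is well-defined on the restricted domain $\alpha(\CP(V,A))|\CG(F)$, but this was already settled by the earlier remark that the lexicographic order $\alpha(P)$ can be uniquely reconstructed from its restriction $\alpha(P)|\CG(F)$, so no two distinct inputs of $\CF$ collapse to the same input of $\CF'$ with conflicting outputs.
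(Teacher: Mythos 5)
Your argument is exactly the one the paper intends: the corollary is stated without proof precisely because it follows by chaining the preceding lemma (first choice in $\alpha(P_i)|\CG(F)$ equals the favourite committee) with \cref{lem:strongunanimous} and unwinding the definition of $\CF'$. Your write-up is correct and matches the paper's (implicit) reasoning.
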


We now have a unanimous function. We next consider whether its domain
is linked. The graph interpretation is of use here, as it turns out
that connected committees are exactly the incident edges.

\begin{lemma}\label{lem:connect}
Let $F:\CP(V,A)\ria S_2(A)$ be a weakly viable consular election rule
    satisfying SPP and SPO. Then two committees, $X,Y\in S_2(A)$, $X\neq Y$, are connected in $\alpha(\CP(V,A))|\CG(F)$
    if and only if the edges $X$ and $Y$ are incident in $\CG(F)$.
\end{lemma}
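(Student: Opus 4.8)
The plan is to translate connectedness in the committee domain (\cref{def:connected}) back into a statement about single ballots over $A$. Every member of $\alpha(\CP(V,A))|\CG(F)$ has the form $\alpha(Q)|\CG(F)$ for one linear order $Q$ over $A$, so two committees $X,Y$ are connected precisely when there are orders $Q,Q'$ over $A$ with $X$ first and $Y$ second in $\alpha(Q)|\CG(F)$, and $Y$ first and $X$ second in $\alpha(Q')|\CG(F)$. The structural fact I would lean on is that $\alpha(Q)$ lists committees grouped by their best element (optimistically), breaking ties by the worst element; after deleting the non-edges, the first surviving committee is the voter's favourite committee (by the lemma already proved), whose best element is forced to be the top element of $Q$ — this uses weak viability in the form that $\CG(F)$ has no isolated vertex, so every alternative, in particular the top of $Q$, lies on some edge.

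For the \emph{if} direction, suppose $X=\set{a,b}$ and $Y=\set{a,c}$ are incident edges, sharing $a$, with $b\neq c$. I would take $Q$ to rank $a$ first, $b$ second, $c$ third, and the rest below. Then the favourite committee is $\set{a,b}=X$, since the best element is the top vertex $a$ and its highest-ranked neighbour is $b$; hence $X$ is first in $\alpha(Q)|\CG(F)$. Since $\set{a,c}$ is the next committee containing $a$ in the lexicographic order and is itself an edge, it survives the restriction as the second choice, giving $X$ first and $Y$ second. Exchanging the roles of $b$ and $c$ produces $Q'$ witnessing $Y$ first and $X$ second. Note this direction only needs that both $X$ and $Y$ lie in the range, not edge-connectivity.

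For the \emph{only if} direction, suppose $X,Y$ are connected, witnessed by $Q$ with $X$ first and $Y$ second in $\alpha(Q)|\CG(F)$. Write $X=\set{a,b}$ with $a$ the top element of $Q$ and $b$ its highest-ranked neighbour. I would split on the degree of $a$ in $\CG(F)$. If $a$ has a second neighbour, then some edge other than $\set{a,b}$ also has best element $a$; because all edges with best element $a$ precede every other edge in $\alpha(Q)$, the second surviving committee $Y$ must again contain $a$, hence is incident to $X$. The hard case — and the main obstacle — is when $a$ is a leaf, for then the second committee need not contain $a$ and could \emph{a priori} be disjoint from $X$. Here I would invoke edge-connectivity (\cref{fact:edgeconnected}): if some edge $\set{c,d}$ avoided both $a$ and $b$, it would be non-incident to $\set{a,b}$, forcing $\set{a,c}$ or $\set{a,d}$ to be an edge and contradicting that $a$ is a leaf. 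Thus every edge other than $\set{a,b}$ contains $b$, so in particular $Y$ contains $b$ and is incident to $X$. (The degenerate possibility that $\set{a,b}$ is the only edge cannot arise, since $Y$ is a distinct second choice, so at least two edges are present.)
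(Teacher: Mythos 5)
Your argument is correct and essentially the same as the paper's: the ``if'' direction uses the identical witness (shared vertex on top, then the two distinguishing elements, so that $X$ and $Y$ are the first two surviving pairs of the lexicographic order), and the ``only if'' direction makes the same appeal to edge-connectivity to rule out a non-incident second choice. Your handling of the case where the top alternative is a leaf is slightly more explicit than the paper's (which simply notes that a non-incident edge would contradict edge-connectivity there), but it is the same idea.
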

\begin{proof}
    Let $X=\set{a,b}$ and $Y=\set{b,c}$. Let $P$ be a linear order
    ranking $b\succ a\succ c\succ\dots$ and $P'$ ranking $b\succ c\succ a\succ\dots$. Recall that $\alpha(P)|\CG(F)$ is the lexicographic order
    $\succeq_i^L$. The first element is clearly
    $\set{a,b}$ and the second $\set{b,c}$. Likewise for $\alpha(P')|\CG(F)$,
    the first element is $\set{b,c}$ and the second $\set{a,b}$.
    
    Now suppose $X=\set{a,b}$ and $Y=\set{c,d}$, all $a,b,c,d$ distinct.
    Let $P$ be any preference order for which the first element in $\alpha(P)|\CG(F)$ is $\set{a,b}$. As $\alpha(P)|\CG(F)$ is the lexicographic order, either $a$ or $b$ have to be ranked first.
    Without loss of generality, let it be $a$. Obseve that one
    of the following must be true:
    \begin{enumerate}
        \item The second element in $\alpha(P)|\CG(F)$ is $\set{a,e}$
        for some $e$.
        \item $\set{a,b}$ is the only edge in $\CG(F)$ incident on $a$.
    \end{enumerate}
    If the first is true, $\set{c,d}$ cannot be ranked second, and
    hence $X$ and $Y$ cannot be connected. If the second is true,
    we violate edge connectivity -- $\set{a,b}$ and $\set{c,d}$
    are both edges, so either $\set{a,c}$ or $\set{a,d}$ must
    be an edge.
\end{proof}

\begin{proposition}\label{prop:irreduciblelinked}
    Let $F:\CP(V,A)\ria S_2(A)$ be an irreducible, weakly viable consular election rule
    satisfying SPP and SPO. Then $\alpha(\CP(V,A))|\CG(F)$ is linked.
\end{proposition}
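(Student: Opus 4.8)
The plan is to reduce the statement to a purely combinatorial claim about $\CG(F)$ and then prove that claim by a greedy construction. By \cref{lem:connect}, two committees are connected in $\alpha(\CP(V,A))|\CG(F)$ exactly when they are incident edges of $\CG(F)$. Hence being linked amounts to ordering the edges of $\CG(F)$ as $X_1,\dots,X_m$ so that $X_1,X_2$ are incident and, for every $i\geq 3$, the edge $X_i$ is incident to at least two of $X_1,\dots,X_{i-1}$. Throughout I would lean on a simple bookkeeping observation: if both endpoints of a new edge $\set{x,y}$ already lie on previously placed edges, then $\set{x,y}$ is automatically incident to two of them (an edge through $x$ and an edge through $y$, which are distinct since a placed edge containing both $x$ and $y$ would be $\set{x,y}$ itself); and if $\set{x,y}$ has one new endpoint $y$ while the old endpoint $x$ lies on at least two placed edges, the same conclusion holds through $x$.

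First I would fix a triangle. Since $F$ is irreducible, \cref{fact:3cycles} (equivalently \cref{cor:triangle}) supplies a $3$-cycle $\set{a,b,c}$, and I place its edges in the order $\set{a,b},\set{a,c},\set{b,c}$: the first two share $a$, and the third is incident to $\set{a,b}$ through $b$ and to $\set{a,c}$ through $c$. This yields a covered vertex set $U=\set{a,b,c}$ in which every vertex has degree $2$ in the subgraph built so far, and I would maintain this as the running invariant: \emph{the placed edges span a set $U\subseteq A$ in which every vertex has degree at least $2$.}

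Next I would grow $U$ one triangle at a time. While $U\neq A$, connectivity of $\CG(F)$ (it has diameter $2$) furnishes an edge $\set{v,w}$ with $v\notin U$ and $w\in U$; by the invariant $w$ already lies on at least two placed edges, so placing $\set{v,w}$ is legal. By \cref{cor:triangle} the edge $\set{v,w}$ sits in a triangle $\set{v,w,z}$. If $z\in U$, I then place $\set{v,z}$, whose endpoints are both present, giving $v$ degree $2$. If $z\notin U$, I place $\set{w,z}$ (legal, as $w$ still has degree at least $2$) and then $\set{v,z}$ (both endpoints now present), so that both new vertices $v$ and $z$ finish with degree $2$. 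Either way the invariant is restored and $U$ strictly grows, so the process terminates with $U=A$. Finally, I append every remaining edge of $\CG(F)$ in any order: each has both endpoints in $U=A$, hence is incident to two already-placed edges. The resulting sequence witnesses that $\alpha(\CP(V,A))|\CG(F)$ is linked.

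The crux — and the only place the geometry of $\CG(F)$ genuinely matters — is the growth step: a new vertex's sole anchor into $U$ might have been a degree-$1$ attachment, which would stall the construction. The device that avoids this is \cref{cor:triangle}: because every edge lies in a triangle, a freshly attached vertex can always be promoted to degree $2$ within $U$ (dragging in the apex $z$ if necessary), so the invariant survives. I expect the point to check most carefully is exactly that this triangle promotion keeps every placed edge incident to two predecessors, i.e.\ that the two bookkeeping cases above account for all edges added during a growth step; irreducibility is precisely what guarantees the triangles these steps consume.
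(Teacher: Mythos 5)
Your proof is correct, and while it follows the paper's overall strategy -- reduce via \cref{lem:connect} to ordering the edges of $\CG(F)$, then grow a partial ordering greedily -- the inductive engine is genuinely different. The paper maintains the invariant that the current edge set $I$ has no isolated edges, adds a single edge per step, and locates that edge through a case analysis that ultimately falls back on the forbidden-induced-subgraph form of edge-connectivity (\cref{fact:edgeconnected}) for its hardest case. You instead maintain a vertex-level invariant (every covered vertex has degree at least $2$ in the placed subgraph), grow the covered set one triangle at a time using only connectivity and \cref{cor:triangle}, and sweep up all remaining edges in a trivial final pass via your bookkeeping observation that an edge with both endpoints already covered is automatically incident to two distinct predecessors. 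This buys two things: the raw edge-connectivity property is never needed in the growth step (only the derived facts that the graph is connected and that every edge lies in a triangle), and termination is transparent since progress is measured in vertices rather than edges. The delicate points -- that each newly placed edge in a growth step is not already in the sequence (because it touches a vertex outside $U$), and that the two witnessing predecessors are distinct (because a placed edge containing both endpoints would be the new edge itself) -- are exactly the ones you flag, and they check out. The argument is, if anything, a little cleaner than the paper's, whose final case analysis is slightly terse about which edges of the witnessing configuration actually belong to $I$.
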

\begin{proof}
    Applying \cref{lem:connect} to \cref{def:linked}, we want
    to show that it is possible to order the edges $X_1,\dots,X_q$
    of $\CG(F)$ in a way that:
    \begin{enumerate}
        \item $X_1$ is incident on $X_2$.
        \item For $i\geq 3$, $X_i$ is incident on at least two elements
        of $\set{X_1,\dots,X_{i-1}}$.
    \end{enumerate}
    To obtain 1, observe that irreducibility implies $|A|\geq 3$,
    and weak viability implies that no vertex is isolated. To connect
    three vertices we need at least two edges, call them $\set{a,b}$
    and $\set{c,d}$. If $a,b,c,d$ are not distinct, the two edges
    are incident. If $a,b,c,d$ are distinct, edge connectivity
    implies either $\set{a,c}$ or $\set{a,d}$ is an edge, both of
    which are incident on $\set{a,b}$.
    
    To obtain 2, we will demonstrate that given a set of edges $I\subset\CG(F)$
    which satisfies the property that every that every $e\in I$
    is incident on at least one other member of $I$ (i.e. $I$ has no
    isolated edges). we can find a $e'\notin I$ such that $e'$ is incident
    on at least two members of $I$. The reader will note that this
    means that $I\cup\set{e'}$ also satisfies the property that every
    $e\in I\cup\set{e'}$ is incident on at least one other member of
    $I\cup\set{e'}$. As such, starting with a set of two incident
    edges, it is possible to continually add edges until the entire
    graph is shown to satisfy 2.
    
    Let $I$ be such a set. As the graph is connected, there
    must exist a $\set{x,y}\notin I$ such that
    $\set{x,y}$ is incident on $I$. Without loss of generality, suppose
    $x$ belongs to an edge in $I$. If $y$ also belongs to an edge in $I$
    we are done -- $\set{x,y}$ is incident on $\set{x,v}$ and $\set{y,w}$
    for some $\set{x,v},\set{y,w}\in I$. Suppose then that $y$ does not
    belong to an edge in $I$.
    
    Let $\set{x,w}$ be an edge in $I$. Recall that $\set{x,w}$ is
    incident on at least one member of $I$. If that member is
    $\set{x,v}$, we are done -- $\set{x,y}$ is incident on $\set{x,w}$
    and $\set{x,v}$, and we can add it to $I$. If that member is
    $\set{v,w}$, we can visualise the situation as follows, taking
    the vertices at the top to belong to $I$.
    
    \begin{center}
    \begin{tikzpicture}[auto,
                        thick]
    
      \node (a) {$x$};
      \node (b) [right of=a] {$v$};
      \node (c) [right of=b] {$w$};
      \node (d) [below of=a] {$y$};
    
      \path[every node/.style={font=\sffamily\small}]
        (a) edge node {} (d)
        (a) edge node {} (b)
        (b) edge node {} (c)
        ;
    \end{tikzpicture}
    \end{center}
    
    By \cref{cor:triangle}, there exists a $z$ such that
    $\set{x,z}$ and $\set{y,z}$ are both edges. If $z=v$
    we are done -- $\set{y,v}$ is incident on two
    edges in $I$, so we can add it to the set.
    If $z\neq v$ belongs to an edge in $I$ we are also done,
    as $\set{x,y}$ is incident on $\set{x,z}$ and $\set{x,v}$.
    Suppose then that $z$ does not belong to an edge in $I$.
    
    \begin{center}
    \begin{tikzpicture}[auto,
                        thick]
    
      \node (a) {$x$};
      \node (b) [right of=a] {$v$};
      \node (c) [right of=b] {$w$};
      \node (d) [below of=a] {$y$};
      \node (e) [right of=d] {$z$};
    
      \path[every node/.style={font=\sffamily\small}]
        (a) edge node {} (d)
        (a) edge node {} (b)
        (b) edge node {} (c)
        (e) edge node {} (a)
        (e) edge node {} (d)
        ;
    \end{tikzpicture}
    \end{center}
    If no other edges are present, this is a forbidden induced subgraph. 
    We must posit that either $\set{y,v}$ or
    $\set{v,z}$ is an edge. If it is $\set{y,v}$, we
    can add $\set{y,v}$, incident on $\set{x,v}$ and $\set{v,w}$, to $I$. If it is $\set{v,z}$,
    we can add $\set{v,z}$, incident on $\set{x,z}$ and $\set{x,v}$, to $I$. Either way we increase the size of $I$.
\end{proof}

We finally achieve the promised main result.

\begin{corollary}
\label{cor:main}
    $\CF':\alpha(\CP(V,A))\ria S_2(A)$ is dictatorial,
    hence $F$ has a range dictator.
\end{corollary}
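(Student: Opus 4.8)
The plan is to discharge the three hypotheses of \cref{thm:aswal} for the social choice function $\CF'$ and then read off the conclusion. Since the ``alternatives'' over which $\CF'$ chooses are the committees in the range of $F$, i.e.\ the edges of $\CG(F)$, I first need at least three of them. Irreducibility forces $\CG(F)$ to be non-bipartite, so it contains an odd cycle, and by \cref{fact:3cycles} every vertex in fact lies on a $3$-cycle; hence $\CG(F)$ has at least three edges and the ``$|A|\geq 3$'' clause of \cref{thm:aswal} is satisfied (with the edges playing the role of alternatives).

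Next I would collect the remaining hypotheses. Unanimity of $\CF'$ is exactly \cref{lem:restrictionunanimous}, and linkedness of its domain $\alpha(\CP(V,A))|\CG(F)$ is \cref{prop:irreduciblelinked}. For strategy-proofness I use that any deviation of voter $i$ within the domain $\alpha(\CP(V,A))|\CG(F)$ is, by construction, the restriction of $\alpha(P_i')$ for some alternative ballot $P_i'$; since $\CF'(\alpha(P)|\CG(F))=F(P)$, such a deviation would be a $\succeq_i^L$-manipulation of $F$. As $F$ satisfies SPP and SPO, and every lexicographic manipulation of a consular election rule is either optimistic or pessimistic, $F$ admits no $\succeq_i^L$-manipulation; therefore $\CF'$ is strategy-proof. \cref{thm:aswal} then yields that $\CF'$ is dictatorial.

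It remains to transfer this dictatorship back to $F$. A dictator of $\CF'$ is a voter $i$ whose first choice in $\alpha(P_i)|\CG(F)$ is always the elected committee, and by the lemma identifying voter $i$'s first choice in $\alpha(P_i)|\CG(F)$ with his favourite committee, this says precisely that $i$'s favourite committee is elected at every profile -- that is, $i$ is a range dictator in the sense of \cref{def:rangedictator}. I expect the only genuine subtlety to be the bookkeeping in the strategy-proofness step: one must check that strategy-proofness of $\CF'$ on the abstract domain of lexicographic orders coincides with $\succeq_i^L$-strategy-proofness of $F$, so that passing to the restriction $\alpha(\CP(V,A))|\CG(F)$ neither creates nor destroys manipulations, and that $F(P)$ always lies in the restricted range $S_2(A)|\CG(F)$, which holds by the definition of $\CG(F)$. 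Everything else is a direct invocation of results already proved.
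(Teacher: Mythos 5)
Your proposal is correct and follows the same route as the paper: the paper's own proof is just the two-line invocation of \cref{thm:aswal} via \cref{prop:irreduciblelinked} (linkedness) and \cref{lem:restrictionunanimous} (unanimity), leaving implicit the points you spell out (that $\CG(F)$ has at least three edges, that SPP and SPO give $\succeq_i^L$-strategy-proofness of $\CF'$, and the translation of the dictator of $\CF'$ into a range dictator of $F$ via the favourite-committee lemma). Your version simply makes those implicit steps explicit; no substantive difference.
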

\begin{proof}
By \cref{prop:irreduciblelinked} the domain is linked, and by
\cref{lem:restrictionunanimous} the function is unanimous. We can apply \cref{thm:aswal}.
\end{proof}

As a sanity check, we can verify that Marian elections also have a linked domain,
whereas non-Marian reducible elections do not.

\begin{proposition}
    Let $F:\CP(V,A)\ria S_2(A)$ be a Marian, weakly viable consular election rule
    satisfying SPP and SPO.   
   Then $\alpha(\CP(V,A))|\CG(F)$ is linked.
\end{proposition}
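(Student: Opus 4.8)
The plan is to exploit the fact that a Marian rule forces $\CG(F)$ to be a star. Let $m$ be Marius. Since $m\in F(P)$ for every $P$, every committee in the range has the form $\set{m,a}$, so the only possible edges of $\CG(F)$ are those incident on $m$. Conversely, weak viability guarantees that each $a\neq m$ lies on some winning committee, which must be $\set{m,a}$; hence $\CG(F)$ is exactly the star with centre $m$ and leaves $A\setminus\set{m}$.

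The key observation is then that any two distinct edges of a star share the vertex $m$, and so are incident. A Marian, weakly viable rule satisfying SPP and SPO meets the hypotheses of \cref{lem:connect}, which tells us that two committees are connected in $\alpha(\CP(V,A))|\CG(F)$ precisely when the corresponding edges are incident in $\CG(F)$. Combining these two facts, every pair of distinct committees in the range of $F$ is connected in $\alpha(\CP(V,A))|\CG(F)$.

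It then remains to check the defining conditions of \cref{def:linked} for the ``alternatives'' of $\CF'$, which are the edges $X_1,\dots,X_q$ of the star (so $q=|A|-1$). Order them arbitrarily. Condition 1 holds because $X_1$ and $X_2$ are incident, hence connected. For condition 2, any $X_i$ with $i\geq 3$ is connected to every earlier edge, and there are $i-1\geq 2$ of them, so $X_i$ is connected to at least two elements of $\set{X_1,\dots,X_{i-1}}$. This establishes linkedness whenever $|A|\geq 3$; the case $|A|=2$ is the degenerate single-committee rule, for which linkedness is trivial.

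There is no genuine obstacle here: the entire content is recognising the star structure and invoking \cref{lem:connect}. The only point requiring slight care is that ``connected'' in the sense of \cref{def:connected} refers to committees being adjacent as edges, i.e. to edge-incidence in $\CG(F)$, rather than to any notion of connectivity of the underlying vertices. Once this translation is in place the conditions of \cref{def:linked} are immediate, precisely because all edges of a star are mutually incident. This also highlights the contrast with the non-Marian reducible case, where $\CG(F)$ is a complete bipartite graph with both sides of size at least two, so it contains non-incident edges and the above argument breaks down.
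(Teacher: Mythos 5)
Your proof is correct and follows essentially the same route as the paper's (one-line) argument: the Marian structure makes $\CG(F)$ a star centred on Marius, so all edges are pairwise incident, and by \cref{lem:connect} pairwise incident means pairwise connected, from which the conditions of \cref{def:linked} are immediate. The only difference is that you spell out the star structure and the degenerate small cases explicitly, which the paper leaves implicit.
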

\begin{proof}
    Every edge is incident on each other.
\end{proof}
This gives us \cref{prop:Mariandictator}. 

\begin{corollary}
    For $|A|\geq 4$, a Marian, weakly viable election
    satisfying SPP and SPO has a weak/range dictator (the two notions coincide).
\end{corollary}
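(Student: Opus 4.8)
The plan is to recycle the argument of \cref{cor:main} almost verbatim, the only genuinely new ingredient being a count of how many committees lie in the range. Since $F$ is Marian with Marius $m$, every committee contains $m$, and weak viability forces every other alternative onto some committee; hence $\CG(F)$ is the star centred at $m$, with exactly $|A|-1$ edges. The immediately preceding proposition already tells us that $\alpha(\CP(V,A))|\CG(F)$ is linked, and \cref{lem:restrictionunanimous} tells us that $\CF'$ is unanimous, so the only thing to check before invoking \cref{thm:aswal} is that $\CF'$ has at least three alternatives. Its alphabet is the edge set of the star, of size $|A|-1$, and this is precisely where $|A|\geq 4$ is the hypothesis we need: it guarantees $|A|-1\geq 3$. (For $|A|=3$ there are only two committees, Aswal's theorem does not apply, and indeed the majority-with-lexicographic-tie-breaking rule shows the conclusion genuinely fails.) \cref{thm:aswal} then makes $\CF'$ dictatorial, and exactly as in \cref{cor:main} the dictator of $\CF'$ is a range dictator of $F$.

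It remains to argue that, in the Marian case, a range dictator and a weak dictator are the same voter. One direction is the general observation recorded after \cref{def:rangedictator}: under weak viability a range dictator is always a weak dictator. For the converse I would first compute the favourite committee of an arbitrary voter $i$ explicitly. Because every committee has the form $\set{m,a}$, its optimistic value is governed by the better of $m$ and $a$ and its pessimistic value by the worse; a short case analysis then shows that $i$'s favourite committee is $\set{m,t}$, with $t$ the top choice of $i$, whenever $i$ does not rank $m$ first, and is $\set{m,s}$, with $s$ the second choice of $i$, whenever $i$ does rank $m$ first.

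Now suppose $i$ is a weak dictator and fix a profile $P$ with $F(P)=\set{m,c}$. If $i$ does not rank $m$ first, the weak dictator condition puts $i$'s top choice $t$ on the committee, forcing $c=t$ and hence $F(P)=\set{m,t}$, the favourite committee. The one delicate case is when $i$ ranks $m$ first: there the weak dictator condition is vacuous, since $m\in F(P)$ holds automatically, so $c$ must be pinned down by a monotonicity argument. I would pass to the profile $P'$ obtained by sliding $m$ down $i$'s ballot to just below the second choice $s$. Pushing $m$ down can, by \cref{lem:downmono}, only alter the committee by evicting $m$; but $m$ is a Marius and cannot be evicted, so $F(P')=F(P)=\set{m,c}$. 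On the other hand $s$ is now $i$'s top choice, so weak dictatorship gives $s\in F(P')=\set{m,c}$ and therefore $c=s$; thus $F(P)=\set{m,s}$ is again the favourite committee and $i$ is a range dictator. The main obstacle is exactly this last case: the weak-dictator hypothesis says nothing about the second consul when a voter tops their ballot with $m$, and it is only the interplay of the Marian property with downwards monotonicity that recovers the missing information.
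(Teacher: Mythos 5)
Your proof is correct and follows the paper's argument exactly: the range graph is a star with $|A|-1\geq 3$ edges, the domain is linked by the immediately preceding proposition, $\CF'$ is unanimous, and Aswal's theorem yields the dictator. Your additional verification that weak and range dictatorship coincide for Marian rules (including the downwards-monotonicity argument for the case where the voter ranks Marius first) is sound and fills in a detail the paper leaves as an unproved parenthetical.
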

\begin{proof}
    For $|A|\geq 4$, the range graph of the election has at least three edges. This means that $\alpha(\CP(V,A))|\CG(F)$ is a linked domain over the set of alternatives $S_2(A)|\CG(F)$, with
    $|S_2(A)|\CG(F)|\geq 3$, so we can invoke \cref{thm:aswal}.
\end{proof}

\begin{proposition}
    Let $F:\CP(V,A)\ria S_2(A)$ be a non-Marian, reducible, weakly viable consular election rule
    satisfying SPP and SPO. Then $\alpha(\CP(V,A))|\CG(F)$ is not linked.
\end{proposition}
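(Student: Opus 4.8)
The plan is to reduce the statement to a purely combinatorial fact about the range graph, which by hypothesis is complete bipartite. First I would pin down the shape of $\CG(F)$: since $F$ is reducible, weakly viable and satisfies SPP and SPO, the earlier fact that such a rule has a complete bipartite range graph applies, so $\CG(F)=K_{m,n}$ where $m=|B|$ and $n=|C|$ are the sizes of the two parts. Because $F$ is non-Marian, \cref{cor:Marianacyclic} tells us $\CG(F)$ is not acyclic, and a complete bipartite graph contains a cycle precisely when both parts have at least two vertices; hence $m,n\geq 2$. The alternatives that $\CF'$ ranks are exactly the edges of $\CG(F)$, i.e. the committees $\set{b,c}$ with $b\in B$ and $c\in C$, and by \cref{lem:connect} two distinct committees are connected in $\alpha(\CP(V,A))|\CG(F)$ if and only if they are incident edges of $K_{m,n}$ — equivalently, if and only if they share their $B$-endpoint or their $C$-endpoint.

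Next I would argue by contradiction: suppose the domain is linked, so by \cref{def:linked} the $mn$ committees can be listed $X_1,\dots,X_{mn}$ with $X_1$ connected to $X_2$ and each $X_i$ for $i\geq 3$ connected to at least two earlier committees. The heart of the argument is a confinement lemma: if every committee among $X_1,\dots,X_j$ passes through one fixed vertex $b_0\in B$, then $X_{j+1}$ must also pass through $b_0$. Indeed, a committee $\set{b',c'}$ with $b'\neq b_0$ is connected to a listed committee $\set{b_0,c}$ only when $c'=c$; but the committees through $b_0$ are the $\set{b_0,c}$ for pairwise distinct $c$, so at most one of them has $C$-endpoint $c'$, giving $\set{b',c'}$ at most one earlier connection, which is too few. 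The symmetric statement holds with the roles of $B$ and $C$ interchanged.

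Finally I would combine these. Since $X_1$ and $X_2$ are connected and distinct, they share exactly one endpoint; without loss of generality say they share a vertex $b_0\in B$. Applying the confinement lemma inductively, starting from $\set{X_1,X_2}$, forces every $X_i$ to pass through $b_0$. But the committees through $b_0$ are only the $n$ edges $\set{b_0,c}$ with $c\in C$, so the list has length at most $n<mn$ (as $m\geq 2$), contradicting that it enumerates all $mn$ committees. If instead $X_1$ and $X_2$ share a vertex in $C$, the transposed argument caps the list at $m<mn$. Either way no linked ordering exists.

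I expect the only real content to be the confinement lemma together with the observation that any linked listing is forced to begin inside a single ``row'' (common $B$-endpoint) or a single ``column'' (common $C$-endpoint) of the line graph $L(K_{m,n})$. The main obstacle is not difficulty but making the reduction to $K_{m,n}$ with both parts of size at least two completely airtight: a naive edge-count necessary condition for linkedness is useless here, since it already fails for $K_{2,2}$ and is met with no slack by $K_{2,3}$, so the argument must be structural rather than numerical.
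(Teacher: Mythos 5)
Your proposal is correct and follows essentially the same route as the paper's proof: reduce to the fact that $\CG(F)$ is a complete bipartite non-star graph, identify connectedness of committees with incidence of edges, and show that any linked ordering is confined to the star of edges through the shared endpoint of $X_1$ and $X_2$, which cannot exhaust $K_{m,n}$ when $m,n\geq 2$. Your ``confinement lemma'' is just a more explicit statement of the induction the paper compresses into one sentence, so there is nothing to add.
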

\begin{proof}
    Recall that $\CG(F)$ is a complete bipartite
    graph that is not acyclic (i.e., not
    a star graph). Let $X_1=\set{a,b}$ and $X_2=\set{a,c}$ be any two incident edges. Let $V_1$ and $V_2$ be the two halves of $\CG(F)$, $a\in V_1$ and $b,c\in V_2$. Observe
    that the only way to expand $\set{X_1,X_2}$ consistent with
    \cref{def:linked} is to add some $\set{a,d}$ to the set. As
    $\CG(F)$ is not a star graph, by doing so we can never cover
    the entire graph.
\end{proof}

\section{Summary}

We have demonstrated that the condition of weak viability allows for the existence of non-trivial
structure in the range of an election, which
we have visualised as a property of the resulting
range graphs.

For reducible elections, we have shown the following:
\begin{itemize}
\item
If the election is Marian, for $|A|\geq 4$ a range dictator exists.
\item
If the election is non-Marian then a dictator may not exist. However, for $|A|\geq 5$ a ``partial dictator'' exists, in the sense that some voter
can unilaterally decide which element of $B\subset A$ is present on the winning committee.
\end{itemize}
For irreducible elections we have a dictatoriality result:
\begin{itemize}
\item
For $|A|\geq 4$, a range dictator exists.
\end{itemize}

\subsection{Discussion}

\subsubsection{Extending the results}

According to Livy, the last king of Rome was overthrown in 509 BC by a body of citizens
led by Lucius Junius Brutus. The first act of Brutus after the expulsion of the king
was to have the people swear an oath to never let a single man govern Rome. It is
on this basis that the Romans justified the duplication of the post of consul and
other key positions in government.

Like most Roman political experiments, this one eventually ended in failure. Two is
not a good number in government. Any disagreement results in an even split, which means that
to avoid political deadlock, elaborate schemes of power sharing need to be devised. One
such scheme used by the Romans involved the consuls commanding the army on alternating days --
this led to the farcical prelude to the Battle of Cannae where Varro ordered the army
to advance during his days of command, and Paullus held it back during his. The purpose
of this historical aside is to suggest that while the case of $k=2$ raises
many interesting issues, it should nevertheless be seen as a stepping stone
to studying larger committee sizes, rather than an end in itself.

The argument as presented here fails to generalise fairly quickly -- \cref{lem:sbot2}
already relies on the fact that the winning committee is of size two. Without
\cref{lem:sbot2} there is no monotonicity, and without monotonicity there is
nothing to be said. Admittedly, we did not spend a great deal of time
attempting to generalise these lemmas to arbitrary $k$, and intuitively we suspect
some version of monotonicity for general committee selection rules should not be
too difficult to establish. The more serious issue arises when dealing with the
range of such elections. In the case of $k=2$, we were abetted greatly by the fact
that we could visualise the range of an election as a graph, and use results and intuitions
from graph theory to guide us along. To replicate this approach in the case of $k=3$
would require us to either consider a hypergraph, or to index a range graph by the perspective
of a voter, adding an edge between the best and worst alternative of every committee. It is unclear
that either of these approaches will give us the insights we need, nor is there
any a priori reason to believe that in such a context voters would still
have a favourite committee, without which a range dictator is ill-defined. The case of variable size
committees is even worse: now the hypergraph is not even uniform. Moreover, in the case
of fixed $k$-committees we could potentially side step the issue and study onto rules
only; such a result would be interesting enough in itself. In the case of variable size
committees the onto case is trivialised by the Duggan-Schwartz theorem, so tackling
the range of the election is unavoidable.

The specific technique used in \cref{prop:irreduciblelinked} is also particular to
the case $k=2$ -- in this case the lexicographic order we defined is strictly
narrower than Duggan-Schwartz manipulation and hence a dictatoriality result for
the former implies, a fortiori, the latter. This is no longer the case with $k=3$.
A voter with preferences $a\succ_1 b\succ_1 c\succ_1 d\succ_1 e$ has a pessimistic
manipulation from $\set{a,b,e}$ to $\set{a,c,d}$ but no lexicographic manipulation;
and a lexicographic manipulation from $\set{a,c,d}$ to $\set{a,b,d}$ even though
pessimistically he is worse off. However, we suspect this difficulty to be less
serious than the issue of the range. 

\subsubsection{Committees of arbitrary size}

In recent years, social choice
correspondences have reappeared in the literature
under a number of names: multi-winner voting rules \citep{Meir2008}, resolute social choice correspondences \citep{Ozyurt2008}, irresolute voting rules \citep{handbook2016}, social dichotomy rules \citep{Duddy2014}.  Mathematically, these are the same concept -- a function that takes a tuple of preference orders over
$A$ to a non-empty subset of $A$. What is different is the interpretation. In classical social choice theory, alternatives are taken to mean
mutually exclusive states of the world. A social choice correspondence, then, could take the economic needs of a population to a set of compatible equilibria, or a nation's political preferences to a shortlist of presidential candidates, with the assumption that some tie-breaking procedure will pick one, and only one, of these alternatives at a later date. However, the same mathematical formalism can be used to study an algorithm which aggregates film rankings to produce a library for an aeroplane's entertainment system, or an election that directly picks members of parliament, in which case the outcome of the function is the final result for society, and not a tie to be resolved in the future.

In this paper we have dealt with the special case of rules which must output a committee of fixed size.
If we relax this to allow committees of arbitrary size, then all the impossibility results established for social choice correspondences apply. However, the assumptions underlying these results may be inappropriate for the given interpretation. In particular, the viability assumptions of
the Duggan-Schwartz theorem, requiring that a correspondence be onto with respect to singletons, may be
unjustifiably strong -- there is no \emph{a priori} reason to privilege singletons
over any other committee size. For example, imagine a situation where certain singletons are allowed and others are not -- in a catering scenario it might be the case that a menu consisting of a vegetarian option only is acceptable, but meat only is not. Thus there may still be room in the literature for a new strategy-proofness result applicable to the committee interpretation, with more natural assumptions.

\subsubsection{Manipulation model}

The assumptions of the Duggan-Schwartz theorem make
the most sense in the context of a social choice function
that generally elects singletons, but occasionally produces
a tie which is broken in a non-deterministic manner. In
this setting the requirement that the function be onto
with respect to singletons is a natural analogue of a
social choice; the fact that the value of a committee
is assessed only by the individual values of the alternatives
on that committee makes
sense because after the tie-breaking mechanism is invoked
only one alternative will be selected; and the optimistic
and pessimistic preference orderings used, while extreme,
are nevertheless natural ways to assess the outcome
of the tie.

In the case of a committee selection rule sets do not
represent ties to be resolved, but the final outcome
of the election process. Given that the committee is intended
to govern together, assessing its value by the value of its individual
members is no longer appropriate, as we need to consider the externalities
the members impose on each other -- Octavian and Mark Antony individually
were the most respected men in Rome, but governing together their power
struggle tore apart the republic. Finally, assessing a committee by the min
and the max introduces additional problems in this setting. For instance, under
a multiwinner election rule a voter with min/max preferences
will strictly prefer the committee $\set{a}$ over every other committee in which $a$ is
the best element, but it is not difficult to think of a context
where larger committees are more beneficial by virtue of being large.

Throughout this paper we have replaced the requirement that a social
correspondence be onto singletons with weak viability.
Weak viability appears to be the appropriate notion in this setting;
anything weaker would imply that there exists an $a$ that does not
feature on any winning committee, and under the assumptions of strategy-proofness
such an $a$ would not affect the outcome of any election. A stronger
notion is unnecessary as weak viability is sufficient to obtain a
dictatoriality result.

The fact that we continue to evaluate committees solely on the basis of the
committee's members is the greatest weakness of this paper, but 
this weakness is unavoidable in the setting of a committee
selection rule as established in the literature: the input to our
function is a profile of linear orders over \emph{alternatives}.
We simply have no way to distinguish between a voter that ranks $a$ first, $b$
second, and feels that $a$ and $b$ would make a great pairing, and a voter
that ranks $a$ first, $b$ second, but would never want to see the two
elected together. If we cannot distinguish between them,
we must treat them the same, and thus it is impossible to handle
the externalities committee members impose on each other
within the standard framework of committee selection
rules. If we are to study more realistic preferences over
committees, we need to move to a setting where a ballot allows a voter
to express such preferences, but that brings us to the familiar problem
of how do we fit an order over $2^A$ into the space of one ballot.

Evaluating a committee by the min and the max, while problematic
in the case of general committee selection rules, turns out to be
natural and appropriate in the case of consular election rules. Voter
$i$'s favourite committee is $\set{a,b}$ such that for every $\set{c,d}$,
any member of $\set{a,b}$ is at least as good as any member of $\set{c,d}$
and one is strictly better -- in other words, Pareto dominance. If, however,
we are to extend the enquiry to larger committees, the Duggan-Schwartz model of manipulation might need further justification or to be abandoned entirely.

\bibliographystyle{plainnat}             
\bibliography{references}  

\begin{thebibliography}{21}
\providecommand{\natexlab}[1]{#1}
\providecommand{\url}[1]{\texttt{#1}}
\expandafter\ifx\csname urlstyle\endcsname\relax
  \providecommand{\doi}[1]{doi: #1}\else
  \providecommand{\doi}{doi: \begingroup \urlstyle{rm}\Url}\fi

\bibitem[Aswal et~al.(2003)Aswal, Chatterji, and Sen]{Aswal2003}
Navin Aswal, Shurojit Chatterji, and Arunava Sen.
\newblock Dictatorial domains.
\newblock \emph{Economic Theory}, 22\penalty0 (1):\penalty0 45--62, 2003.
\newblock ISSN 1432-0479.
\newblock \doi{10.1007/s00199-002-0285-8}.
\newblock URL \url{http://dx.doi.org/10.1007/s00199-002-0285-8}.

\bibitem[Barber\`{a}(1977)]{Barbera1977}
Salvador Barber\`{a}.
\newblock The manipulation of social choice mechanisms that do not leave ``too
  much" to chance.
\newblock \emph{Econometrica}, 45\penalty0 (7):\penalty0 1573--1588, 1977.
\newblock ISSN 00129682, 14680262.
\newblock URL \url{http://www.jstor.org/stable/1913950}.

\bibitem[Barber\`{a}(2010)]{Barbera2010}
Salvador Barber\`{a}.
\newblock Strategy-proof social choice.
\newblock Barcelona Economics Working Paper Series 420, Universitat
  Aut\`{o}noma de Barcelona and Barcelona GSE, 2010.
\newblock URL \url{http://digital.csic.es/handle/10261/35321}.

\bibitem[Barberà et~al.(2004)Barberà, Bossert, and Pattanaik]{BBP2004}
Salvador Barberà, Walter Bossert, and Prasanta~K. Pattanaik.
\newblock \emph{Ranking sets of objects}.
\newblock Springer, 2004.
\newblock URL
  \url{http://link.springer.com/chapter/10.1007/978-1-4020-7964-1_4}.

\bibitem[Campbell and Kelly(2002)]{Campbell2002}
Donald~E. Campbell and Jerry~S. Kelly.
\newblock A leximin characterization of strategy-proof and non-resolute social
  choice procedures.
\newblock \emph{Economic Theory}, 20\penalty0 (4):\penalty0 809--829, 2002.
\newblock ISSN 1432-0479.
\newblock \doi{10.1007/s00199-001-0239-6}.
\newblock URL \url{http://dx.doi.org/10.1007/s00199-001-0239-6}.

\bibitem[Duddy et~al.(2014)Duddy, Houy, Lang, Piggins, and Zwicker]{Duddy2014}
Conal Duddy, Nicolas Houy, J{\'e}r{\^o}me Lang, Ashley Piggins, and William~S
  Zwicker.
\newblock Social dichotomy functions, 2014.
\newblock URL
  \url{http://static.uni-graz.at/fileadmin/sowi-institute/Volkswirtschaftslehre/Wendner/Economic_Research_Seminar/GV_Zwicker_extended_abstract_Feb_22.pdf}.

\bibitem[Duggan and Schwartz(1992)]{Duggan1992}
John Duggan and Thomas Schwartz.
\newblock Strategic manipulability is inescapable: Gibbard-satterthwaite
  without resoluteness.
\newblock Working Papers 817, California Institute of Technology, Division of
  the Humanities and Social Sciences, 1992.
\newblock URL \url{http://EconPapers.repec.org/RePEc:clt:sswopa:817}.

\bibitem[Duggan and Schwartz(2000)]{Duggan2000}
John Duggan and Thomas Schwartz.
\newblock Strategic manipulability without resoluteness or shared beliefs:
  Gibbard-satterthwaite generalized.
\newblock \emph{Social Choice and Welfare}, 17\penalty0 (1):\penalty0 85--93,
  2000.
\newblock ISSN 1432-217X.
\newblock \doi{10.1007/PL00007177}.
\newblock URL \url{http://dx.doi.org/10.1007/PL00007177}.

\bibitem[Dummett and Farquharson(1961)]{DuFa1961}
Michael Dummett and Robin Farquharson.
\newblock Stability in voting.
\newblock \emph{Econometrica: Journal of The Econometric Society}, pages
  33--43, 1961.
\newblock URL \url{http://www.jstor.org/stable/1907685}.

\bibitem[Gibbard(1973)]{Gibbard1973}
Allan Gibbard.
\newblock Manipulation of voting schemes: A general result.
\newblock \emph{Econometrica}, 41\penalty0 (4):\penalty0 587--601, 1973.
\newblock ISSN 00129682, 14680262.
\newblock URL \url{http://www.jstor.org/stable/1914083}.

\bibitem[Gibbard(1977)]{Gibb1977}
Allan Gibbard.
\newblock Manipulation of {Schemes} that {Mix} {Voting} with {Chance}.
\newblock \emph{Econometrica}, 45\penalty0 (3):\penalty0 665, April 1977.
\newblock ISSN 00129682.
\newblock \doi{10.2307/1911681}.
\newblock URL \url{http://www.jstor.org/stable/1911681?origin=crossref}.

\bibitem[Kelly(1977)]{Kelly1977}
Jerry~S. Kelly.
\newblock Strategy-proofness and social choice functions without
  singlevaluedness.
\newblock \emph{Econometrica}, 45\penalty0 (2):\penalty0 439--446, 1977.
\newblock ISSN 00129682, 14680262.
\newblock URL \url{http://www.jstor.org/stable/1911220}.

\bibitem[Meir et~al.(2008)Meir, Procaccia, Rosenschein, and Zohar]{Meir2008}
Reshef Meir, Ariel~D. Procaccia, Jeffrey~S. Rosenschein, and Aviv Zohar.
\newblock Complexity of strategic behavior in multi-winner elections.
\newblock \emph{J. Artif. Int. Res.}, 33\penalty0 (1):\penalty0 149--178,
  September 2008.
\newblock ISSN 1076-9757.
\newblock URL \url{http://dl.acm.org/citation.cfm?id=1622698.1622703}.

\bibitem[Moulin et~al.(2016)Moulin, Brandt, Conitzer, Endriss, Lang, and
  Procaccia]{handbook2016}
Herv{\'e} Moulin, Felix Brandt, Vincent Conitzer, Ulle Endriss, J{\'e}r{\^o}me
  Lang, and Ariel~D Procaccia.
\newblock \emph{Handbook of Computational Social Choice}.
\newblock Cambridge University Press, 2016.

\bibitem[{\"O}zyurt and Sanver(2008)]{Ozyurt2008}
Sel{\c{c}}uk {\"O}zyurt and M.~Remzi Sanver.
\newblock Strategy-proof resolute social choice correspondences.
\newblock \emph{Social Choice and Welfare}, 30\penalty0 (1):\penalty0 89--101,
  2008.
\newblock ISSN 1432-217X.
\newblock \doi{10.1007/s00355-007-0223-6}.
\newblock URL \url{http://dx.doi.org/10.1007/s00355-007-0223-6}.

\bibitem[{\"O}zyurt and Sanver(2009)]{Ozyurt2009}
Sel{\c{c}}uk {\"O}zyurt and M.~Remzi Sanver.
\newblock A general impossibility result on strategy-proof social choice
  hyperfunctions.
\newblock \emph{Games and Economic Behavior}, 66\penalty0 (2):\penalty0 880 --
  892, 2009.
\newblock ISSN 0899-8256.
\newblock \doi{http://dx.doi.org/10.1016/j.geb.2008.09.026}.
\newblock URL
  \url{http://www.sciencedirect.com/science/article/pii/S089982560800170X}.
\newblock Special Section In Honor of David Gale.

\bibitem[Pattanaik and Dutta(1978)]{Pattanaik1978}
P.K. Pattanaik and B.~Dutta.
\newblock \emph{Strategy and group choice}.
\newblock Contributions to economic analysis. North-Holland Pub. Co., 1978.
\newblock ISBN 9780444851260.
\newblock URL \url{https://books.google.co.nz/books?id=AdDrAAAAMAAJ}.

\bibitem[Reffgen(2011)]{Reffgen2011}
Alexander Reffgen.
\newblock Generalizing the gibbard--satterthwaite theorem: partial preferences,
  the degree of manipulation, and multi-valuedness.
\newblock \emph{Social Choice and Welfare}, 37\penalty0 (1):\penalty0 39--59,
  2011.
\newblock ISSN 1432-217X.
\newblock \doi{10.1007/s00355-010-0479-0}.
\newblock URL \url{http://dx.doi.org/10.1007/s00355-010-0479-0}.

\bibitem[Satterthwaite(1975)]{Satterthwaite1975}
Mark~Allen Satterthwaite.
\newblock Strategy-proofness and arrow's conditions: Existence and
  correspondence theorems for voting procedures and social welfare functions.
\newblock \emph{Journal of Economic Theory}, 10\penalty0 (2):\penalty0 187 --
  217, 1975.
\newblock ISSN 0022-0531.
\newblock \doi{http://dx.doi.org/10.1016/0022-0531(75)90050-2}.
\newblock URL
  \url{http://www.sciencedirect.com/science/article/pii/0022053175900502}.

\bibitem[Skowron et~al.(2016)Skowron, Faliszewski, and Slinko]{Skowron2016}
Piotr Skowron, Piotr Faliszewski, and Arkadii Slinko.
\newblock Axiomatic characterization of committee scoring rules.
\newblock \emph{arXiv preprint arXiv:1604.01529}, 2016.

\bibitem[Taylor(2002)]{Taylor2002}
Alan~D. Taylor.
\newblock The manipulability of voting systems.
\newblock \emph{The American Mathematical Monthly}, 109\penalty0 (4):\penalty0
  321--337, 2002.
\newblock ISSN 00029890, 19300972.
\newblock URL \url{http://www.jstor.org/stable/2695497}.

\end{thebibliography}

\end{document}